%% file: main.tex
\documentclass[10pt,a4paper,notitlepage]{article}
\usepackage[english]{babel}
\usepackage{graphicx}
\usepackage{subcaption}
\usepackage{psfrag}

\usepackage{enumitem}
\usepackage{framed}
\usepackage[normalem]{ulem}
\usepackage{amsmath}
\usepackage{amsthm}
\usepackage{amssymb}
\usepackage{amsfonts}
\usepackage{todonotes}
\usepackage{enumerate}
\usepackage[utf8]{inputenc}
\usepackage[dvipsnames]{xcolor}
\definecolor{RWTHBlau}{RGB}{0,84,159}
\definecolor{RWTHBordeaux}{RGB}{161,16,53}
\definecolor{RWTHGruen}{RGB}{87,171,39}
\definecolor{RWTHOrange}{RGB}{246,168,0}
\definecolor{RWTHTuerkis}{RGB}{0,152,161}
\usepackage{hyperref}
\hypersetup{
  colorlinks,
  citecolor=RWTHBlau,
  linkcolor=RWTHBlau,
  urlcolor=RWTHBlau}
\usepackage{geometry}
\usepackage[square, comma, sort&compress]{natbib}
\usepackage[parfill]{parskip}
\usepackage{algorithm}
\usepackage{algorithmic}
\usepackage{float}
\usepackage{overpic}
\usepackage{times}
\usepackage{tabularray}
\usepackage{tabularx}
\usepackage{multirow}
\usepackage{booktabs}
\usepackage{nicematrix}

\theoremstyle{definition}

\newtheorem{proposition}{Proposition}

\let\vec\mathbf
\newcommand{\tr}{\mathrm{tr}}
\newcommand{\dev}{\mathrm{dev}}
\newcommand{\sym}{\mathrm{Sym}}

\colorlet{author}{black}
\colorlet{author2}{black}

\usepackage{standalone}
\usepackage{tikz}
\usetikzlibrary{shapes.geometric, arrows.meta, positioning}
\usepackage{etoolbox}
\makeatletter
\patchcmd{\maketitle}{\vskip 4em}{\vskip 2em}{}{}
\makeatother

\tikzstyle{block} = [rectangle, draw, text centered, minimum height=2em]
\tikzstyle{arrow} = [thick, ->, >=Stealth]

\date{}

\begin{document}

\pgfkeys{/pgf/number format/.cd,1000 sep={}}
\title{\LARGE Inelastic Constitutive Kolmogorov-Arnold Networks: \\A generalized framework for automated discovery of interpretable inelastic material models}

\author{\large{$\text{Chenyi Ji}^{\mathrm{a}}$, $\text{Kian P. Abdolazizi}^{\mathrm{b}}$, $\text{Hagen Holthusen}^{\mathrm{c}}$, $\text{Christian J. Cyron}^{\mathrm{b,d}}$, Kevin Linka$^{\mathrm{a},*}$ }\\[0.2cm]
  \hspace*{-0.1cm}
  \small{\em ${{}^\mathrm{a}}$ Computational Mechanics in Medicine, Applied Medical Engineering, RWTH Aachen University}\\
  \small{\em Pauwelsstraße 20, 52074 Aachen, Germany}\\
  [0.1cm]
  \small{\em ${{}^\mathrm{b}}$ Institute for Continuum and Material Mechanics, Hamburg University of Technology}\\
  \small{\em Eißendorfer Straße 42, 21073 Hamburg, Germany}\\
  [0.1cm]
  \small{\em ${{}^\mathrm{c}}$ Institute of Applied Mechanics, University of Erlangen-Nuremberg}\\
  \small{\em Egerlandstrasse 5, 91058 Erlangen, Germany}\\
  [0.1cm]
  \small{\em ${{}^\mathrm{d}}$ Institute of Material Systems Modeling, Helmholtz-Zentrum Hereon}\\
  \small{\em Max-Planck-Straße 1, 21502 Geesthacht, Germany}\\
}

\vspace{-4cm}
\maketitle
\vspace*{1cm}


\vspace{-.5cm}
\small
\begin{center}
\begin{minipage}{0.9\textwidth}
{\bf Abstract.}
A key problem of solid mechanics is the identification of the constitutive law of a material, that is, the relation between \textcolor{author}{strain history} and stress. Machine learning has led to considerable advances in this field lately. 
Here we introduce inelastic Constitutive Kolmogorov–Arnold Networks (iCKANs). This novel artificial neural network architecture can discover in an automated manner symbolic constitutive laws describing both the elastic and inelastic behavior of materials. That is, it can translate data from material testing into corresponding free energy and dissipation potential functions in closed mathematical form.
We demonstrate the advantages of iCKANs using both synthetic data and experimental data of the viscoelastic polymer materials VHB 4910 and VHB 4905. The results demonstrate that iCKANs accurately capture complex viscoelastic behavior while preserving physical interpretability. It is a particular strength of iCKANs that they can process not only mechanical data but also arbitrary additional information available about a material (e.g., about temperature-dependent behavior). This makes iCKANs a powerful tool to discover in the future also how specific processing or service conditions affect the properties of materials. 

\vspace*{0.2cm}
{\bf Keywords:} {Kolmogorov-Arnold Networks, constitutive modeling, inelasticity, finite strains, model discovery, symbolic regression}

\end{minipage}
\end{center}

%

\vspace{1cm}

\normalsize

\input{AA_section_Introduction.tex}
\input{AA_section_Constitutive.tex}
\input{AA_section_ICKAN.tex}

\input{AA_section_Symbolic.tex}
\input{AA_section_Results.tex}
\input{AA_section_Conclusion.tex}



\addtocontents{toc}{\vspace{2em}} 

\appendix 
\numberwithin{equation}{section}
\counterwithin{figure}{section}
\counterwithin{table}{section}

\input{Appendix_A}
\input{Appendix_B}

\addtocontents{toc}{\vspace{2em}} 


\bibliography{bibliography}  
\end{document}

%% file: AA_section_Introduction.tex
\section{Introduction}

Accurately predicting the behavior of complex material systems is a core challenge in engineering and the physical sciences. Conventional computational mechanics relies on constitutive models to describe how materials respond to external stimuli. However, traditional constitutive models, grounded in continuum mechanics, depend on simplifying assumptions that often fail to capture the full complexity of real-world material behavior—particularly for inelastic materials. Moreover, developing these models is typically incremental, labor-intensive, and demands substantial domain expertise, with the process repeated for each novel material system.

To overcome these limitations, data-driven material modeling has emerged as a promising alternative, bypassing explicit constitutive model formulation and leveraging experimental data directly \citep{fuhg2024review}. The model-free or direct data-driven paradigm, pioneered by \citet{kirchdoerfer2016data}, eliminates predefined constitutive equations in favor of using raw material measurements. Another avenue is symbolic regression, which constructs closed-form mathematical models from data by systematically combining analytical expressions to achieve a balance between accuracy and interpretability \citep{versino2017data, bomarito2021development, kabliman2021application, abdusalamov2023automatic}. Expanding on this, EUCLID (Efficient Unsupervised Constitutive Law Identification and Discovery) applies sparse regression to a broad model library, enabling learning from full-field data \citep{flaschel2021unsupervised, flaschel2023automated, flaschel2023automated2}. Recent developments have further enhanced these approaches to better handle noisy experimental data \citep{narouie2026unsupervised}.

A distinct class of data-driven approaches involves constitutive neural networks, which have gained prominence due to their flexibility and capacity for universal function approximation \citep{hornik1989multilayer}. Unlike methods that impose physical constraints via the loss function \citep{raissi2019physics, masi2021thermodynamics}, these networks encode physical consistency directly through architectural design. Notable variants include Constitutive Artificial Neural Networks (CANNs) \citep{linka2021constitutive, linka2023new}, which favor interpretable, sparse architectures, and Physics-Augmented Neural Networks (PANNs) \citep{rosenkranz2024viscoelasticty, linden2023neural}, which leverage denser, more expressive structures. Additional neural network classes have further broadened the modeling landscape: neural ordinary differential equations (ODEs) have been used to identify polyconvex strain energy functions \citep{tac2022data}; graph neural networks \citep{maurizi2022predicting} and neural operators \citep{you2022physics} have been explored for learning complex constitutive and surrogate models. These methods have also found applications in the design and optimization of metamaterials \citep{fernandez2021anisotropic,fernandez2022material}. Current research pushes the boundaries by targeting increasingly complex phenomena, such as anisotropy and inelasticity \citep{holthusen2026complement}.

To capture inelastic effects, the Generalized Standard Materials framework offers a broad and physically grounded foundation \citep{halphen1975materiaux}. This approach introduces an inelastic dissipation potential alongside the elastic free energy, governing the evolution of internal variables and inelastic deformation in a thermodynamically consistent manner. 
\color{author}
An early step toward integrating this thermodynamic constitutive framework with neural networks was taken by \citet{huang2022variational}, who proposed a variational neural-network formulation for material modeling. Integrating this framework with constitutive neural networks has enabled the discovery of material laws for finite-strain viscoelasticity \citep{kalina2026physics,as2023mechanics,holthusen2025generalized,tacc2023data,wiesheier2026data}, plasticity \citep{flaschel2025convex,boes2024accounting,jadoon2025automated}, fracture \citep{dammass2025neural}, and biological processes like growth and remodeling \citep{holthusen2025automated}. In a related vein, viscoelastic Constitutive Artificial Neural Networks (vCANNs) learn anisotropic nonlinear finite-strain viscoelasticity directly from time-series data \citep{abdolazizi2024viscoelastic} and have subsequently been deployed automatically as finite element material subroutines \citep{abdolazizi2026thermodynamically}.
\color{black}
Constitutive models based on internal variables have also been successfully employed in neural ODEs and neural operators to model history- and path-dependent behaviors \citep{jones2022neural,guo2025history}. Furthermore, approaches such as long short-term memory networks have been introduced to enhance computational efficiency in plasticity modeling \citep{li2025long}.

Constitutive Kolmogorov–Arnold Networks (CKANs) \citep{abdolazizi2025constitutive,thakolkaran2025can} offer a distinctive alternative by employing B-Splines as nonlinear, trainable activation functions whose mathematical forms can be extracted for interpretation \citep{liu2024kan}. The introduction of Kolmogorov–Arnold Networks (KANs) has prompted comparative studies with traditional neural networks across a range of applications. KANs have shown improved accuracy and convergence in solving partial differential equations relative to multilayer perceptrons \citep{wang2025kolmogorov, abueidda2025deepokan, kiyani2025optimizing}, with their greatest strengths evident in symbolic regression tasks \citep{ji2024comprehensive}. In the context of material modeling, KANs have been incorporated into CKANs and Input-Convex KANs to enhance interpretability and facilitate the discovery of convex strain energy functions, all while retaining the adaptability of neural network-based frameworks \citep{abdolazizi2025constitutive, thakolkaran2025can}. These developments underscore the potential of KAN-based architectures to increase the transparency of data-driven material models without sacrificing predictive accuracy, though challenges in modeling inelastic behavior persist.

In this work, we advance KAN-based constitutive modeling by extending it to inelastic materials and introducing inelastic Constitutive Kolmogorov–Arnold Networks (iCKANs). This novel approach integrates the generalized inelastic constitutive framework with the inherent interpretability of KANs, aiming to deliver a robust, data-driven, and transparent model for complex inelastic behavior. iCKANs leverage experimental stress–strain data, optionally enriched with non-mechanical features, to uncover the underlying elastic and dissipation potentials of materials. Crucially, the trainable activations of the KANs are subsequently symbolized, yielding interpretable closed-form expressions for these potentials. By enabling the automated derivation of symbolic dissipation potentials, previously accessible primarily to a limited group of experts, iCKANs address a longstanding challenge in finite strain inelasticity. The overall workflow of the iCKAN framework is illustrated in \autoref{fig:flowchart}.

\textbf{Outline.} Section \ref{sec:constitutive_framework} provides a concise review of the generalized constitutive framework for inelastic materials at finite strains. Section \ref{sec:iCKAN} details the proposed iCKANs' methodology, beginning with the foundations of Kolmogorov–Arnold Networks and their partially input-convex variants, followed by their application to constitutive modeling. Section \ref{sec:symbolic} addresses the symbolic extraction process to obtain interpretable, closed-form representations of the discovered models. Section \ref{sec:results} evaluates the performance of iCKANs on both synthetic and experimental viscoelastic datasets. The paper concludes in Section \ref{sec:conclusion} with a discussion of key findings and future research directions. 

\begin{figure}[H]
    \input{standalone_graphical_abstract}
    \caption{Inelastic Constitutive Kolmogorov-Arnold Network (iCKAN) pipeline for automated interpretable model discovery of inelastic materials. Three-dimensional stress-strain data with optional additional features (e.g., temperature) collected in a feature vector $\vec{f}$ are used to train the iCKAN model, which consists of two KANs representing the free energy $\psi$ and the dissipation potential $\omega$. 
    The trained model can then be analyzed using symbolic regression to extract interpretable mathematical expressions.}
    \label{fig:flowchart}
\end{figure}


%% file: standalone_graphical_abstract.tex
\centering
    \includegraphics[width=\linewidth]{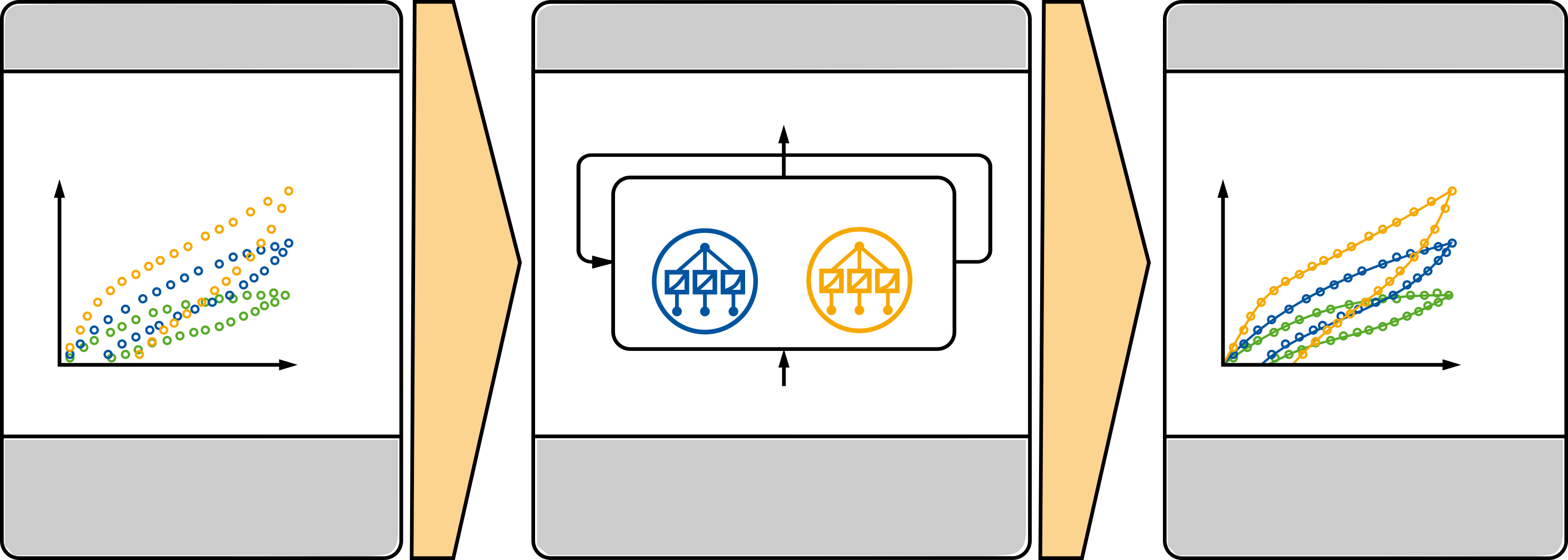}
    \put(-460,158){\textbf{Experimental data}}
    \put(-305,158){\textbf{Kolmogorov-Arnold Networks}}
    \put(-100,158){\textbf{Symbolic function}}
    \put(-472,15){\parbox{0.3\linewidth}{\footnotesize 
        $\bullet$ Time-dependent stress-\\
        \hspace*{.25cm}deformation data \\
        $\bullet$ Non-mechanical data $\vec{f}$
    }}
    \put(-387,112){\small$\vec{f} = \textcolor{RWTHOrange}{\vec{f}_1}$}
    \put(-387,95){\small$\vec{f} = \textcolor{RWTHBlau}{\vec{f}_2}$}
    \put(-387,78){\small$\vec{f} = \textcolor{RWTHGruen}{\vec{f}_3}$}
    \put(-442,47){\small Stretch $F_i$}
    \put(-478,67){\small \rotatebox{90}{Stress $P_i$}}
    \put(-350,30){\rotatebox{90}{Training with sparsification}}
    \put(-260,138){\small Stress $\vec{P}$}
    \put(-280,45){\small Deformation gradient $\vec{F}$}
    \put(-292,107){\small Free energy}
    \put(-247,107){\small Dissipation pot.}
    \put(-307,15){\parbox{0.3\linewidth}{\footnotesize 
        $\bullet$ Recurrent network architecture\\
        $\bullet$ Mathematical/physical constraints \\
        $\bullet$ (Partially) convex splines
    }}
    \put(-155,50){\rotatebox{90}{Symbolification}}
    \put(-100,135){\small Stress predictions}
    \put(-100,123){\small$P = a(F+b)^2$}
    \put(-30,112){\small$\vec{f} = \textcolor{RWTHOrange}{\vec{f}_1}$}
    \put(-30,95){\small$\vec{f} = \textcolor{RWTHBlau}{\vec{f}_2}$}
    \put(-30,78){\small$\vec{f} = \textcolor{RWTHGruen}{\vec{f}_3}$}
    \put(-85,47){\small Stretch $F_i$}
    \put(-120,67){\small \rotatebox{90}{Stress $P_i$}}
    \put(-114,11){\parbox{0.3\linewidth}{\footnotesize 
        $\bullet$ Interpretability of free energy\\ 
        \hspace*{.25cm}and dissipation potential\\
        $\bullet$ Efficient numerical simulations \\
    }}

%% file: AA_section_Constitutive.tex
\section{Constitutive modeling of inelastic materials}
\label{sec:constitutive_framework}

In this section, we briefly review the generalized constitutive framework for inelastic materials at finite strains by \citet{holthusen2023inelastic,holthusen2025generalized}, which can be seen as an equivalence to the framework of Generalized Standard Materials \citep{halphen1975materiaux}. The framework employs the multiplicative decomposition of the deformation gradient and is based on two fundamental scalar-valued quantities, the free energy $\psi$ and an dissipation potential $\omega$ \citep{holthusen2024theory}. 

\textbf{Multiplicative decomposition.} The general assumption of multiplicative decomposition is applied on the deformation gradient $\vec{F}$, resulting into an elastic part $\vec{F}_e$ and inelastic part $\vec{F}_i$, i.e., $\vec{F} = \vec{F}_e \vec{F}_i$. 
\color{author}
It is to note that neither $\vec{F}_e$ nor $\vec{F}_i$ can be derived from motion in general and should not be misunderstood as gradients. Conceptually, an intermediate configuration is introduced, which is reached by the inelastic part of the deformation gradient $\vec{F}_i$ from the reference configuration.
The elastic part of the deformation gradient $\vec{F}_e$ then maps the intermediate configuration to the current configuration. \color{black} 
However, the intermediate configuration is fictitious and not unique, since any rotation $\vec{Q}^\dag \in SO(3)$ applied to the intermediate configuration does not alter the physics of the resulting deformation gradient, i.e., $\vec{F} = \vec{F}_e {\vec{Q}^\dag}^\mathrm{T} {\vec{Q}^\dag} \vec{F}_i = \vec{F}_e^\dag\vec{F}_i^\dag$. Following \citet{holthusen2023inelastic,holthusen2024theory}, a unique co-rotated intermediate configuration is obtained by applying the rotation of the polar decomposition of $\vec{F}_i = \vec{R}_i \vec{U}_i$ to the intermediate configuration, i.e., $\bar{\vec{F}}_e = \vec{F}_e\vec{R}_i = \vec{F}\vec{U}_i^{-1}$, with $\vec{U}_i$ and $\vec{R}_i$ being the stretch part and rotation part of the inelastic part of the deformation gradient, respectively. 

\begin{figure}[H]
    \centering\small
    \includegraphics[width=0.564\textwidth]{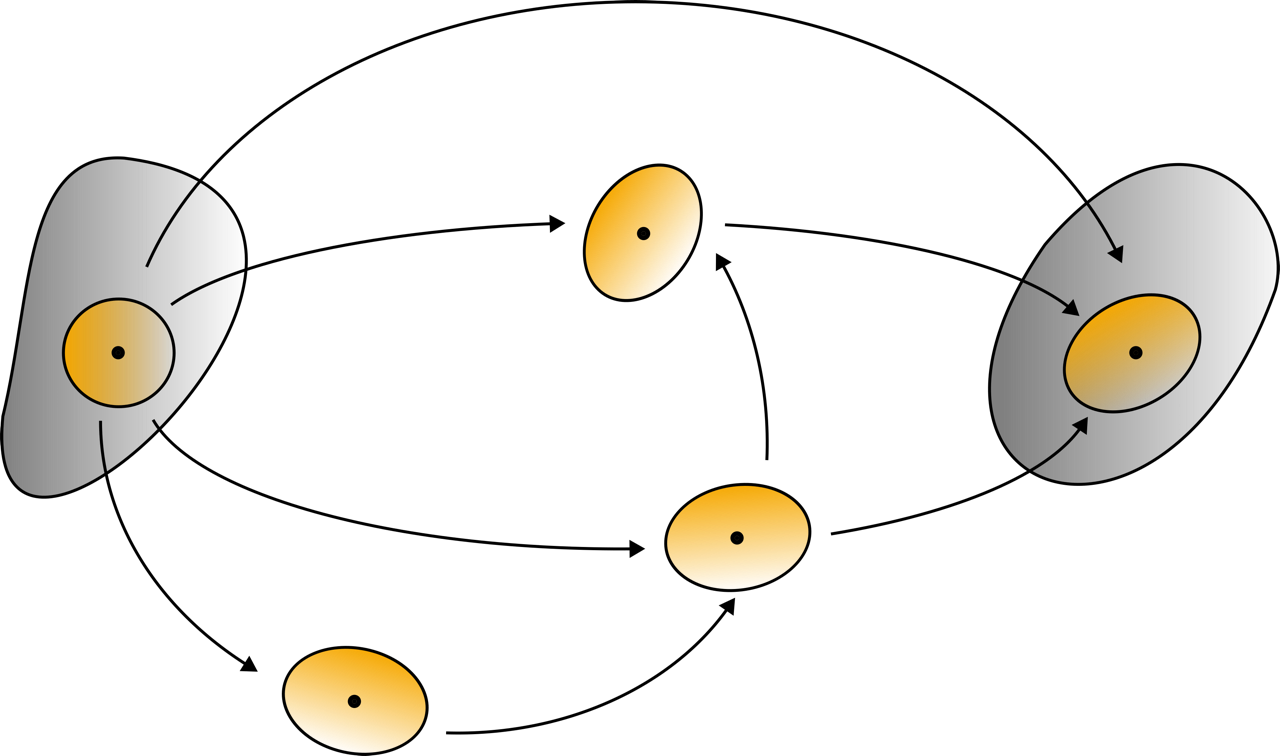}
    \put(-180,160){$\vec{F}$}
    \put(-195,115){$\vec{F}_i^\dag$}
    \put(-100,115){$\vec{F}_e^\dag$}
    \put(-195,55){$\vec{F}_i$}
    \put(-90,55){$\vec{F}_e$}
    \put(-130,80){$\vec{Q}^\dag$}
    \put(-245,20){${\vec{U}_i}$}
    \put(-135,5){${\vec{R}_i}$}
    \put(-330,90){reference}
    \put(-330,80){configuration}
    \put(-105,30){intermediate configuration}
    \put(-5,80){current configuration}
    \put(-180,143){arbitrarily rotated}
    \put(-180,133){intermediate configuration}
    \put(-250,-3){co-rotated}
    \put(-250,-13){intermediate configuration}
    \put(-258, 110){$\mathcal{B}_0$}
    \put(-25, 110){$\mathcal{B}_t$}
    \caption{Multiplicative split of the deformation gradient into elastic and inelastic part, including the non-uniqueness of the intermediate configuration and the co-rotated intermediate configuration \citep{holthusen2023inelastic}.}
    \label{fig:multiplicative_split}
\end{figure}

The right Cauchy-Green deformation tensors are defined as $\vec{C} = \vec{F}^\mathrm{T} \vec{F}$ and $\bar{\vec{C}}_e = \vec{U}_i^{-1} \vec{C} \vec{U}_i^{-1}$ in the current and co-rotated intermediate configuration, respectively. Due to the fact that $\bar{\vec{C}}_e = \vec{U}_i^{-1} \vec{C} \vec{U}_i^{-1}$ serves as an unique measurement for elastic stretches, the free energy can be defined depending on it, i.e., $\psi = \psi(\bar{\vec{C}}_e)$, while preserving the principles of objectivity and frame indifference. To ensure that the free energy is a scalar-valued isotropic function, we define it based on the principal invariants of $\bar{\vec{C}}_e$, i.e., $\psi = \psi(I_1^{\bar{\vec{C}}_e}, I_2^{\bar{\vec{C}}_e}, I_3^{\bar{\vec{C}}_e})$, with
\begin{equation}
    I_1^{\bar{\vec{C}}_e} = \tr(\bar{\vec{C}}_e), \quad I_2^{\bar{\vec{C}}_e} = \frac{1}{2} \left[ (\tr(\bar{\vec{C}}_e))^2 - \tr(\bar{\vec{C}}_e^2) \right], \quad I_3^{\bar{\vec{C}}_e} = \det(\bar{\vec{C}}_e)\,.
    \label{eq:invariants_Ce}
\end{equation}

\color{author}
\textbf{Dissipation inequality.} Following the second law of thermodynamics, the dissipation $\mathcal{D}$ has to be non-negative so that thermodynamic consistency is preserved. This can be expressed using the Clausius-Plank inequality $\mathcal{D} = \vec{S} : \frac{1}{2} \dot{\vec{C}} - \dot{\psi} \geq 0$, where $\vec{S}$ denotes the second Piola-Kirchhoff stress tensor. Here, we assume the functional dependicies of
$\psi(\vec{C}, \vec{U}_i)$, $\vec{S}(\vec{C}, \vec{U}_i)$ and \textcolor{author2}{$\dot{\vec{U}}_i = \vec{g}(\vec{C}, \vec{U}_i)$}, where $\vec{g}$ is a tensor-valued function to be specified later. By inserting these dependies and applying the chain rule, we obtain
\begin{equation}
  \begin{split}
    \mathcal{D} = \left(\vec{S} - 2 \,\vec{U}_i^{-1} \frac{\partial \psi}{\partial \bar{\vec{C}}_e} \vec{U}_i^{-1}\right): \frac{1}{2} \dot{\vec{C}} + 2\, \bar{\vec{C}}_e \frac{\partial \psi}{\partial \bar{\vec{C}}_e} : \dot{\vec{U}}_i \vec{U}_i^{-1} \geq 0 \\
    \textcolor{author2}{\forall \, ( \vec{C}, \dot{\vec{C}}, \vec{U}_i)\in \sym^+(3) \times \sym(3) \times \sym^+(3) \,,}
  \end{split}
\end{equation}
where $\sym(3)$ and $\sym^+(3)$ denote the spaces of symmetric and symmetric positive-definite tensors, respectively. This equation has to be fulfilled for any arbitrary value of $\dot{\vec{C}}$. Hence, the second Piola-Kirchhoff stress tensor $\vec{S}$ can be defined,
\begin{equation}
    \vec{S} = 2\, \vec{U}_i^{-1} \frac{\partial \psi}{\partial \bar{\vec{C}}_e} \vec{U}_i^{-1}\,.
\end{equation}
Additionally, we define the symmetric elastic Mandel-like stress tensor 
\begin{equation}
    \bar{\vec{\Sigma}} = 2\, \bar{\vec{C}}_e \frac{\partial \psi}{\partial \bar{\vec{C}}_e}\,.
\end{equation}
The first Piola-Kirchhoff stress follows from the standard relation in classical continuum mechanics, i.e., $\vec{P} = \vec{F} \vec{S}$. Moreover, by noting that $\bar{\vec{\Sigma}}$ is symmetric, the dissipation inequality can be reduced to
\begin{equation}
    \mathcal{D} = \bar{\vec{\Sigma}} : \dot{\vec{U}}_i \vec{U}_i^{-1} = \bar{\vec{\Sigma}} : \bar{\vec{D}}_i \geq 0 \quad \text{with} \quad \bar{\vec{D}}_i = \mathrm{sym}(\dot{\vec{U}}_i \vec{U}_i^{-1})\quad
   \textcolor{author2}{\forall \, ( \vec{C}, \vec{U}_i)\in \sym^+(3) \times \sym^+(3) \,.}
\end{equation}
\color{black}

\textbf{Evolution equation.}
At this stage, a suitable evolution equation for the inelastic rate tensor
$\bar{\vec{D}}_i$ remains to be specified. To this end, a dual dissipation
potential $\omega$ is introduced, which is assumed to be a scalar-valued,
isotropic function of the thermodynamically consistent driving force
$\bar{\vec{\Sigma}}$, i.e.,
$\omega = \omega(\bar{\vec{\Sigma}})$.
It has been shown that if $\omega$ is constructed to be convex, zero-valued
at $\bar{\vec{\Sigma}}=\vec{0}$, and non-negative, the reduced dissipation
inequality is satisfied automatically \citep{germain1983continuum}.
While these conditions are \emph{sufficient}, they are not \emph{necessary}
to guarantee non-negative dissipation.
Following \citep{holthusen2025generalized}, a more general formulation is
obtained by introducing an invariant-based dissipation potential of the form
\begin{equation}
  \omega(\bar{\vec{\Sigma}})
  =
  \omega\!\left(
    I_1^{\bar{\vec{\Sigma}}},
    \sqrt{J_2^{\bar{\vec{\Sigma}}}},
    \sqrt[3]{J_3^{\bar{\vec{\Sigma}}}}
  \right),
\end{equation}
which is required to be convex, zero-valued at its origin, and non-negative with respect
to the three invariants\footnote{It is worth noting that the third invariant is generally not convex with respect to $\bar{\vec{\Sigma}}$.}
\begin{equation}
\begin{alignedat}{3}
  I_1^{\bar{\vec{\Sigma}}} &= \tr(\bar{\vec{\Sigma}}),
  &\quad
  J_2^{\bar{\vec{\Sigma}}} &=
  \tfrac{1}{2}\tr\!\bigl(\dev(\bar{\vec{\Sigma}})^2\bigr),
  &\quad
  J_3^{\bar{\vec{\Sigma}}} &=
  \tfrac{1}{3}\tr\!\bigl(\dev(\bar{\vec{\Sigma}})^3\bigr),
  \\[0.5ex]
  \frac{\partial I_1^{\bar{\vec{\Sigma}}}}{\partial \bar{\vec{\Sigma}}}
  &= \vec{I},
  &
  \frac{\partial J_2^{\bar{\vec{\Sigma}}}}{\partial \bar{\vec{\Sigma}}}
  &= \dev(\bar{\vec{\Sigma}}),
  &
  \frac{\partial J_3^{\bar{\vec{\Sigma}}}}{\partial \bar{\vec{\Sigma}}}
  &= \dev\!\bigl(\dev(\bar{\vec{\Sigma}})^2\bigr).
\end{alignedat}
\label{eq:invariants_Sig}
\end{equation}
Here,
$\dev(\bar{\vec{\Sigma}}) = \bar{\vec{\Sigma}} - \tfrac{1}{3} I_1^{\bar{\vec{\Sigma}}} \vec{I}$
denotes the deviatoric part of a second-order tensor.
It is worth emphasizing that the square and cubic roots of the second and
third invariants, respectively, are essential to ensure non-negative
dissipation within this framework \citep{holthusen2025generalized}.
The evolution equation for the inelastic rate tensor then follows directly
from the dissipation potential as
\begin{equation}
  \bar{\vec{D}}_i
  =
  \frac{\partial \omega(\bar{\vec{\Sigma}})}{\partial \bar{\vec{\Sigma}}}
  =
  \frac{\partial \omega}{\partial I_1^{\bar{\vec{\Sigma}}}}\, \vec{I}
  +
  \frac{\partial \omega}{\partial J_2^{\bar{\vec{\Sigma}}}}\, \dev(\bar{\vec{\Sigma}})
  +
  \frac{\partial \omega}{\partial J_3^{\bar{\vec{\Sigma}}}}\,
  \dev\!\bigl(\dev(\bar{\vec{\Sigma}})^2\bigr).
  \label{eq:Dinelastic}
\end{equation}

\color{author}
Thus, the dissipation inequality reads
\begin{equation}
    \mathcal{D} = \bar{\vec{\Sigma}} :  \left(\frac{\partial \omega}{\partial I_1^{\bar{\vec{\Sigma}}}}\, \vec{I}
  +
  \frac{\partial \omega}{\partial J_2^{\bar{\vec{\Sigma}}}}\, \dev(\bar{\vec{\Sigma}})
  +
  \frac{\partial \omega}{\partial J_3^{\bar{\vec{\Sigma}}}}\,
  \dev\!\bigl(\dev(\bar{\vec{\Sigma}})^2\bigr)\right) \geq 0 \quad \forall \, \bar{\vec{\Sigma}} \in \sym(3) \,.
  \label{eq:dissipation}
\end{equation}
In this way, the dissipation inequality is satisfied for any convex, zero-valued at the origin, and non-negative dissipation potential $\omega$. A proof of this statement is reported in Appendix~\ref{app:limit_analysis}.

However, it is important to note that these conditions are sufficient but not necessary to ensure non-negative dissipation. In fact, there exist dissipation potentials that do not satisfy these conditions but still lead to non-negative dissipation. For instance, a non-convex dissipation potential can still yield non-negative dissipation if the resulting inelastic rate tensor $\bar{\vec{D}}_i$ is such that $\bar{\vec{\Sigma}} : \bar{\vec{D}}_i \geq 0$ for all admissible states of the system \citep{holthusen2026complement}. Therefore, while the conditions on $\omega$ provide a convenient way to guarantee thermodynamic consistency, they do not encompass all possible constitutive models that could be physically valid.
\color{black}


\textbf{Incompressible materials.} For the special case of incompressible materials, i.e., $\det(\vec{F}) = 1$, a Lagrange multiplier term is added to the free energy $\psi$ to obtain the augmented free energy $\hat{\psi}$, 
\begin{equation}
    \hat{\psi} = \psi(\bar{\vec{C}}_e) - p (I_3^{\vec{C}})\,.
\end{equation}
\textcolor{author}{The term $p$ is a Lagrange multiplier, which is determined using the boundary conditions to enforce incompressibility.} It is important to note that the elastic part of the deformation is, in general, not incompressible, meaning that $I_3^{\bar{\vec{C}}_e}$ not necessarily remain equal to one \citep{holthusen2024theory}.

%% file: AA_section_ICKAN.tex
\section{Inelastic Constitutive Kolmogorov–Arnold Networks (iCKANs)}
\label{sec:iCKAN}

Building on the constitutive framework presented in the previous section, this section extends the idea of CKANs \citep{abdolazizi2025constitutive} by introducing inelastic Constitutive Kolmogorov--Arnold Networks (iCKANs) for automatic model discovery of inelastic materials. We start with a brief review of Kolmogorov-Arnold Networks (KANs) \citep{liu2024kan} and their modified variant of monotonic input-convex KANs \citep{thakolkaran2025can}, and then we continue with presenting partially input-convex KANs, in which the output is generally convex with respect to specified inputs. Afterwards, we apply these  partially input-convex KANs to express the elastic and dissipation potential functions of the constitutive material model introduced in the last section, leading to the formulation of iCKANs. Thus, iCKANs naturally preserve mandatory constraints and flexibility due to the expression of activation functions via B-Splines.

\subsection{Input-convex Kolmogorov-Arnold Networks}

Serving as a promising alternative to multi-layer perceptrons, KANs replace the combination of a linear layer following with a nonlinear activation function with a naturally nonlinear, trainable activation. KANs are inspired by the Kolmogorov-Arnold representation theorem, which states that any continuous multivariate function on a bounded domain $f : [0,1]^n \to \mathbb{R}$ can be represented as a finite composition of univariate functions
\begin{equation}
f(\vec{x}_0) = f(x_{0,1},x_{0,2},\dots, x_{0,n}) = \sum_{j=1}^{2n+1} \phi_{1,1,j} \left(\sum_{i=1}^{n}\phi_{0,j,i}(x_{0,i}) \right)\,,
\end{equation}
where $x_{0,i}$ are the elements of the input vector $\vec{x}_0$ and $\phi_{0,j,i} : [0,1] \to \mathbb{R}$ and $\phi_{1,1,j} : \mathbb{R} \to \mathbb{R}$ are univariate continuous functions \citep{kolmogorov1961representation, abdolazizi2025constitutive}. This equation can be seen as a two-layer network with a topology of $[n,2n+1,1]$. Extending this idea, a deeper architecture with $L$ layers can be constructed, resulting in a KAN with the topolgy of $[n_0,n_1,\dots,n_L]$, where $n_i$ is the number of neurons in the $i$-th layer. Thus, KANs can be expressed as the following nested summation
\begin{equation}
f(\vec{x}) =  \sum_{i_{L-1}=1}^{n_{L-1}} \phi_{L-1,i_L,i_{L-1}} \left(\sum_{i_{L-2}=1}^{n_{L-2}} \dots \left( \sum_{i_2=1}^{n_2} \phi_{2,i_3,i_2} \left(\sum_{i_1=1}^{n_1} \phi_{1,i_2,i_1} \left(\sum_{i_0=1}^{n_0}\phi_{0,i_1,i_0}(x_{i_0}) \right)\right) \right) \dots\right)\,.
\label{eq:KAN}
\end{equation}
The activation function connecting the $i$-th neuron in the $l$-th layer $(l,i)$ and the $i$-th neuron in the $(l+1)$-th layer $(l+1,j)$ is denoted as $\phi_{l,j,i}$ for $l = 0, \dots, L-1$, $i = 1, \dots, n_l$ and $j = 1, \dots, n_{l+1}$. An activation function is defined as follows 
\begin{equation}
\phi_{l,j,i}(x) = w_b \cdot b(x) + w_s \cdot s(x)
\label{eq:activation_function}
\end{equation} 
with a basis function $b(x)$, usually $b(x) = x/ (1 + e^{-x})$, and a spline function $s(x) = \sum_i c_i B_i (x)$, where $c_i$ are trainable control points and $B_i$ are B-spline basis functions. The factors $w_b$ and $w_s$ are trainable coefficients. The output of neuron $(l+1,j)$, $x_{l+1,j}$, is defined as the sum of the inputs $x_{l,i}$ after transformation, i.e., $x_{l+1,j} = \sum_{i=1}^{n_l}\phi_{l,j,i}(x_{l,i})$, where each input $x_{l,i}$ is processed by its associated univariate activation function $\phi_{l,j,i}$. For the simplicity of notation, the layer-wise mapping is defined using the nonlinear function matrix $\vec{\Phi}_l$ as follows
\begin{equation}
    \vec{x}_{l+1} = \vec{\Phi}_l(\vec{x}_l) = \begin{bmatrix}
        \phi_{l,1,1(\bullet)} & \phi_{l,1,2(\bullet)} & \dots & \phi_{l,1,n_l(\bullet)} \\
        \phi_{l,2,1(\bullet)} & \phi_{l,2,2(\bullet)} & \dots & \phi_{l,2,n_l(\bullet)} \\
        \vdots & \vdots & \ddots & \vdots \\
        \phi_{l,n_{l+1},1(\bullet)} & \phi_{l,n_{l+1},2(\bullet)} & \dots & \phi_{l,n_{l+1},n_l(\bullet)}   
    \end{bmatrix}\vec{x}_l\,.
\end{equation}
Thus, \autoref{eq:KAN} can be written compactly as 
\begin{equation}
    \mathrm{KAN}(\vec{x}) = (\Phi_{L-1} \circ \Phi_{L-2} \circ \dots\circ \Phi_1 \circ \Phi_0 )(\vec{x})\,.
\end{equation}

\subsubsection{Fully input-convex architecture}
\label{sec:MIKAN}

A function that is twice differentiable is convex if and only if its second derivative is nonnegative over its entire domain. 
For instance, consider a two-layer KAN with the input $\vec{x}_0 = \{x_{0,1},\dots,x_{0,n_0}\}$, we can express it as the following function
\begin{equation}
f(\vec{x}_0) = \sum_{j=1}^{n_i} \phi_{1,1,j} \left(\sum_{i=1}^{n_0}\phi_{0,j,i}(x_{0,i}) \right)\,.
\end{equation}
The first and second derivatives of $f$ with respect to $x_{0,k}$ are given by
\begin{gather}
\frac{\partial f}{\partial x_{0,k}} = \sum_{j=1}^{n_1}\phi_{1,1,j}' \left( \sum_{i=1}^{n_0} \phi_{0,j,i}(x_{0,k}) \right) \cdot \phi_{0,j,k}'(x_{0,k})\,,\\
\frac{\partial^2 f}{\partial x_{0,k}^2} = \sum_{j=1}^{n_1}\phi_{1,1,j}'' \left( \sum_{i=1}^{n_0} \phi_{0,j,i}(x_{0,i}) \right) \cdot \phi_{0,j,k}'(x_{0,k})^2 + \sum_{j=1}^{n_1}\phi_{1,1,j}' \left( \sum_{i=1}^{n_0} \phi_{0,j,i}(x_{0,i}) \right) \cdot \phi_{0,j,k}''(x_{0,k})\,,
\end{gather}
respectively. Consequently, a sufficient condition for $f$ to be convex with respect to the inputs $\vec{x}_0$ is that the following criteria hold
\begin{equation}
    \begin{split}
    \phi_{0,j,i}'' &\geq 0 \quad \forall i = 1, \dots, n_0\,;\,\forall j = 1, \dots, n_1\,; \\
    \phi_{1,1,j}'' &\geq 0 \quad \forall j = 1, \dots, n_1\,;\\
    \phi_{1,1,j}' &\geq 0 \quad \forall j = 1, \dots, n_1\,.
    \end{split}
\end{equation}
That is, all first-layer activation functions $\phi_{0,j,i}$ must be convex, while all second-layer activation functions $\phi_{1,1,j}$ must be both convex and non-decreasing. 

\textbf{Convex and non-decreasing activations.} Constraining all activations to be convex and monotonic non-decreasing results into a monotonic input-convex KAN. To satisfy the aforementioned criteria and achieve monotonically increasing and convex activations, \citet{thakolkaran2025can} proposes to constrain the control points of the B-splines activation functions. First, zero base functions are chosen, i.e. $b(x)=0$. Then, $w_s$ is constrained to be positive, monotonically increasing, and convex by $w_s^* = \mathrm{softplus}(w_s)$. Finally, the control points $c_i$ are modified to become convex coefficients $c_i^*$. Thus, $s^*(x)=\sum_i c_i^* B_i (x)$ is monotonically increasing and convex, i.e.,
\begin{equation}
    c_{i+2}^* - c_{i+1}^*  \geq c_{i+1}^* - c_i^* \geq 0, \,\forall i\,.
\end{equation}
Finally, the original activation in \autoref{eq:activation_function} is modified to 
\begin{equation}
    \phi(x) = w_s^* \cdot \sum_i c_i^* B_i (x)\,.
\end{equation}
To ensure that the activations outside the initial grid range also remain convex and monotonically increasing, linear extrapolation is applied at both ends of the grid range \citep{polo2024monokan, thakolkaran2025can}.

\textbf{Generally convex activations.} The previously introduced approach provides a convex and non-decreasing KAN-output with respect to its inputs. However, in some cases, specifically for the dissipation potential, the output is required to be convex with a desired stationary point. Therefore, we apply an additional activation function to the KAN-output to ensure the convexity while allowing for non-monotonic behavior. It is important to note that the network architecture remains unchanged from the monotonic input-convex KANs and only the output of the KAN is passed through this additional activation function. 

Given a convex and non-decreasing multivariate function $f(\vec{x})$, one can construct a new function $\tilde{f}(\vec{x})$ that is convex, zero-valued at its origin and has a zero derivative at $\vec{x}=\vec{\alpha}$ by passing it through the following defined operation $\mathcal{H}(\bullet)$,
\begin{equation}
\tilde{f}(\vec{x}) = \mathcal{H}(f(\vec{x})) = f(\vec{x}) - f(\vec{\alpha}) - \left.\ \vec{\nabla} f(\vec{x})\right|_{\vec{x} = \vec{\alpha}}  \cdot (\vec{x}-\vec{\alpha})\,,
\label{eq:omega_conv}
\end{equation}
For a more general case, we can apply a further manipulation step,
\begin{equation}
\hat{f}(\vec{x}) = \mathrm{max}(\tilde{f}(\vec{x}) - c, 0) \qquad \text{with}\qquad c > 0\,. \label{eq:omega_c}
\end{equation}
\textcolor{author}{Notably, the resulting function $\hat{f}(\mathbf{x})$ is not continuously differentiable. Consequently, its derivative should be interpreted in the sense of a subgradient \citep{germain1983continuum,holthusen2026complement}.\footnote{When applied to the dissipation potential, the thermodynamic constraints are preserved. In particular, all admissible subgradients at the non-differentiable points satisfy the thermodynamic consistency requirements, thereby ensuring non-negative dissipation.}} Applying the resulting functions $\tilde{f}(\vec{x})$ and $\hat{f}(\vec{x})$ to the KAN-output preserve convexity. A demonstration of this for an univariate simplification is shown in \autoref{fig:postprocessing}. The proof of this theorem can be found in Appendix \ref{app:convexification}. 

\subsubsection{Partially input-convex architecture}
\label{sec:PIKAN}

In general, the outputs are required to be convex with respect to selected input arguments, but not necessarily with respect to all inputs. In particular, the free energy and the dissipation potential must be convex in the invariant-based arguments, while convexity with respect to additional non-mechanical features is not required. For this reason, we developed a partially input-convex KAN, as shown in \autoref{fig:PICKAN} \citep{amos2017input, deschatre2025input}. The output $z= f(\vec{x},\vec{y})$ is convex with respect to the input variables $\vec{x} = [x_1, x_2, x_3]$, but not necessarily convex with respect to the input variable $\vec{y} = [y_1]$. This is achieved by modifying the first layer of a generally input-convex KAN. While the activations corresponding to the convex input variables $\vec{x}$ remain convex and non-decreasing, the activations with respect to the input variable $\vec{y}$ are unconstrained. The outputs of the first layer are then additively combined and passed through the second layer. It is to note, that this implementation is a simplified variant of partially input-convex networks \citep{amos2017input}, as a more generalized implementation is beyond the scope of this work.

\begin{figure}[ht]
\centering
\begin{subfigure}{.49\textwidth}
    \includegraphics{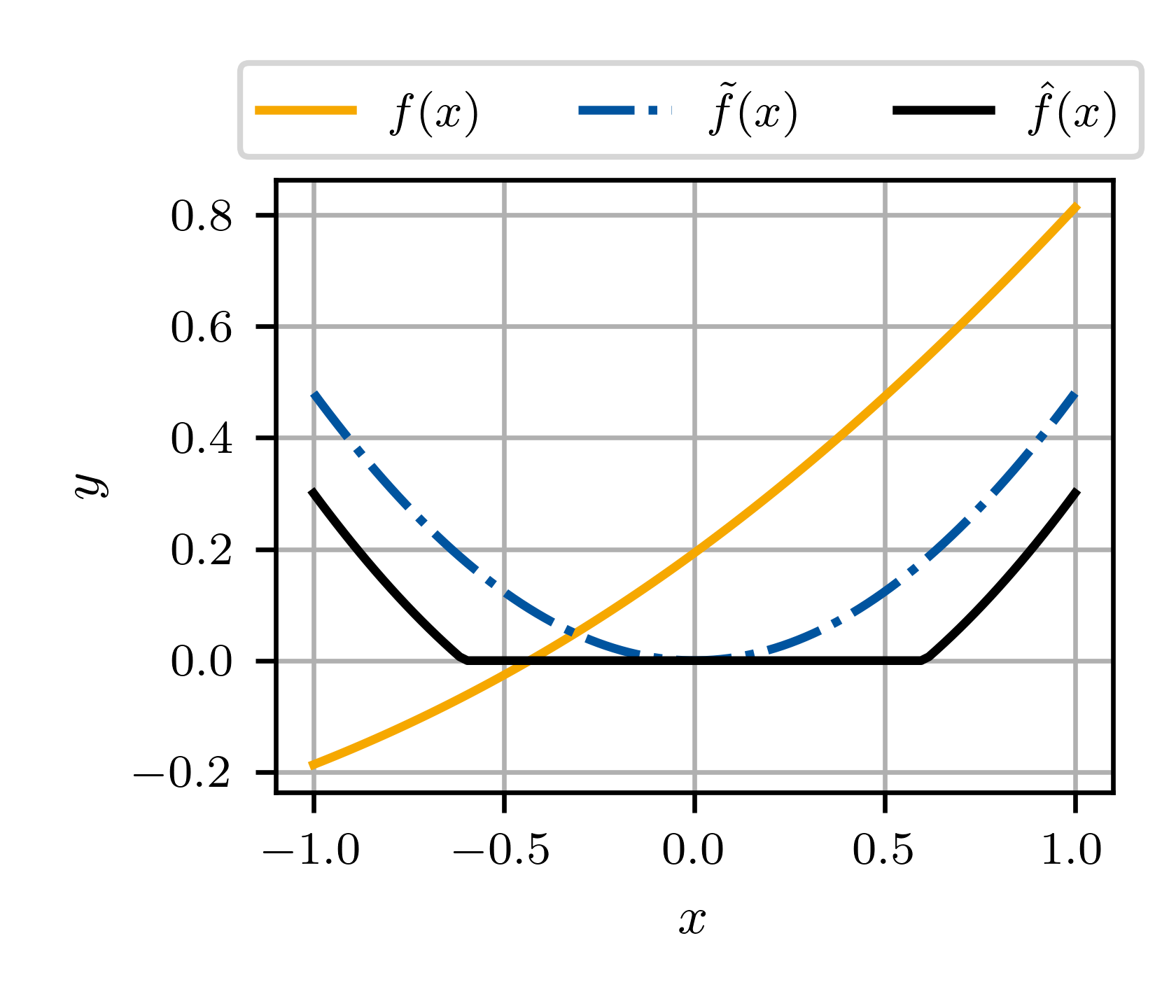}
    \vspace{-1em}
\caption{Postprocessing of a convex, non-decreasing function}
\label{fig:postprocessing}
\end{subfigure}
\begin{subfigure}{.49\textwidth}
    \centering
    \vspace{1em}
    \includegraphics[width=0.72\textwidth]{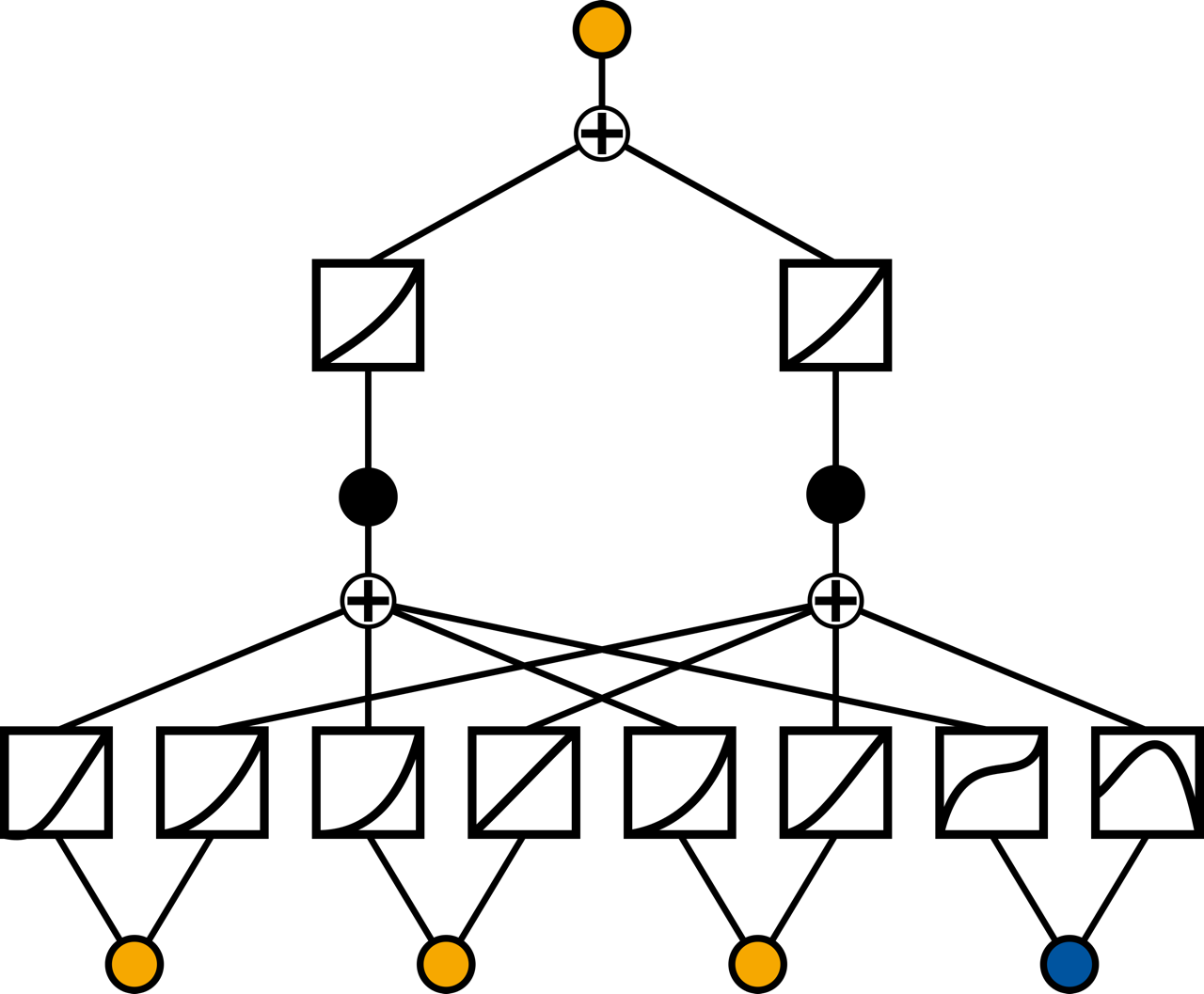}
    \put(-155,-10){$x_1$}
    \put(-110,-10){$x_2$}
    \put(-65,-10){$x_3$}
    \put(-22,-10){$y_1$}
    \put(-87,145){$z$}
    \vspace{0.5em}
    \caption{A partially input-convex Kolmogorov-Arnold Network}
    \label{fig:PICKAN}
\end{subfigure}
\caption{(a) Demonstration of the postprocessing of a convex and non-decreasing function $f(x)$ to achieve a general convex function with the designed stationary point at $x=0$. $\tilde{f}(x)$ is achieved by $\mathcal{H}$-operation in \autoref{eq:omega_conv} and $\hat{f}(x)$ is achieved by \autoref{eq:omega_c}. (b) A partially input-convex Kolmogorov-Arnold Network, where the output $z$ is convex with respect to the yellow-marked input variables $(x_1, x_2, x_3)$ and unconstrained to the blue-marked input variables $y_1$.}
\end{figure}

\subsection{Constitutive modeling with KANs}

Recalling the constitutive framework in Section \ref{sec:constitutive_framework}, the free energy $\psi$ and dissipation potential $\omega$ must be defined to formulate an inelastic material. In this work, we propose using partially input-convex KANs to model these two scalar-valued functions. Thus, we present the inelastic Constitutive Kolmogorov-Arnold Networks (iCKANs) framework, which can be considered an extension to the Constitutive Kolmogorov-Arnold Networks (CKANs) for hyperelastic materials introduced in \citet{abdolazizi2025constitutive}.

\subsubsection{Free energy}

CKANs have been formulated using different choices of functional bases for the free energy function, e.g., pricipal invariants and principal stretches \citep{abdolazizi2025constitutive}. In this work, we adopt the variant, in which the principal invariants are selected as the functional basis for constructing the KAN representing the free energy. In the original CKAN implementation, an input-monotonic KAN is employed \citep{polo2024monokan}. Here, we use the previously introduced input-convex KAN to enforce convexity of the learned free energy representation.

\color{author}
We follow the common design principle of constructing the free energy as a polyconvex function with respect to the deformation gradient $\vec{F}$ \citep{ball1976convexity}.\footnote{To be more precise, polyconvexity is employed here as a constitutive design principle, i.e., in the sense of convexity with respect to the deformation gradient, its cofactors, and its determinant. We do not invoke polyconvexity to establish the existence of minimizers of the underlying inelastic problem, which would additionally require coercivity assumptions and a more detailed mathematical analysis.} It is evident that the free energy should be dependent on the principal invariants of the elastic right Cauchy-Green deformation tensor in the co-rotated intermediate configuration, $\bar{\vec{C}}_e = \vec{U}_i^{-1} \vec{C} \vec{U}_i^{-1}$. For nearly incompressible materials, it is common practice to decouple the volumetric and isochoric contributions of the elastic part of the deformation gradient \citep{holthusen2024theory,flory1961thermodynamic}. Thus, modified invariants $(\hat{I}_1^{\bar{\vec{C}}_e}, \hat{I}_2^{\bar{\vec{C}}_e}, \hat{I}_3^{\bar{\vec{C}}_e})$ are defined based on the principal invariants (\autoref{eq:invariants_Ce}) as \citep{hartmann2003polyconvexity,holthusen2024PAMM}
\begin{gather}
    \hat{I}_1^{\bar{\vec{C}}_e} = \tilde{I}_1^{\bar{\vec{C}}_e} -3 , \quad \hat{I}_2^{\bar{\vec{C}}_e} = (\tilde{I}_2^{\bar{\vec{C}}_e})^{3/2}-3^{3/2}, \quad \hat{I}_3^{\bar{\vec{C}}_e} = (J^{\bar{\vec{C}}_e}-1)^2\,,\\
    \text{with} \quad  J^{\bar{\vec{C}}_e} = \sqrt{\det(\bar{\vec{C}}_e)}\,, \quad \tilde{I}_1^{\bar{\vec{C}}_e} = \left(J^{\bar{\vec{C}}_e}\right)^{-2/3} I_1^{\bar{\vec{C}}_e}, \quad \tilde{I}_2^{\bar{\vec{C}}_e} = \left(J^{\bar{\vec{C}}_e}\right)^{-4/3} I_2^{\bar{\vec{C}}_e}\,.
\end{gather} 
For fixed $\vec{U}_i$, these modified invariants are polyconvex functions of the deformation gradient and can be expressed as convex functions of $\vec{F}$, its cofactors $\mathrm{cof}(\vec{F})$, and its determinant $\det(\vec{F})$,  \citep{holthusen2026complement,holthusen2024PAMM}. A discussion of the polyconvexity properties for the cases where $\vec{U}_i$ changes in $\vec{F}$ is beyond the scope of this work. 

Accordingly, the free energy is constructed using these modified invariants of $\bar{\vec{C}}_e$ as input argument for the input-convex KAN
\begin{equation}
    \psi = \psi(\bar{\vec{C}}_e) = \psi(\hat{I}_1^{\bar{\vec{C}}_e}, \hat{I}_2^{\bar{\vec{C}}_e}, \hat{I}_3^{\bar{\vec{C}}_e}) \,.
\end{equation}
It is should be noted that the polyconvexity with respect to the arguments is not a necessary condition for the free energy to fulfill the physics. Furthermore, a proof that the output free energy $\psi$ and its derivative are zero at undeformed state can be found in Appendix~\ref{app:psi_undeformed}.
\color{black}

\subsubsection{Dissipation potential}

What remains to be defined is the dissipation potential $\omega$, which governs the evolution equation of the inelastic strain. \textcolor{author}{To satisfy thermodynamic consistency, it is sufficient to construct the dissipation potential $\omega$ as zero-valued at its origin, nonnegative and convex with respect to its arguments, which are the modified stress invariants of the elastic Mandel stress $\bar{\vec{\Sigma}}$ (\autoref{eq:invariants_Sig}),
\begin{equation}
    \hat{I}_1^{\bar{\vec{\Sigma}}} = I_1^{\bar{\vec{\Sigma}}}, \quad \hat{J}_2^{\bar{\vec{\Sigma}}} = \sqrt{J_{2}^{\bar{\vec{\Sigma}}}}, \quad \hat{J}_3^{\bar{\vec{\Sigma}}} = \sqrt[3]{J_{3}^{\bar{\vec{\Sigma}}}}\,.
\end{equation}
Thus, we adopt this as design principle and define the dissipation potential dependent on the modified stress invariants as
\begin{equation} 
\omega = \omega(\bar{\vec{\Sigma}}) = \omega\left(\hat{I}_{1}^{\bar{\vec{\Sigma}}}, \hat{J}_{2}^{\bar{\vec{\Sigma}}}, \hat{J}_{3}^{\bar{\vec{\Sigma}}}\right)\,.
\label{eq:omega_3_arguments}
\end{equation}}
Unlike the previously mentioned condition for the free energy, the dissipation potential must be convex with respect to the modified stress invariants for the dissipation inequality to be fulfilled. Noteworthy, we could enhance the existing arguments as long as the convexity condition is satisfied. For instance, to increase network flexibility and expressibility, the negative stress invariants, or additional invariants, e.g., the principal invariants \citep{holthusen2025generalized}, can be appended to the argument of the dissipation potential network. These alternative constructions are presented in Appendix \ref{app:Potential_arguments}. 

Following the approach of the previously presented general input-convex KAN, a convex, zero-valued and nonnegative dissipation potential $\omega$ is constructed by applying $\mathcal{H}$-operator (\autoref{eq:omega_conv}) as a postprocessing activation on the output of the monotonic input-convex KAN, $\omega^{\text{KAN}}$, i.e.,
\begin{equation}
\omega = \mathcal{H}(\omega^\text{KAN}) = \omega^{\text{KAN}} -  \left.\omega^{\text{KAN}}\right|_{\bar{\vec{\Sigma}}=\vec{0}} - \left.\begin{bmatrix}\partial \omega/\partial \hat{I}_1^{\bar{\vec{\Sigma}}}\\\partial \omega/\partial \hat{J}_2^{\bar{\vec{\Sigma}}}\\\partial \omega/\partial \hat{J}_3^{\bar{\vec{\Sigma}}}\end{bmatrix}\right|_{\bar{\vec{\Sigma}} = 0} \cdot \begin{bmatrix}\hat{I}_1^{\bar{\vec{\Sigma}}} \\ \hat{J}_2^{\bar{\vec{\Sigma}}} \\ \hat{J}_3^{\bar{\vec{\Sigma}}}\end{bmatrix}\,.
\label{eq:convex_omega}
\end{equation}

As previously stated, a further postprocessing step shown in \autoref{eq:omega_c} can be applied to $\omega$. Due to the special construction of the dissipation potential, which allows for a zero interval of the derivative of dissipation potential, meaning that no inelastic response is activated until a certain stress level is reached, and one single iCKAN model is sufficient to represent a viscoelastic material under relaxation. This can be represented as the rheological model in \autoref{fig:RM_inelastic}. The free energy $\psi$ is associated with the spring, and the dissipation potential $\omega$ with the dashpot. However, at this end, the training of this parameter $c$ \textcolor{author}{of \autoref{eq:omega_c}} is unstable, leading us to choose to leave out this particular step and combine two iCKANs in parallel to express a viscoelastic material. 

\begin{figure}[ht]
    \centering\small
    \vspace{0.3em}
    \includegraphics[width=.3\textwidth]{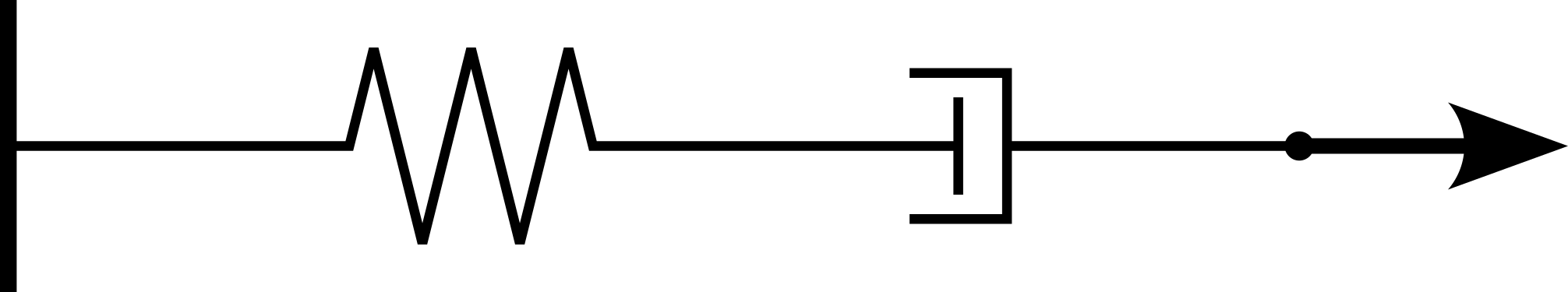}
    \put(-110,27){$\psi(\bar{\vec{C}}_e)$}
    \put(-65,27){$\omega(\bar{\vec{\Sigma}})$}
    \vspace{0.3em}
    \caption{Representation of the iCKAN formulation in the form of a Maxwell model. The free energy $\psi$ is associated with the spring, and the dissipation potential $\omega$ with the dashpot.}
    \label{fig:RM_inelastic}
\end{figure}

\subsubsection{Feature dependency}

The appropriate functional form of the free energy and dissipation potential generally depend on material descriptors such as composition, microstructure, or processing conditions. We collect this information in a feature vector $\vec{f}$ and augment the arguments of both the elastic and dissipation potentials as
\begin{equation}
    \psi = \psi(\hat{I}_1^{\bar{\vec{C}}_e}, \hat{I}_2^{\bar{\vec{C}}_e}, \hat{I}_3^{\bar{\vec{C}}_e},\vec{f}) \quad \text{and} \quad \omega = \omega\left(\hat{I}_1^{\bar{\vec{\Sigma}}}, \hat{J}_2^{\bar{\vec{\Sigma}}}, \hat{J}_3^{\bar{\vec{\Sigma}}},\vec{f}\right)\,.
\end{equation}
This feature is solely non-mechanical and neither the free energy nor the dissipation potential is convex with respect to it. 

\subsubsection{Numerical implementation}

\color{author}
\textbf{Square root and cubic root.} To enable the differentiability at the origin, the square root and cubic root of the stress invariants are modified as \citep{holthusen2025generalized}
\begin{equation}
\sqrt{x} = \frac{x}{(x+\epsilon_1)^{1/2}}, \quad \sqrt[3]{x} = \frac{x}{(x+\epsilon_2)^{2/3}}, \qquad \text{with } \epsilon_1, \epsilon_2 > 0\,.
\end{equation}
This is specifically applied for the calculation of $\hat{J}_2^{\bar{\vec{\Sigma}}} = \sqrt{J_2^{\bar{\vec{\Sigma}}}}$ and $\hat{J}_3^{\bar{\vec{\Sigma}}} = \sqrt[3]{J_3^{\bar{\vec{\Sigma}}}}$, respectively. It is to note, while under this modification, the dissipation inequality in \autoref{eq:dissipation} remains satisfied. As a matter of fact, this modification is not only necessary to ensure numerical stability and convergence during training, but also consistent with the physical interpretation of the dissipation potential. In this modified case, the derivative of the dissipation potential with respect to the invariants is strictly defined at the origin. A proof of the theorem is shown in Appendix~\ref{app:modified_roots}.
\color{black}

The evolution equation defined by the inelastic rate tensor, $\vec{D}_i = \partial \omega / \partial \vec{\Sigma}$ (see also \eqref{eq:Dinelastic}), is solved numerically. The exponential integrator map has been shown to be particularly effective for inelastic problems and in the finite strain regime \citep{holthusen2025generalized,holthusen2025automated}. Numerically, this evolution equation can be integrated using either explicit or implicit time integration schemes. In the explicit formulation, the current inelastic stretch tensor $\vec{U}_{i,t}$ depends solely on the previous time step and the current increment. Conversely, the implicit formulation introduces a nonlinear dependence on both current and previous states, necessitating the solution of a nonlinear system at each time step. Although the implicit scheme is generally less sensitive to time step size, it is computationally more demanding than the explicit approach. \textcolor{author}{To accelerate the solution of the nonlinear system in the implicit scheme, a Liquid Time Constant Network \citep{hasani2021liquid} can be employed as an alternative to the traditional iterative Newton solver \citep{holthusen2026complement,as2023mechanics, rosenkranz2024viscoelasticty}.} Section~\ref{sec:results} presents a side-by-side comparison of both formulations using synthetic data.
Given its superior robustness and performance across all numerical examples discussed in the following section, this work primarily focuses on explicit time integration. Detailed information on the implicit time integration scheme is provided in Appendix~\ref{app:implicit_iCKAN}.

\textbf{Explicit time integration.} Following the explicit integration scheme, the inelastic right Cauchy-Green tensor at the current timestep, $\vec{C}_{i,t}$, is approximated using the inelastic strain rate from the previous timestep, $\vec{D}_{i,t-1}$, as
\begin{equation}
    \vec{C}_{i,t} = \vec{U}_{i,t-1} \exp(2 \,\Delta t\, \vec{D}_{i,t-1}) \, \vec{U}_{i,t-1}, \quad \vec{U}_{i,t} = \vec{C}_{i,t}^{1/2}\,,
    \label{eq:explicit}
\end{equation}
where $\Delta t$ is the time increment between timestep $t-1$ and $t$ \citep{holthusen2025generalized}. The overall architecture of the explicit iCKAN is illustrated in \autoref{fig:iCKAN_explicit}. \textcolor{author}{Noteworthy, while a Tyler expansion might be sufficient to generate the matrix square-root under moderate deformations, a closed-form representation by \citet{hudobivnik2016closed} are utilized in this work for nonlinear finite strains, specifically for computing the matrix square roots, $\vec{U}_{i,t} = \vec{C}_{i,t}^{1/2}$ \citep{holthusen2025generalized}.} The explicit scheme proves to be computationally efficient, as no iterative solver is required. Stability is controlled through the selection of the time increment $\Delta t$, offering a straightforward and transparent mechanism for maintaining numerical robustness. \textcolor{author}{In the presented work, the time increment $\Delta t$ is chosen sufficiently small to avoid stability issues associated with the explicit integration scheme. Since $\Delta t$ is included as an input to the iCKAN, the network can account for the employed temporal resolution.}

\begin{figure}[ht]
    \centering
    \includestandalone{standalone_iCKAN_explicit}
    \caption{Explicit iCKAN architecture at timestep $t$. The inputs at each step are the current deformation gradient and time increment $(\vec{F}_t,\Delta t)$ together with the state variables $(\vec{C}_{t-1},\vec{U}_{i,t-1})$ from the previous step. The KAN models for the free energy $\psi$ and the dissipation potential $\omega$ are evaluated to update the state variables according to \autoref{eq:explicit}. The updated state is then propagated to the next time step. In addition, the output stress $\vec{P}$ is computed by evaluating the free energy $\psi$ with the updated state variables.}
    \label{fig:iCKAN_explicit}
\end{figure}

\textbf{Initial grid range.} Since the KAN activation functions are represented by B-splines defined on fixed domains, suitable input grid range must be specified for each input dimension. In the present recurrent formulation, these effective inputs are not known a priori, as they are generated during the forward pass and evolve throughout training. We therefore adopt a data-driven initialization strategy to estimate appropriate grid ranges from physically meaningful tensor invariants computed on the training data. The detailed construction and its justification are provided in Appendix~\ref{app:init_grid_range}.

%% file: standalone_iCKAN_explicit.tex
\begin{tikzpicture}[
    font=\rmfamily \small,
    >=LaTeX,
    cell/.style={
        rectangle, 
        rounded corners=5mm, 
        draw,
        very thick,
        },
    operator/.style={
        circle,
        draw,
        inner sep=-0.5pt,
        minimum height =.2cm,
        },
    function/.style={
        ellipse,
        draw,
        inner sep=1pt
        },
    ct/.style={
        circle,
        draw,
        line width = .75pt,
        minimum width=1cm,
        inner sep=1pt,
        },
    ct2/.style={
        circle,
        draw,
        line width = 1.5pt,
        minimum width=1.2cm,
        inner sep=1pt,
        },
    gt/.style={
        rectangle,
        draw,
        minimum width=4mm,
        minimum height=3mm,
        inner sep=1pt
        },
    mylabel/.style={
        font=\small\rmfamily
        },
    ArrowC1/.style={
        rounded corners=.2cm,
        thick,
        },
    ArrowC2/.style={
        rounded corners=.5cm,
        thick,
        },
    ArrowC3/.style={
        rounded corners=.3cm,
        thick,
        },
    ]

    \node [cell, minimum height = 4cm, minimum width=9cm, fill=RWTHBlau!5] at (0,0){} ;

    \node [ct, fill=RWTHBlau!25] (ibox1) at (-3,-1.20) {$\bar{\vec{C}}_{e,t-1}$};
    \node [inner sep=0pt, outer sep=0pt, draw=none] (ibox2) at (-1.5,-1.20) {\includegraphics[width=1.2cm]{figures_KAN.png}};
    \node [ct, fill=RWTHBlau!25] (ibox3) at (0.25,-1.20) {$\bar{\vec{\Sigma}}$};
    \node [inner sep=0pt, outer sep=0pt, draw=none] (ibox4) at (1.75,-1.20) {\includegraphics[width=1.2cm]{figures_KAN2.png}};
    \node [ct, fill=RWTHBlau!25] (ibox5) at (3.5,-1.20) {$\bar{\vec{D}}_i$};

    \node [ct, fill=RWTHBlau!25] (LTC) at (-0.5,0.75) {$\bar{\vec{U}}_i$};
    \node [ct, fill=RWTHBlau!25] (add1) at (1.25,0.75) {$\bar{\vec{C}}_e$};
    \node [inner sep=0pt, outer sep=0pt, draw=none] (psi2) at (2.75,0.75) {\includegraphics[width=1.2cm]{figures_KAN.png}};

    \node[ct, label={[mylabel]State}, fill=RWTHOrange!25, align=center] (h) at (-5.5,-1.20) {$\vec{C}_{t-1}$,\\$\vec{U}_{i,t-1}$};
    \node[ct, label={[mylabel]left:Input}, fill=RWTHOrange!25] (x) at (-4.25,-3) {$\vec{F}_{t},\Delta t$};

    \node[ct, label={[mylabel]State}, fill=RWTHBordeaux!25,align=center] (h1) at (5.5,1.75) {$\vec{C}_{t}$,\\$\vec{U}_{i,t}$};
    \node[ct, label={[mylabel]left:Output}, fill=RWTHBordeaux!25] (x2) at (2.75,3) {$\vec{P}_{t}$};

    \put(-36,-15){$\psi_{t-1}$}
    \put(88,38){$\psi_t$}
    \put(56,-15){$\omega_{t-1}$}

    \draw [->, ArrowC1] (h) -- (ibox1);
    \draw [-, ArrowC1] (ibox1) -- (ibox2);
    \draw [->, ArrowC1] (ibox2) -- (ibox3);
    \draw [-, ArrowC1] (ibox3) -- (ibox4);
    \draw [->, ArrowC2] (x) |- (ibox1);
    \draw [->, ArrowC2] (x) |- (LTC);
    \draw [->, ArrowC1] (LTC) -- (add1);
    \draw [-, ArrowC1] (add1) -- (psi2);
    \draw [->, ArrowC1] (ibox4) -- (ibox5);
    \draw [->, ArrowC1] (LTC) |- (h1);
    \draw [->, ArrowC1] (psi2) -- (x2);
    \draw [-, ArrowC1] (ibox5) |- (1,-0.125);
    \draw [->, ArrowC1] (1,-0.125) -| (LTC);

\end{tikzpicture}

%% file: AA_section_Symbolic.tex
\section{Symbolic constitutive modeling}
\label{sec:symbolic}

\color{author}
While the mapping form input to output of a KAN can be expressed using closed-form expression, the interpretability of the activation functions in form of B-splines are limited. To obtain an interpretable closed-form expression of the constitutive models, symbolic constituvie modeling is applied to the trained KANs for the free energy and dissipation potential after model training. In this section, we briefly review the sparsification and symbolification process, for more details, please refer to \citep{abdolazizi2025constitutive,liu2024kan}.
\color{black}

\textbf{Sparsification and pruning.} To achieve a sparse network structure, L1 regularization is applied to the trainable parameters, which in this case correspond to the control points of the B-Spline activations. Furthermore, pruning methods can be applied to the trained KANs to remove less important connections or nodes, further simplifying the model. This can help improve interpretability by focusing on the most relevant features and reducing the overall complexity of the network.

\textbf{Symbolification of convex, monotonic activations.} The trained activation functions in KANs can be replaced by symbolic formulas using symbolic regression techniques. The symbolification process aims to find a mathematical expression that closely approximates the behavior of the trained KANs while being more interpretable. This is achieved by searching through a space of mathematical expressions $f_{l,j,i}$ and selecting the one that best fits the data generated by the KANs. For each activation $\phi_{l,j,i}$, we aim to find the optimal symbolically approximated activation,
\begin{equation}
    y_{l,j,i} = c_{l,j,i}\cdot(f_{l,j,i}(a_{l,j,i}\cdot x_{l,i}+b_{l,j,i}))+d_{l,j,i}\,,
\end{equation}
so that the squared error between $\phi_{l,j,i}$ and $y_{l,j,i}$ is minimized. For this, a library of designed symbolic functions is provided and additional affine parameters $(a,b,c,d)_{l,j,i}$ are introduced. Furthermore, the parameters $(a,c)_{l,j,i}$ are constrained to be nonnegative to preserve the monotonicity and convexity of the activation functions.

Due to the recurrent network architecture, only the activation values from the final forward pass (i.e., the last time step) are retained. Consequently, a forward pass over the entire dataset is required to recover activation values across the full input domain. Otherwise, the extracted symbolic function does not faithfully reproduce the original B-spline activation.

%% file: AA_section_Results.tex
\section{Numerical examples}
\label{sec:results}

In this section, we validate the proposed iCKAN framework on synthetic and experimental datasets, including VHB 4910 \citep{hossain2012experimental} and temperature-dependent VHB 4905 polymer \citep{liao2020thermo}. These examples demonstrate that iCKAN capture inelastic material behaviors and additional feature effects. Symbolic expressions of the trained iCKANs are presented to highlight the interpretability of the approach.

For all training processes, the initial weights and biases of the KANs are initialized using a uniform random distribution. The loss function is defined as the normalized mean squared error (NMSE) between the predicted and target first Piola-Kirchhoff stresses, $\vec{P}$ and $\hat{\vec{P}}$, respectively. The error is normalized by the largest absolute entry of the target stress for each batch sample, computed over all time steps and stress components. Formally, the stress loss is given by
\begin{equation}
\mathcal{L}_{\text{stress}} =
\frac{1}{BTS}
\sum_{b=1}^{B}
\sum_{t=1}^{T}
\sum_{s=1}^{S}
\left|
\frac{
P_{b,t,s} - \hat{P}_{b,t,s}
}{
\max\limits_{t,s} \big|\hat{P}_{b,t,s}\big|
}
\right|^2 \, ,
\end{equation}
where $B$ is the batch size, $T$ is the number of time steps, and $S$ is the number of stress components. 

\color{author}
For explicit time integration scheme, the total loss function is the summation of the stress loss $\mathcal{L}_{\text{stress}}$ and the L1 regularization loss $\mathcal{L}_{\text{L1}}$,
\begin{equation}
    \mathcal{L}_\text{L1} =
\sum_{l=0}^{L-1}
\sum_{j=1}^{n_{\ell+1}}
\sum_{i=1}^{n_{\ell}}
\sum_{m=1}^{M}
\left| c_m^{(\ell,j,i)} \right|,
\end{equation}
to achieve a sparse network, i.e.,
\begin{equation}
\mathcal{L}_{\text{total}} = \mathcal{L}_{\text{stress}} + \lambda_\text{L1}\cdot \mathcal{L}_{\text{L1}}\,,
\end{equation}
with $\lambda_\text{L1}$ being the magnitude of the L1 regularization. 

In the explicit time integration scheme, the evolution equations are enforced directly through the numerical update of the state variables, i.e., they are satisfied by construction during the forward computation and are not included as an additional loss term. In this sense, the optimization is constrained by the evolution equations being enforced exactly in the integration scheme.
For the implicit time integration scheme, the loss function is extended by an additional residual term associated with the evolution equations as shown in \autoref{eq:loss_implicit}, leading to a penalized optimization formulation.

The AMSGRAD optimizer \citep{reddi2019convergence}, which combines the benefits of ADAM and RMSPROP, is utilized for training in combination with a cyclic learning rate scheduler \citep{smith2017cyclical}, where the learning rate cyclically varies between a base learning rate and a maximum learning rate over a given step size. Furthermore, gradient clipping is applied to prevent gradient explosion. 

\color{black}

\subsection{Verification on synthetic data} 
\label{sec:results_synthetic}

\textcolor{author}{Synthetic data are generated using an explicit time integration scheme within the iCKAN framework, where the corresponding KAN-based free energy and dissipation potential are replaced by symbolified expressions. The free energy is chosen as 
\mbox{$\psi = 0.5 \cdot \hat{I}_1^{\bar{\vec{C}}_e} + 0.5 \cdot \hat{I}_3^{\bar{\vec{C}}_e}$}, and the dissipation potential is defined as
\mbox{$\omega = 0.1 \cdot (\hat{I}_1^{\bar{\vec{\Sigma}}})^2 + (\hat{J}_2^{\bar{\vec{\Sigma}}})^2$}.} We only change the first entry of the deformation gradient from the undeformed state and apply a cyclic loading case of tension, relaxation, compression and relaxation. We set the maximum stretches to $F_{11}^{\max} = \{1.1, 1.2, 1.3\}$ and set the time of tensile loading to $t_{\text{load}} = \{0.2, 0.5, 1.0\}$ s, resulting into different strain rates. At the end, the dataset consists of deformation gradients $\vec{F}$ and time increments $\Delta t$ as inputs and the corresponding first Piola–Kirchhoff stresses $\vec{P}$ as outputs. 

\textbf{Model training.} For the training, we use only the tensile loading and relaxation part of the datasets with $t_{\text{load}} = \{0.2, 0.5\}$ s for all three stretch levels. We consider an iCKAN which can be illustrated as the rheological model in \autoref{fig:RM_inelastic}. Both the free energy and dissipation potential are modeled by input-convex KANs with topology [3,1]. The initial grid ranges are approximated as described in Appendix~\ref{app:init_grid_range}. To improve robustness and account for previously unseen input ranges, the grid bounds are extended by 20\%. For arguments that may attain negative values, the lower bound is chosen symmetrically with respect to the upper bound. We examine both the explicit and implicit variant of iCKAN. Training is performed with an initial learning rate of $5\cdot10^{-4}$ and a cyclic scheduler between $5\cdot10^{-5}$ and $1\cdot10^{-4}$ with a step size of 50 iterations. L1 regularization of $\lambda_\text{L1} = 10^{-5}$ and gradient clipping at 0.1 are applied. For the implicit iCKAN, the initial evolution penalty was chosen as $\lambda_{\text{evo}}=1000$ and increased by a factor of two every 250 epochs. 

\textbf{Symbolification.} We demonstrate the symbolification on the trained iCKAN. All activations in the KAN for free energy are approximated using a library of convex, non-decreasing functions $f(x)$ on $[0,\infty)$. This process is illustrated in \autoref{subfig-1:psi} and the full set of candidate functions is listed in Appendix~\ref{app:candidate_functions}. The symbolic formulation of the output dissipation potential $\omega^\text{KAN}$ can be obtained analogously while the enforcement of convexity, zero-valued at its origin and non-negativity is achieved by applying $\mathcal{H}$-operator in \autoref{eq:convex_omega} on the symbolified $\omega^\text{KAN}$. 

Alternatively, for the special case of the below presented single-layer KAN, the network output reduces to
\begin{equation}
    \omega^\text{KAN} = \sum_{i=1}^{n_0}\phi_{0,j,i}(x_{0,i})\,.
\end{equation}
In this case, it is equivalent to apply $\mathcal{H}$-operator to the full output $\omega^\text{KAN}$ or to each activation function $\phi_{0,0,i}$, i.e.,
\begin{equation}
    \mathcal{H}(\omega^\text{KAN}) = \sum_{i=1}^{n_0}\mathcal{H}(\phi_{0,0,i}(x_{0,i}))\,.
\end{equation}
Consequently, the desired dissipation potential can be obtained by directly symbolizing $\mathcal{H}(\phi_{0,0,i})$ using candidate functions that are convex, nonnegative, and zero at the origin (see Appendix~\ref{app:candidate_functions}). The resulting symbolic expression inherently satisfies all constraints without requiring further postprocessing, i.e., $\omega = \omega^\text{KAN}$. This procedure is illustrated in \autoref{subfig-2:omega}. 

\begin{figure}[H]
    \input{standalone_iCKAN_symb_synthetic}
    \caption{Symbolification of the activation functions of the trained iCKAN on synthetic data. Black curves denote the learned B-spline activations, and red curves their symbolified approximations. (a) KAN-based free energy: convex, non-decreasing activations are approximated by symbolic functions that preserve convexity and monotonicity over the admissible domain. (b) KAN-based dissipation potential: activations are transformed by $\mathcal{H}$-operator (\autoref{eq:omega_conv}) and then symbolified by convex, nonnegative functions that are zero-valued at the origin.}
    \label{fig:Symbolification}
\end{figure}

\color{author}
\textbf{Remark.} However, it should be noted that this simplification does not apply to deeper network architectures. For instance, for a two-layer KAN, the output can be expressed as
\begin{equation}
    \omega^\text{KAN} = \sum_{j=1}^{n_i} \phi_{1,1,j} \left(\sum_{i=1}^{n_0}\phi_{0,j,i}(x_{0,i}) \right).
\end{equation}
It is apparent that applying $\mathcal{H}$-operator to the full output or to each activation function yield different results, i.e.,
\begin{equation}
\mathcal{H}(\omega^\text{KAN})  \neq \sum_{j=1}^{n_i} \tilde{\phi}_{1,1,j} \left(\sum_{i=1}^{n_0}\tilde{\phi}_{0,j,i}(x_{0,i}) \right) \qquad \text{with}\qquad \tilde{\phi}_{l,j,i}(x) = \mathcal{H}(\phi_{l,j,i}(x)).
\end{equation}
\color{black}

\textbf{Results.} The final symbolic expressions for the free energy and dissipation potential function identified by the implicit and explicit iCKAN approaches are presented in \autoref{fig:results_synthetic}c. The corresponding predictions are shown in \autoref{fig:results_synthetic}a and \autoref{fig:results_synthetic}b. 
\color{author}
The full results regarding model validation with synthetic data are provided in Appendix~\ref{app:results_synthetic}, where \autoref{tab:results_synthetic} shows the error metrics of the model prediction for training dataset and testing dataset, which is evaluated before and after the symbolification process. The results indicate that symbolification introduces only a minor difference in prediction error, while the overall agreement between iCKAN predictions and both training and testing datasets remains high.

\begin{figure}[H]
    \centering
    \begin{subfigure}{\textwidth}
    \includegraphics{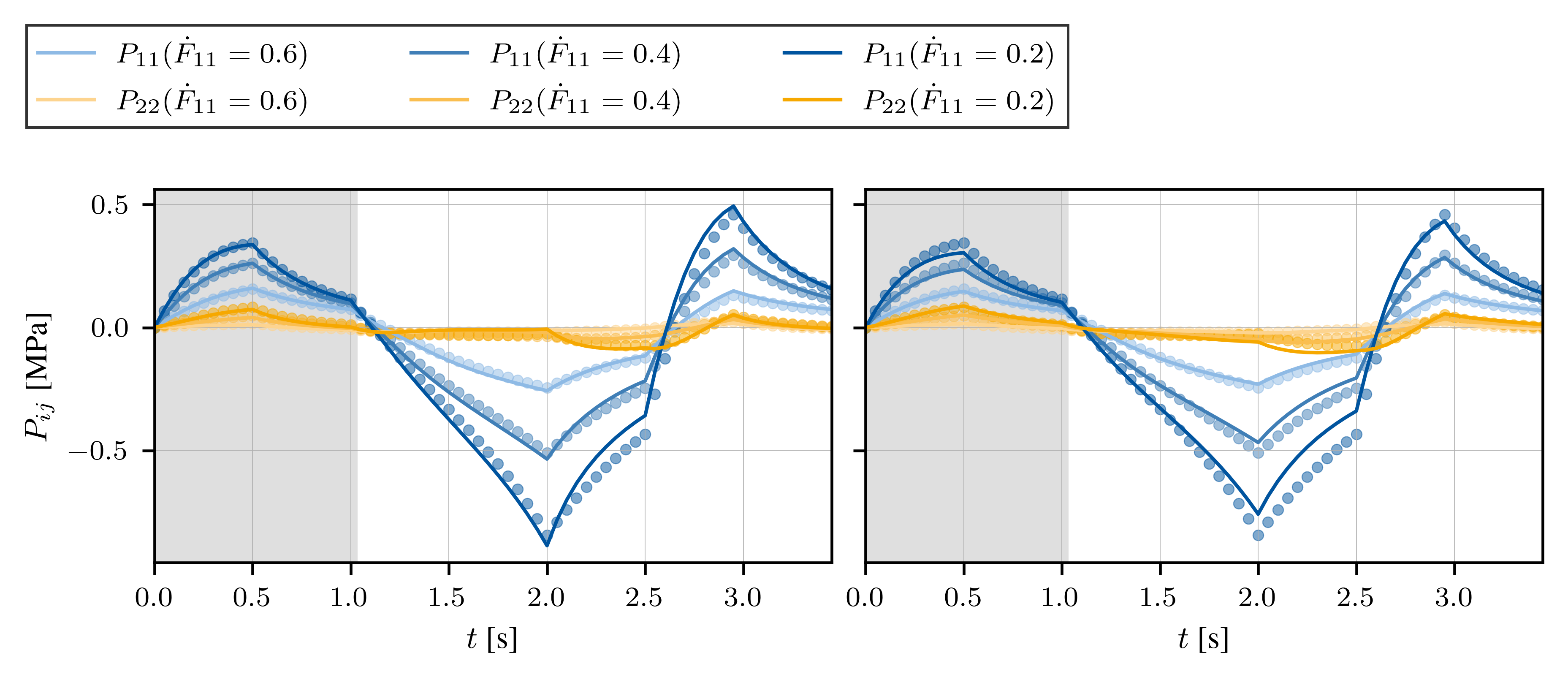}
        \put(-360,-5){(a) \textbf{Explicit time integration}}   
        \put(-150,-5){(b) \textbf{Implicit time integration}}        
        \put(-140,135){\includegraphics{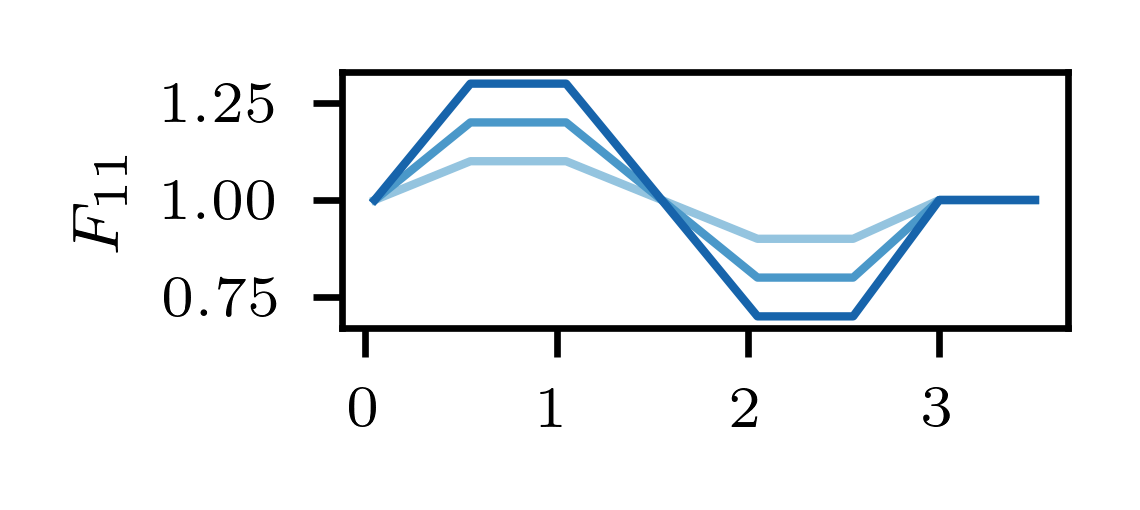}}
        \put(-20,145){\footnotesize $t$ [s]}
    \end{subfigure}
    \vspace{.5em}
    \label{fig:placeholder}
    \begin{subfigure}{\textwidth}
        \centering\small
        \input{table_symb_synthetic}
        \vspace{.2em}

        \normalsize
        (c) \textbf{Discovered free energy and dissipation potential function for the synthetic dataset}
    \end{subfigure}
    \caption{Results of the symbolified iCKAN on synthetic compressible material data using
        (a) the explicit time integration scheme and 
        (b) the implicit time integration scheme, using the expressions presented in Table (c). 
        The gray regions in (a) and (b) indicate the range of the training data, and the colored dots represent the corresponding reference data.}
    \label{fig:results_synthetic}
\end{figure}

Note that the considered different loading durations correspond to different strain rates. Therefore, the observed differences in prediction error should be interpreted as strain-rate-dependent effects rather than as a consequence of the loading duration itself.
\color{black}
Interestingly, both time integration approaches produced comparable results on this simple synthetic dataset, with only minor differences in the discovered symbolic forms of the free energy and dissipation potential. 
\color{author}
Furthermore, a brief sensitivity analysis of the influence of the predictive error scales with the time increment is shown in Appendix~\ref{app:sensitivity_analysis_time_increment}, indicating that the observed error is primarily governed by the numerical properties of the time integration scheme rather than by limitations of the iCKAN architecture.
\color{black}
This demonstrates that, in principle, either approach leads to similar outcomes and can be effectively employed in practice to discover material models using iCKANs.

\subsection{Model discovery for experimental data}

To access the practical applicability of iCKANs, we examine the framework on experimental data of viscoelastic polymers. Due to the fact that the provided experimental data are one-dimensional, the information of volumetric modes are lacking. Thus, we assume incompressible materials \citep{holthusen2025generalized,abdolazizi2024viscoelastic}. The detailed hyperparameters are provided in Appendix~\ref{app:info_experiment}, together with the loss convergence curves over the training epochs.

\color{author}
Trainings were performed on a local workstation equipped with an Apple M5 system-on-chip (10 CPU cores) and 16 GB unified memory. Approximately 10 minutes of CPU time were required for every 50 training epochs, corresponding to a total training time of roughly 5 hours. For each dataset, several preliminary training runs were conducted to identify suitable hyperparameter settings, including learning rates, regularization strengths, and network topologies. The results presented in the manuscript correspond to the final selected configurations. For each dataset, 10 independent training runs with different random initializations were conducted. The reported results correspond to the best-performing model among these runs. 
\color{black}

\subsubsection{Viscoelastic model discovery of VHB 4910 polymer}
\label{sec:results_VHB4910}

Very-High-Bond (VHB) 4910 is a polymer exhibiting highly nonlinear and viscoelastic behavior. Experimental uniaxial loading–unloading data for VHB 4910 are reported by \citet{hossain2012experimental} for four maximum stretch levels, \mbox{$F_{11}=\{1.5,,2.0,,2.5,,3.0\}$}, tested under three different stretch rates, \mbox{$\dot{F}_{11}=\{0.01,,0.03,,0.05\}\,\text{s}^{-1}$}. In our framework, rate dependence is incorporated directly through the network input pair $(\vec{F},\Delta t)$, allowing the model to learn time-dependent behavior without introducing explicit viscosity terms. 

To capture more complex material behaviors, we utilize two iCKANs in parallel, motivated by classical rheological representations of viscoelastic materials, as illustrated in \autoref{fig:KANs_VHB4910}b. In continuum mechanics, the total stress response of viscoelastic solids is often represented as the superposition of multiple parallel mechanisms, each corresponding to a distinct free energy and dissipation potential. The two branches of the parallel structure are indexed by superscripts $^1(\bullet)$ and $^2(\bullet)$, which share the same input deformation gradient $\vec{F}$. Each branch learns its own free energy $^i\psi(^i\bar{\vec{C}}_e)$ and dissipation potential $^i\omega(^i\bar{\vec{\Sigma}})$, while enforcing thermodynamic consistency. The stress contribution $^i\vec{P}$ of each branch is obtained consistently via differentiation of the learned potentials. The total stress response is then computed as the additive superposition $\vec{P}= {}^1\vec{P} + {}^2\vec{P}$, which is consistent with classical rheological network models and enables the representation of multiple concurrent elastic and inelastic mechanisms with different characteristic responses.

\textbf{Model training.} For training, we use the dataset corresponding to the highest stretch level, $F_{11}=3.0$, while the remaining stretch levels are reserved for validation to assess generalization for unseen data. We approximate the initial grid range as previously and extend them by 20\% on both sides. 

\begin{figure}[ht]
    \input{standalone_iCKAN_results_VHB4910}
    \vspace{-2em}
    \put(-390,-12){(a) \textbf{Discovered model for VHB 4910 polymer}}
    \put(-120,20){(b) \textbf{Rheological model}}
    \vspace{2em}
    \caption{KAN architectures of discovered model for VHB 4910 polymer. (b) Rheological model consisting two iCKAN branches in parallel combination. Both branches receive the same input deformation gradient $\vec{F}$ and each learns an free energy and an dissipation potential. The first branch identifies $^1\psi(^1\bar{\vec{C}}_e)$ and $^1\omega(^1\bar{\vec{\Sigma}})$ and produces the stress $^1\vec{P}$ while preserving thermodynamic consistency. The second branch identifies $^2\psi(^2\bar{\vec{C}}_e)$ and $^2\omega(^2\bar{\vec{\Sigma}})$ and produces the stress $^2\vec{P}$ under the same thermodynamic constraints. The total stress response is obtained by summing the stresses of both branches, $\vec{P} = {}^1\vec{P} + {}^2\vec{P}$.}
    \label{fig:KANs_VHB4910}
\end{figure}

\textbf{Symbolification and results.} At the end, we utilize the same library of convex, monotoonic functions as in the previous section to symbolify the activations of the trained model. The symbolified discovered KAN models of the corresponding free energy and dissipation potential are shown in \autoref{fig:KANs_VHB4910}a with their corresponding symbolic functions in \autoref{fig:results_VHB4910}b. The predicted results of the symbolified iCKAN model are shown in \autoref{fig:results_VHB4910}a. 
\color{author}
A quantitative comparison of the model performance before and after the symbolification process is provided in \autoref{tab:results_VHB}. Although the symbolification step introduces a modest increase in prediction error, the resulting symbolic expressions retain the overall predictive capability of the original iCKAN models while offering improved interpretability and insights into the underlying constitutive behavior.

Although the application of the $\mathcal{H}$-operator can reduce the immediate interpretability of the discovered dissipation potential, further symbolic post-processing may be performed when the variable dependencies are sufficiently separated. This enables the derivation of more compact and physically interpretable expressions while retaining the thermodynamic constraints imposed by the $\mathcal{H}$-operator.

\color{black}
\begin{figure}[H]
    \centering
    \begin{subfigure}{\textwidth}
        \centering
        \small
        \includegraphics{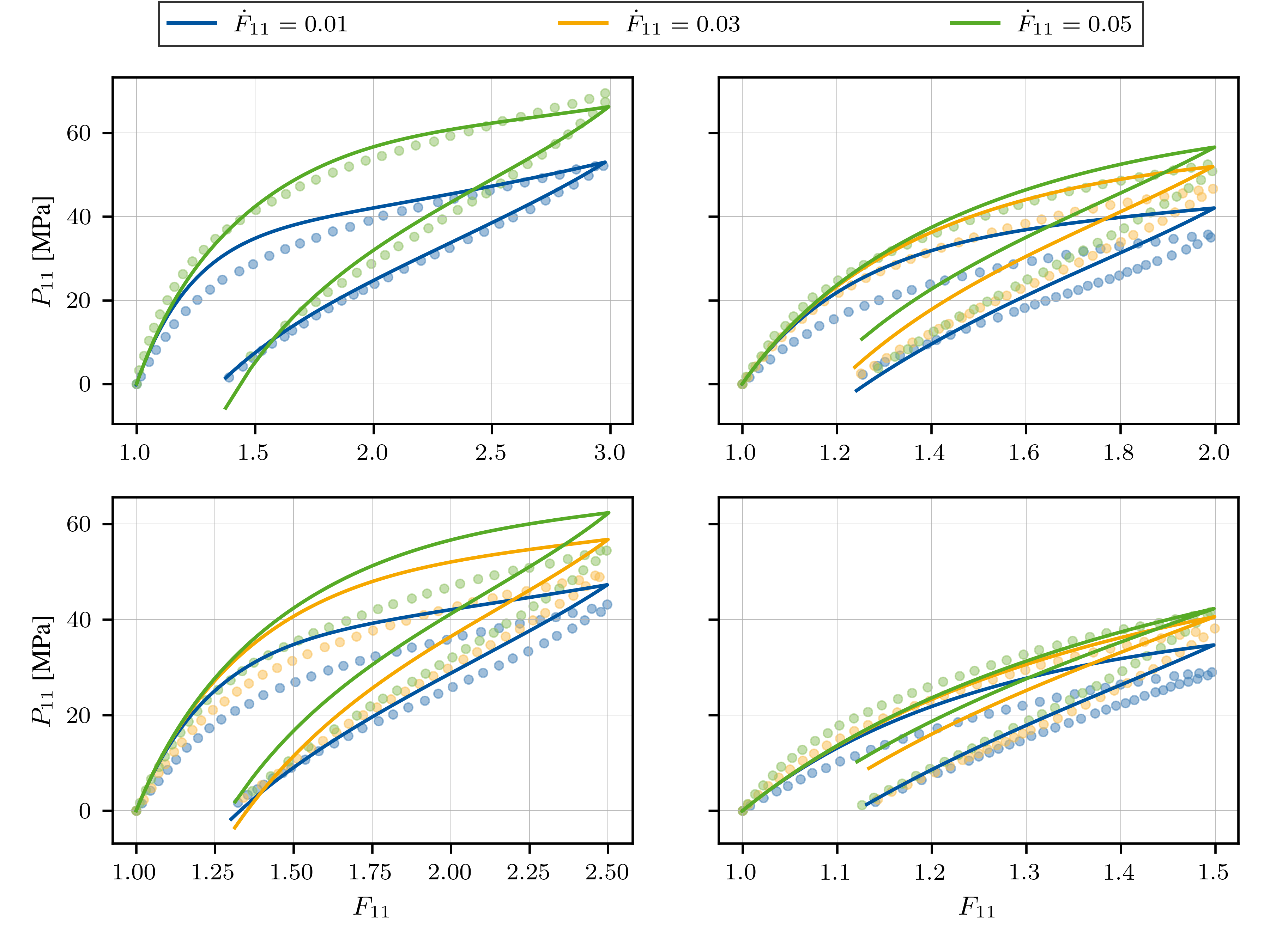}
        \put(-265,193){\fcolorbox{black}{white}{\textbf{Training}}}
        \put(-50,193){\fcolorbox{black}{white}{\textbf{Testing}}}
        \put(-260,50){\fcolorbox{black}{white}{\textbf{Testing}}}
        \put(-50,50){\fcolorbox{black}{white}{\textbf{Testing}}}
        \put(-300,0){(a) \textbf{Predictions of the symbolified iCKANs for VHB 4910}}
    \end{subfigure}

    \begin{subfigure}{\textwidth}
        \vspace{.5em}
        
        \label{tab:symbol_params_VHB4910}
        \centering\small
        \input{table_symb_VHB4910}
        \vspace{.2em}
        
        (b) \textbf{Discovered free energy and dissipation potential functions for VHB 4910 polymer}
    \end{subfigure}
    \caption{Discovered model for VHB 4910 polymer. (a) Predictions by the symbolified trained iCKAN under three different constant loading/unloading rates, denoted as $\dot{F}_{11}$. The prediction of the trained iCKAN model is represented by the solid lines and the experimentally measured data is represented by dots in corresponding colors. Only the experimental data corresponding to $F_{11}^{\text{max}} = 3.0$ [-] are utilized for training. (b) Corresponding symbolic functions.}
    \label{fig:results_VHB4910}
\end{figure}

\subsubsection{Thermo-viscoelastic model discovery of VHB 4905 polymer}
\label{sec:result_VHB4905}
Building on the previous section, we next consider the identification of material models augmented by an additional feature vector $\vec{f}$. Our goal is to find a single iCKAN expression capable of describing the material behavior under varying non-mechanical conditions. To this end, we study the VHB 4905 polymer, which exhibits highly deformable, viscoelastic, and temperature-sensitive behavior. Uniaxial experimental data for VHB 4905 polymer at different temperatures, \mbox{$\theta = \{0, 10, 20,40,60,80\}\, ^\circ \mathrm{C}$}, under multiple strain rates, $\dot{F}_{11} = \{0.03,0.05,0.1\}\, \mathrm{s}^{-1}$, and maximum stretch ratios, $F_{11} = \{2.0,3.5,4.0\}$, are reported in \citep{liao2020thermo}. 

\textbf{Temperature as feature.} The strain rate and deformation information are naturally incorporated into the model inputs. In addition, the temperature $\theta$ is explicitly included as a feature in the arguments of the free energy \citep{abdolazizi2025constitutive}, i.e., 
\begin{equation}
    ^i\psi(^i\bar{\vec{C}}_e) = {}^i\psi(\hat{I}_1^{^i\bar{\vec{C}}_e}, \hat{I}_2^{^i\bar{\vec{C}}_e}, \hat{I}_3^{^i\bar{\vec{C}}_e},\theta)\,.
\end{equation}
It is important to note that this approach differs from classical thermo-mechanical constitutive modeling, where temperature is included via either thermal strain decomposition or temperature-dependent material parameters. Instead, this feature-augmented construction provides a general, data-driven mechanism to capture the influence of unknown non-mechanical factors on the free energy, and consequently, on the material stress response. The initial grid range for this temperature feature is set directly by the experimental temperature range, while the initial grid ranges are selected according to Appendix~\ref{app:init_grid_range} and subsequently extended by 20\% on both ends. Although it is reasonable to assume the feature has influence on both free energy and dissipation potential, we choose to omit it for the dissipation potential due to network robustness. 

\color{author}
\textbf{Training.} For each temperature level, the loading paths corresponding to $\dot{F}_{11}=\{0.03,0.1\}\,\mathrm{s}^{-1}$ at $F_{11}^{\max}=4.0$ are used for training, while the remaining loading paths are reserved for testing. The resulting data split is summarized in \autoref{tab:split_VHB4905}. The adopted split contains interpolation scenarios with respect to strain rate across all temperature levels, providing a challenging validation setting in a multi-temperature learning framework. The loss during training is shown in \autoref{fig:duo_iCKAN_VHB4910_4905_explicit_loss}(b). 

\begin{table}[H]
\centering
\hspace{5em}
\begin{minipage}{0.35\textwidth}
\centering
\small
\textbf{Training}
\vspace{0.3em}

\begin{NiceTabular}{|c|c|c|c|c|c|c|}
\hline
& \multicolumn{6}{c}{Temperature $\theta$ [°C]}\\
\cline{2-7}
$\dot F_{11}$ [s$^{-1}$] &0 &10 & 20 & 40 & 60 & 80\\
\hline
0.03 & $\triangle$ &  -- & $\triangle$ &  $\triangle$ &  $\triangle$ &  $\triangle$\\
0.05 & -- & -- & -- & --& -- & -- \\
0.10 & $\triangle$ &  $\triangle$ & $\triangle$ &  $\triangle$ &  $\triangle$ & $\triangle$\\
\hline
\end{NiceTabular}
\end{minipage}
\hfill
\begin{minipage}{0.35\textwidth}
\centering
\small
\textbf{Testing}
\vspace{0.3em}

\begin{NiceTabular}{|c|c|c|c|c|c|c|}
\hline
& \multicolumn{6}{c}{Temperature $\theta$ [°C]}\\
\cline{2-7}
$\dot F_{11}$ [s$^{-1}$] &0 &10 & 20 & 40 & 60 & 80\\
\hline
0.03 & -- & -- & -- & --& -- & -- \\
0.05 & $\triangle$ & -- & $\triangle$ & $\triangle$ & $\triangle$ & $\triangle$ \\
0.10  & $\lozenge$$\square$ & $\square$ & $\square$ & $\lozenge$$\square$ & $\lozenge$$\square$ & $\lozenge$$\square$\\
\hline
\end{NiceTabular}
\end{minipage}
\hspace{5em}
\vspace{0.5em}

$\lozenge$: $\lambda_{\max}=2$, \quad
$\square$: $\lambda_{\max}=3$, \quad
$\triangle$: $\lambda_{\max}=4$

\caption{Assignment of training and testing datasets of VHB 4905 polymer at $\theta \in [0,80]\,^\circ \mathrm{C}$.}
\label{tab:split_VHB4905}
\end{table}
\color{black}

\textbf{Symbolification.} We symbolify the trained activations, except for those corresponding to the temperature feature in the first layer of each KAN for elastic energy, using the same library of convex, monotonic functions as in the previous section. The resulting KAN architectures are shown in \autoref{fig:KANs_VHB4905}. For each branch, there remains an activation function associated with temperature in the free energy $^i\psi(^i\bar{\vec{C}}_e)$, which we denote as $^1g(\theta)$ and $^2 g(\theta)$ for the first and second branch, respectively. 

\begin{figure}[H]
    \centering
    \vspace{1em}
    \input{standalone_iCKAN_results_VHB4905}
    \caption{Discovered model for VHB 4905 thermo polymer at $\theta \in [0,80]\,^\circ \mathrm{C}$ }
    \label{fig:KANs_VHB4905}
\end{figure}

As these activations are not constrained to be monotonic or convex, their shapes can be arbitrary. To capture this flexibility, we adopt higher-order polynomial functions. This approach allows us to automatically discover a symbolic expression for $^1g(\theta)$. For $^2g(\theta)$, however, a single symbolifc function did not provide sastisfactory fit. Therfore, we allow for piecewise symbolification, and manually approximate it using two piecewise quadratic functions defined over the temperature intervals $[0,40)$ and $[40,80]\, ^\circ \mathrm{C}$, respectively. The coefficients are chosen such that $C^1$-continuity is enforced at $\theta = 40\, ^\circ \mathrm{C}$. 

\textbf{Results.} The predicted results of the symbolified iCKAN model are shown in \autoref{fig:duo_iCKAN_feat_VHB4905_explicit_result_0_80}. 
The symbolified formula for the discoverd model are listed in \autoref{fig:symb_funcs_KANs_0_80}. 
\color{author}
Analogous to the previous example, a quantitative comparison of the model performance before and after the symbolification process is provided in \autoref{tab:results_VHB}. While the symbolification step introduces a modest increase in prediction error, the resulting symbolic expressions still capture the overall material behavior and preserve the main constitutive trends. 

The influence of network depth was also investigated during model development. Increasing the depth of the KAN architecture did not lead to a significant improvement in the fit to the experimental data, while considerably increasing the computational cost and reducing the interpretability of the resulting symbolic expressions. Therefore, the selected architecture was considered to provide a favorable balance between accuracy, computational efficiency, and interpretability. While deeper network variants were explored during preliminary experiments, no systematic hyperparameter study was conducted, as the primary objective of this work is not architectural optimization but the identification of interpretable constitutive relations.

This modeling choice also enables a clearer interpretation of the learned constitutive structure, particularly with respect to feature dependencies. In this context, the trained iCKAN consistently learns a decreasing dependence of the free energy on temperature. This observation is in agreement with established theoretical and experimental knowledge for polymeric materials, where increasing temperature generally leads to a reduction in stiffness and stored elastic energy, within the temperature range covered by the training data. The result demonstrates that the proposed feature-augmented iCKAN is capable of identifying physically meaningful feature dependencies directly from the experimental data. Symbolification of the learned activations provides interpretable functional forms, revealing how temperature influences the free energies in each branch.

\begin{figure}[H]
    \centering\small
    \input{table_symb_VHB4905}
    \vspace{.2em}
    \caption{Discovered free energy and dissipation potential functions for VHB 4905 thermo polymer at $\theta \in [0,80]\,^\circ \mathrm{C}$.}
    \label{fig:symb_funcs_KANs_0_80}
\end{figure}

\begin{figure}[H]
    \centering
    \begin{subfigure}{\textwidth}
        \centering
        \includegraphics{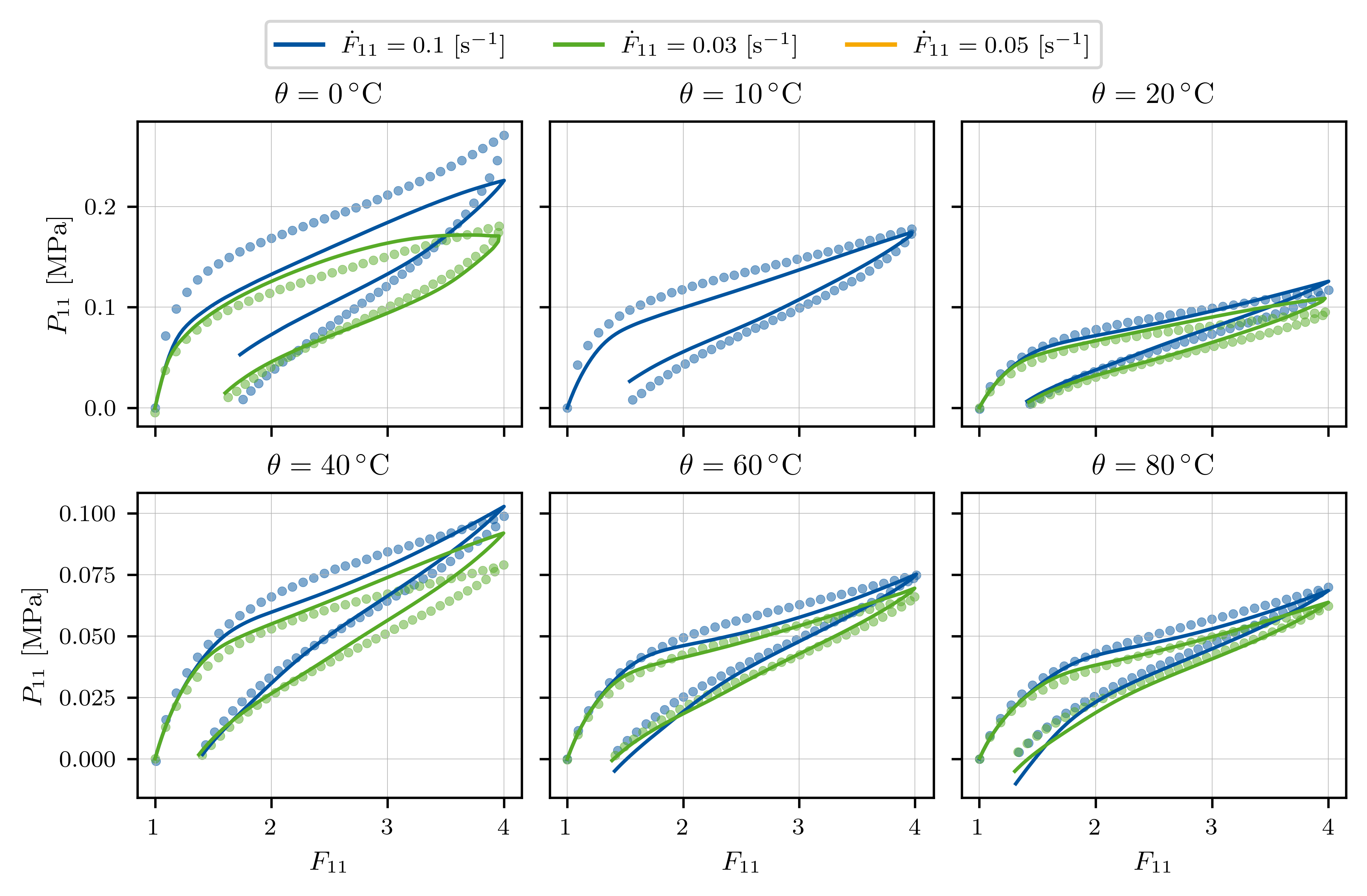}
        \put(-250,-5){(a) \textbf{Training data}}
    \end{subfigure}
    \begin{subfigure}{\textwidth}
        \centering
        \includegraphics{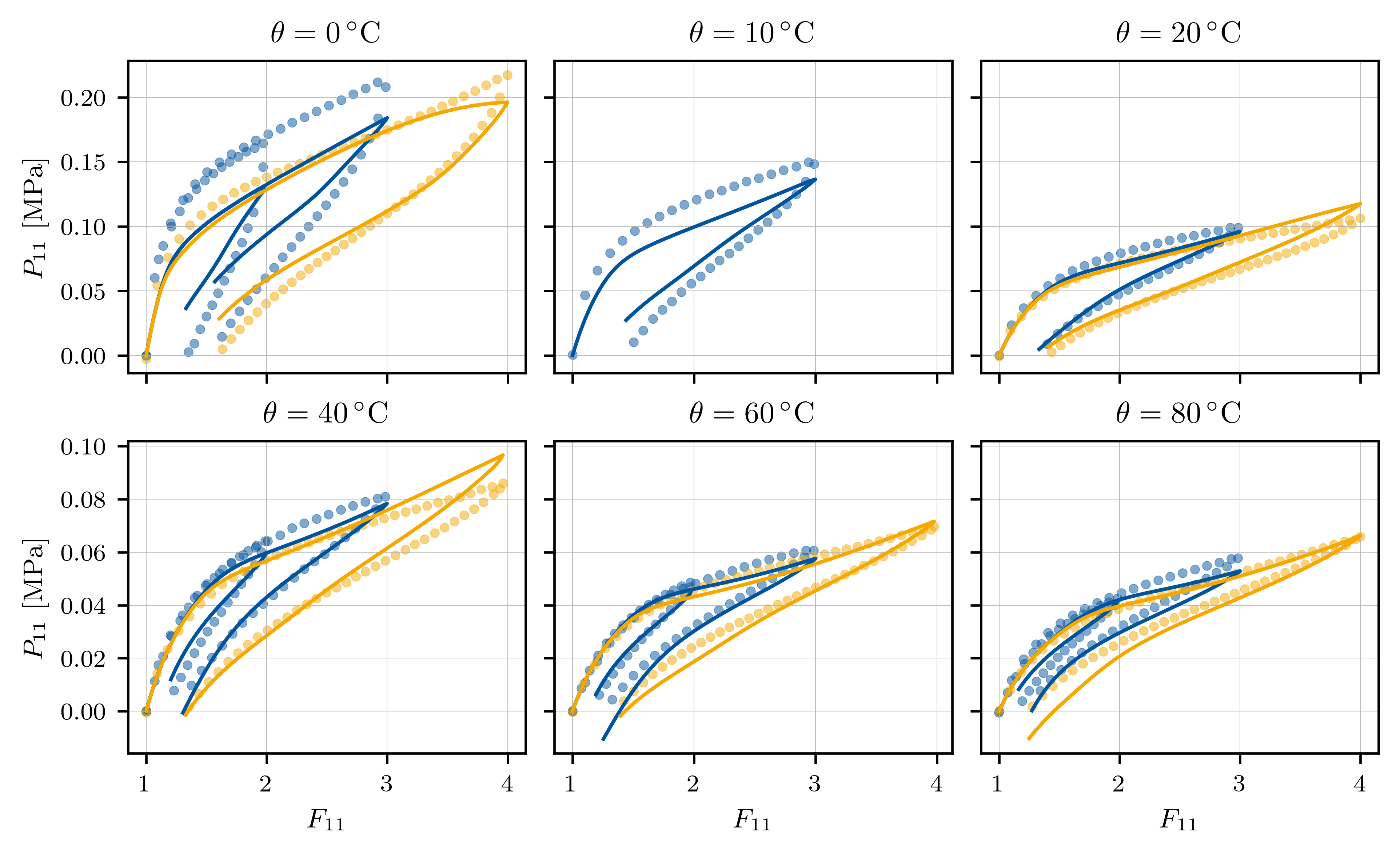}
        \put(-250,-5){(b) \textbf{Testing data}}
    \end{subfigure}
    \caption{(a) Training set and (b) validation set of discovered model for the experimental data of VHB 4905 polymer at $\theta \in [0,80]\,^\circ \mathrm{C}$ under three different constant loading/unloading rates, denoted as $\dot{F}_{11}$. The prediction of the trained iCKAN model is represented by the solid lines and the experimentally measured data is represented by the dots in the corresponding colors.}
\label{fig:duo_iCKAN_feat_VHB4905_explicit_result_0_80}
\end{figure}

\subsubsection{Single-temperature iCKAN analysis at $\theta = 20 \,^\circ \mathrm{C}$}
While it is evident that increased generality of the feature space can slightly reduce prediction performance. Therefore, we employ a dedicated iCKAN model exclusively on the temperature level of $\theta = 20 \,^\circ \mathrm{C}$. In addition, the rheological architecture is extended by incorporating an equilibrium branch in parallel with the Maxwell branches. This equilibrium contribution is implemented by omitting the dissipation potential $^3\omega(^3\bar{\vec{\Sigma}})$ from the this branch. Consequently, the third branch reduces to a purely elastic CKAN model \citep{abdolazizi2025constitutive}. The total first Piola-Kirchhoff stress is obtained as \mbox{$\vec{P}= {}^1\vec{P} + {}^2\vec{P}+ {}^3\vec{P}$}.

For all KANs employed in the network, a topology of $[3,3,1]$ is adopted. The dataset is resampled and partitioned into training and testing sets following the procedure of \citet{kalina2026physics}. The initial KAN grid ranges are selected analogously to the previous examples and subsequently extended by 50\% on both sides to accommodate the larger variation in the experimental data. We choose a L1 regularization coefficient of $\lambda_\text{L1} = 10^{-4}$ and gradient clipping at 0.1 are applied. Training is performed using an initial learning rate of $5\cdot10^{-5}$ together with a cyclic scheduler between $5\cdot10^{-5}$ and $5\cdot10^{-4}$ with a step size of 50 iterations. In particular, the low prediction errors obtained for the dedicated $\theta = 20 \,^\circ \mathrm{C}$ model demonstrate that iCKANs are capable of accurately representing the constitutive response when the feature space is restricted.

The result is shown in \autoref{fig:duo_iCKAN_feat_VHB4905_explicit_result_temp20} with a stress loss $\mathcal{L}_\text{stress}$ for training and testing sets of $9.98 \cdot 10^{-5}$ and $1.5 \cdot 10^{-4}$, respectively. These results support the argument that the observed discrepancies in Section~\ref{sec:result_VHB4905} are not necessarily indicative of an intrinsic limitation of KAN-based constitutive models, but rather reflect the trade-off between interpretability, robustness, and predictive accuracy inherent in the chosen architecture. Nevertheless, \citet{samadi2024smooth} shows that the equivalence of KANs to universal MLPs is only guaranteed under certain circumstances, highlighting potential limitations of KAN-based potentials. 

\begin{figure}[H]
    \centering
    \small
    \includegraphics{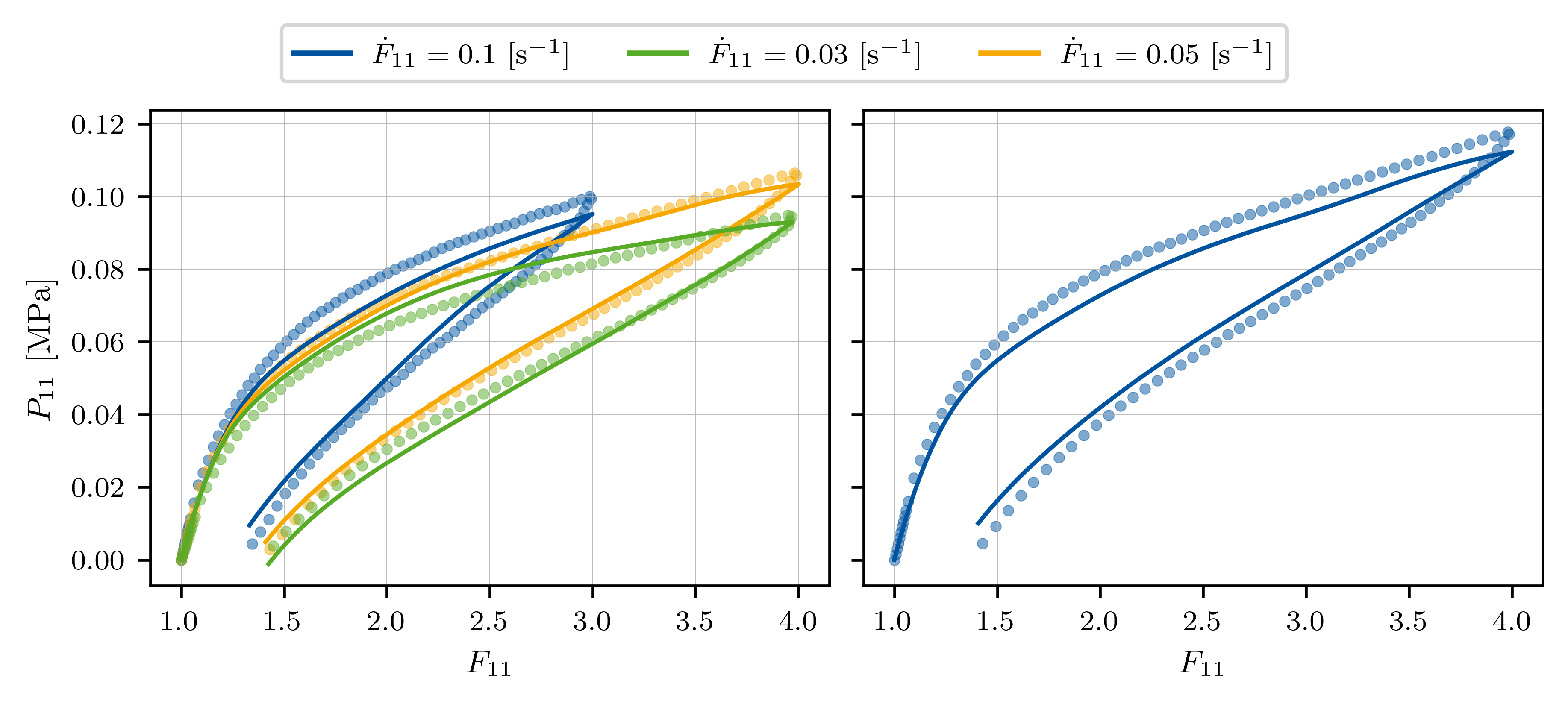}
    \put(-260,50){\fcolorbox{black}{white}{\textbf{Training}}}
    \put(-50,50){\fcolorbox{black}{white}{\textbf{Testing}}}
    \caption{Training set and testing set of discovered model for the experimental data of VHB 4905 polymer at \mbox{$\theta = 20\, ^\circ \mathrm{C}$} under three different constant loading/unloading rates, denoted as $\dot{F}_{11}$. The prediction of the trained iCKAN model is represented by the solid lines and the experimentally measured data is represented by the dots in the corresponding colors.}
\label{fig:duo_iCKAN_feat_VHB4905_explicit_result_temp20}
\end{figure}

\color{black}


%% file: standalone_iCKAN_symb_synthetic.tex
\small
\vspace{2em}
    \hfill
    \subfloat[\textbf{Symbolification for free energy}\label{subfig-1:psi}]{
    \vspace{1em}
    \begin{overpic}[width=.12\textwidth]{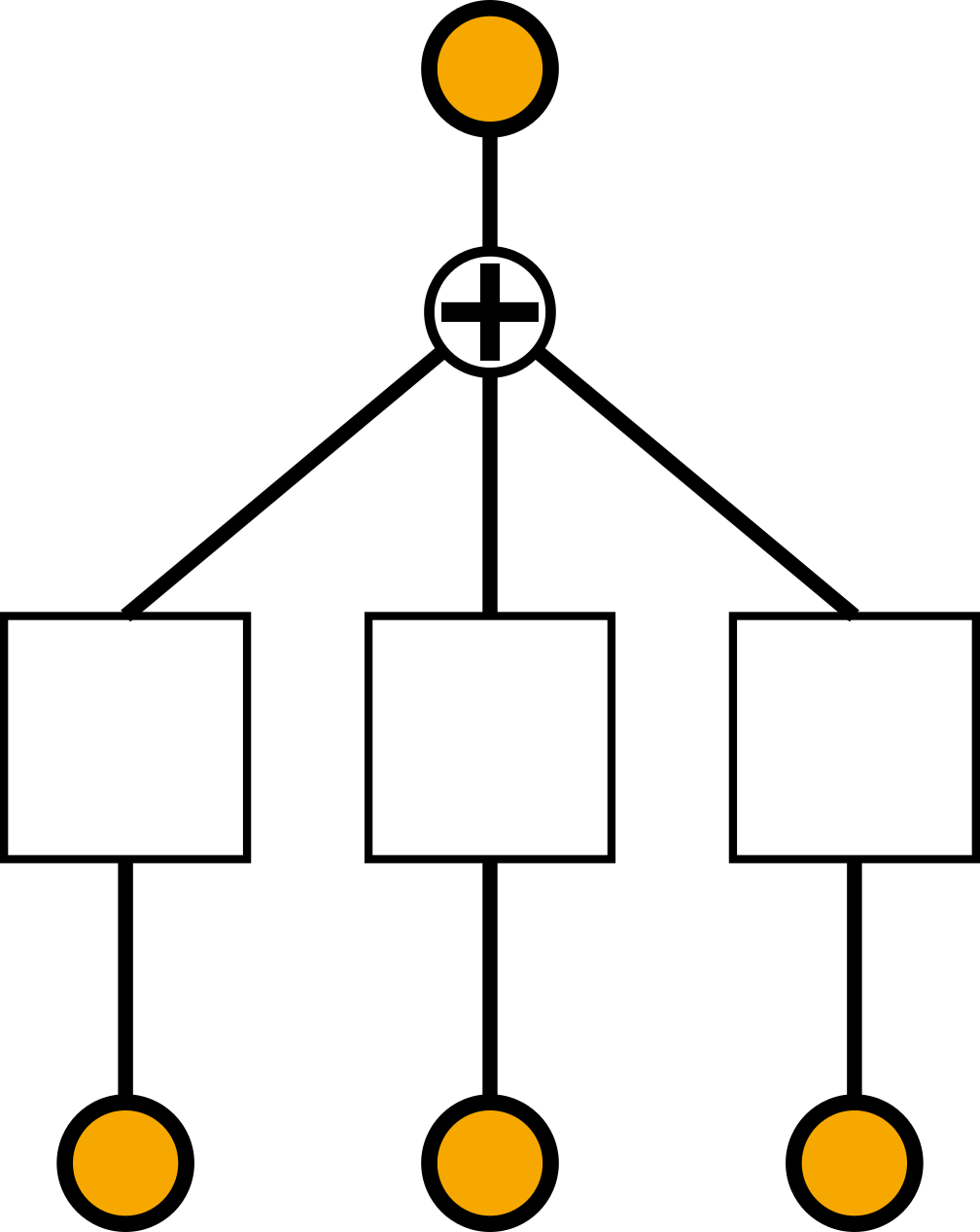}
        \put(-1,28){\includegraphics[width=0.037\textwidth]{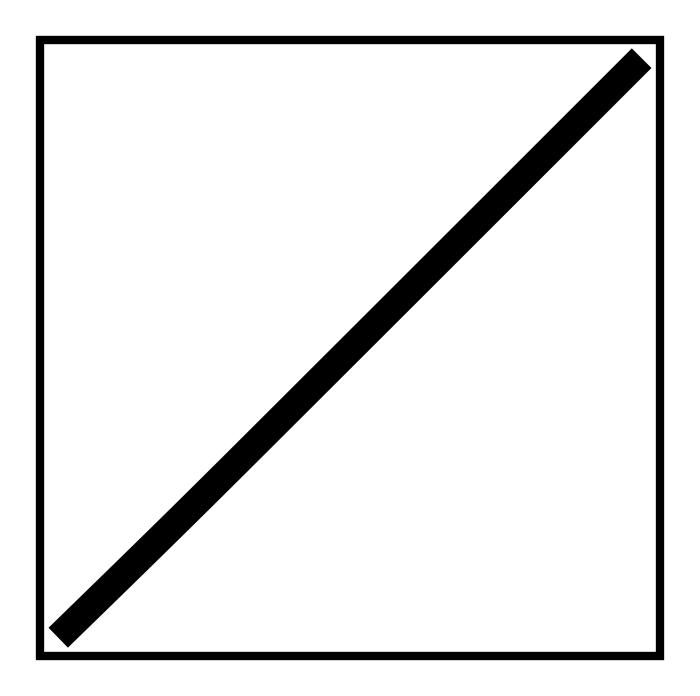}}
        \put(29,28){\includegraphics[width=0.037\textwidth]{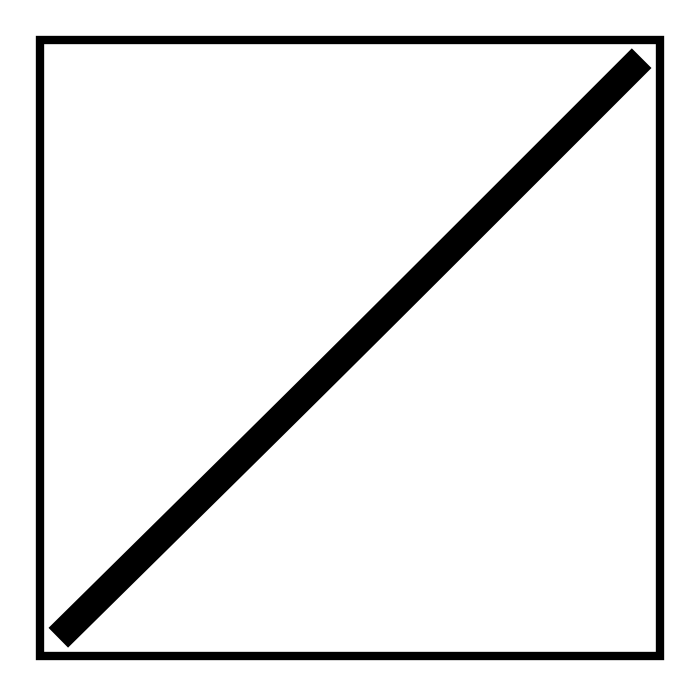}}
        \put(58,28){\includegraphics[width=0.037\textwidth]{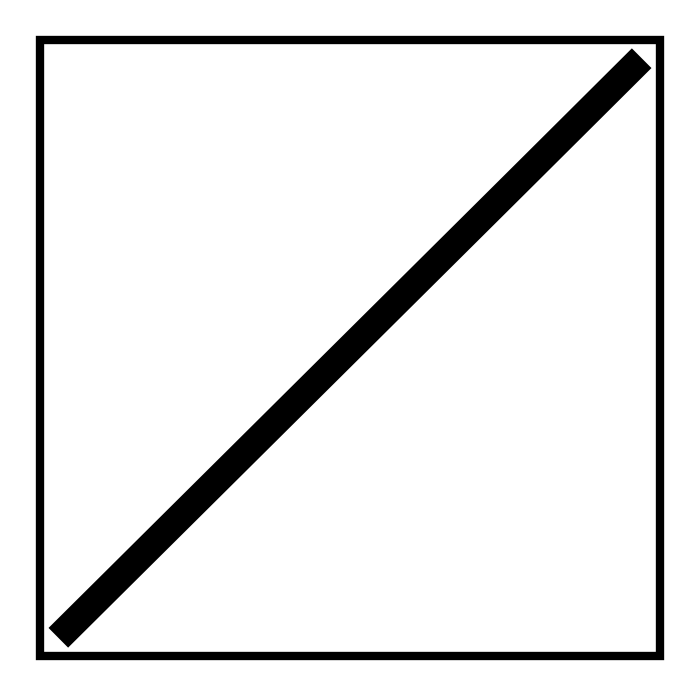}}
        \put(2,-18){$\hat{I}_1^{\bar{\vec{C}}_e}$}
        \put(32,-18){$\hat{I}_2^{\bar{\vec{C}}_e}$}
        \put(63,-18){$\hat{I}_3^{\bar{\vec{C}}_e}$}
        \put(32,107){$\psi^{\text{KAN}}$}
    \end{overpic}
    \begin{minipage}[c]{0.1\textwidth}
        \centering
        \vspace{-3em}
        \raisebox{2.5em}{\large\boldmath$\overset{\text{symb.}}{\longrightarrow}$}
    \end{minipage}
    \begin{overpic}[width=.12\textwidth]{figures_KAN_31.png}
        \put(0,28){\includegraphics[width=0.037\textwidth]{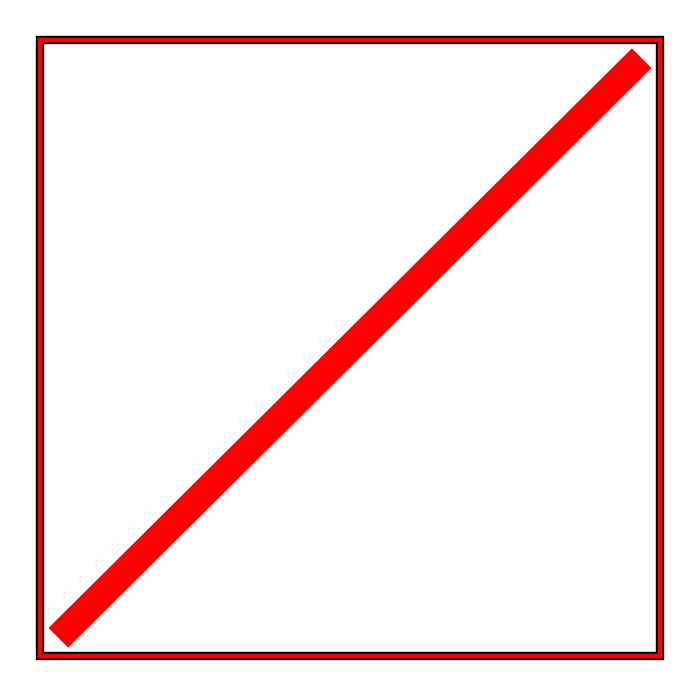}}
        \put(29,28){\includegraphics[width=0.037\textwidth]{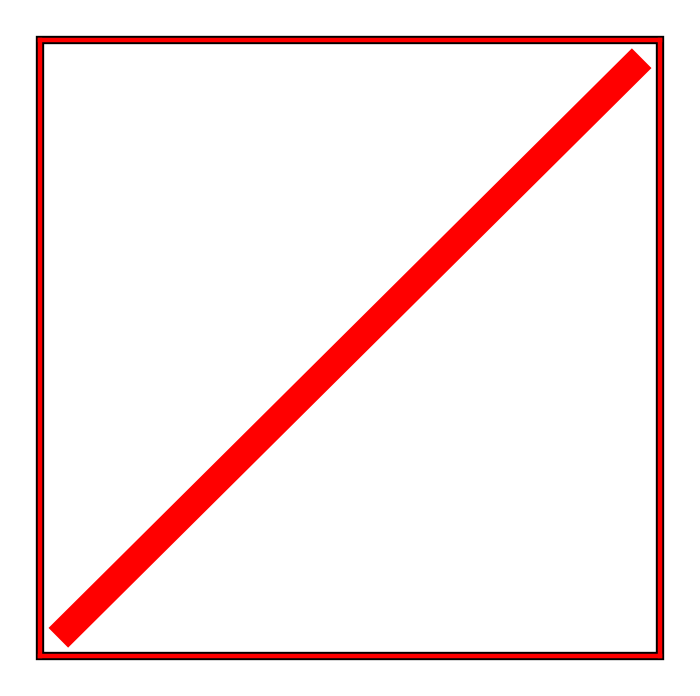}}
        \put(58,28){\includegraphics[width=0.037\textwidth]{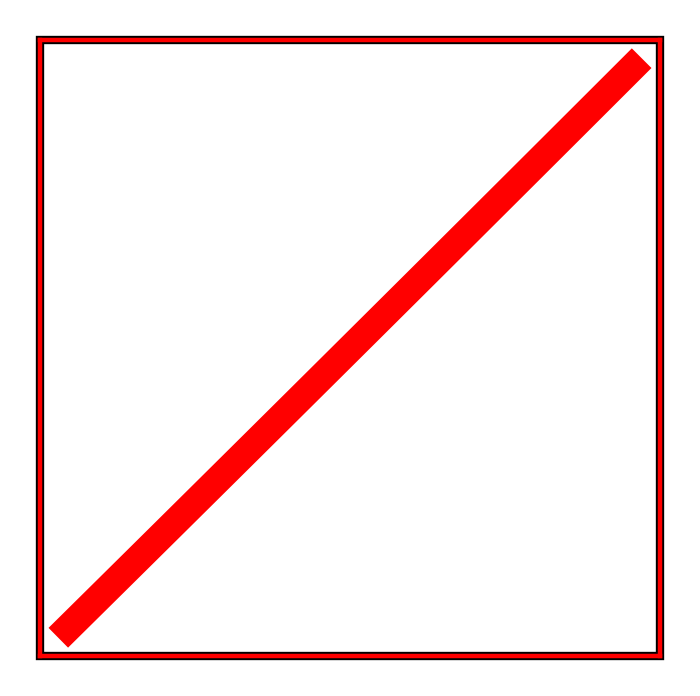}}
        \put(2,-18){$\hat{I}_1^{\bar{\vec{C}}_e}$}
        \put(32,-18){$\hat{I}_2^{\bar{\vec{C}}_e}$}
        \put(63,-18){$\hat{I}_3^{\bar{\vec{C}}_e}$}
        \put(32,107){$\psi^{\text{KAN}}$}
    \end{overpic}
    \vspace{1em}
    }
    \hspace{3em}
    \subfloat[\textbf{Symbolification for dissipation potential}\label{subfig-2:omega}]{
    \vspace{1em}
    \begin{overpic}[width=.12\textwidth]{figures_KAN_31.png}
        \put(0,28){\includegraphics[width=0.037\textwidth]{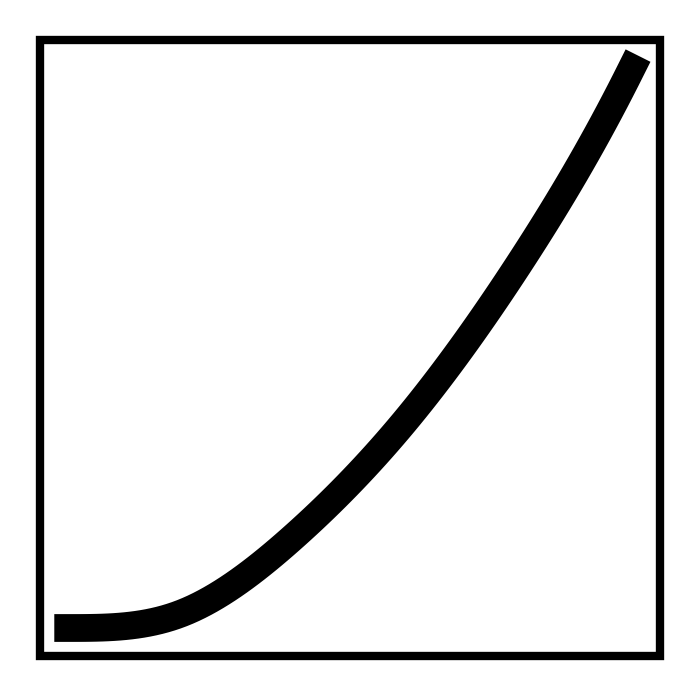}}
        \put(29,28){\includegraphics[width=0.037\textwidth]{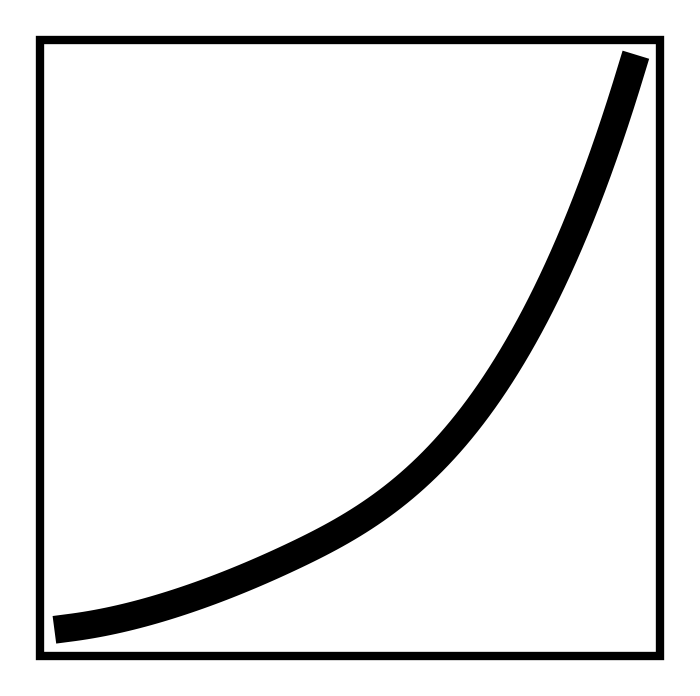}}
        \put(58,28){\includegraphics[width=0.037\textwidth]{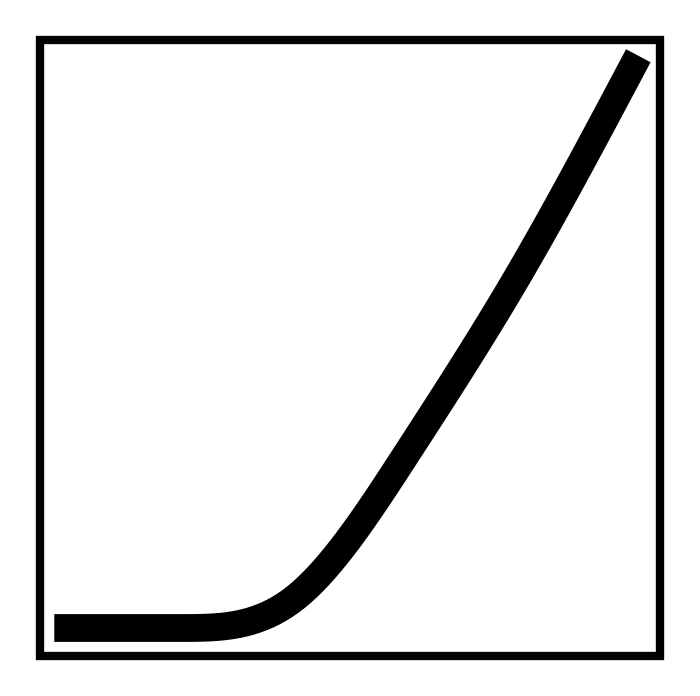}}
        \put(2,-18){$\hat{I}_1^{\bar{\vec{\Sigma}}}$}
        \put(28,-18){$\hat{J}_2^{\bar{\vec{\Sigma}}}$}
        \put(60,-18){$\hat{J}_3^{\bar{\vec{\Sigma}}}$}
        \put(32,107){$\omega^{\text{KAN}}$}
    \end{overpic}
    \begin{minipage}[c]{0.1\textwidth}
        \centering
        \vspace{-3em}
        \raisebox{2.5em}{\large\boldmath$\overset{\text{symb.}}{\longrightarrow}$}
    \end{minipage}
    \begin{overpic}[width=.12\textwidth]{figures_KAN_31.png}
        \put(0,28){\includegraphics[width=0.037\textwidth]{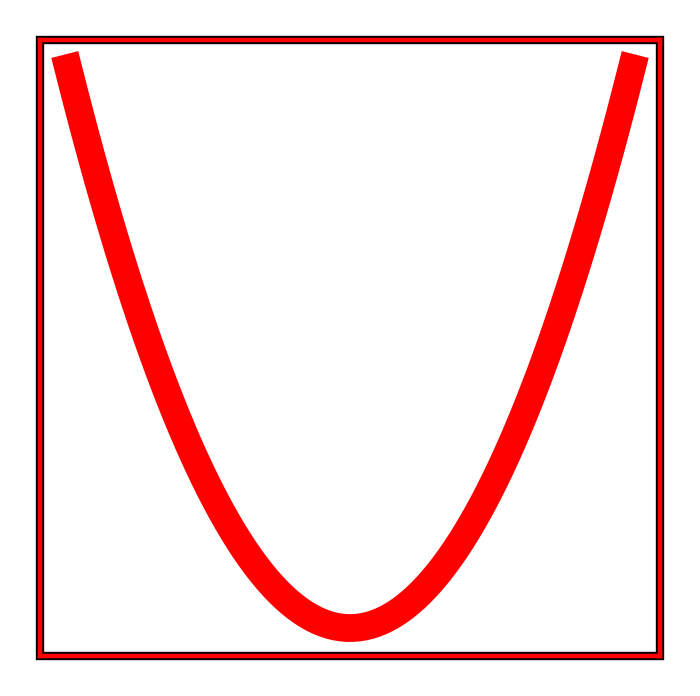}}
        \put(29,28){\includegraphics[width=0.037\textwidth]{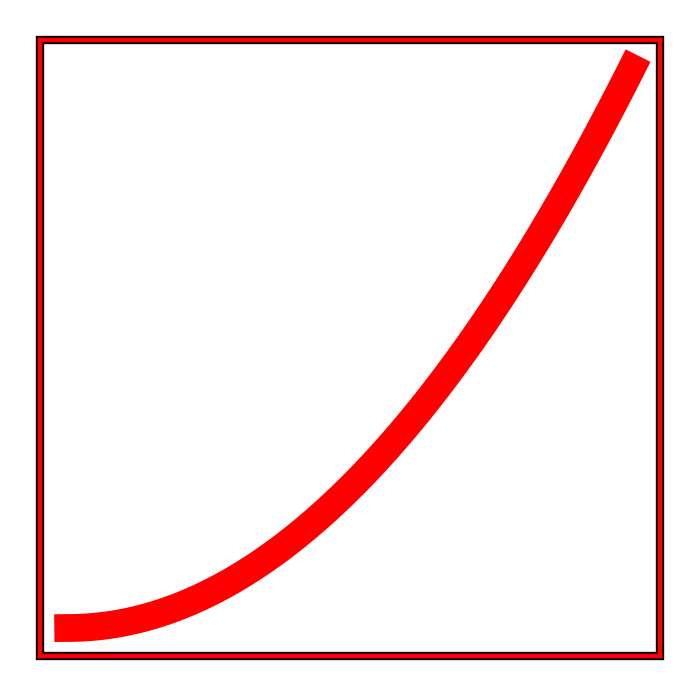}}
        \put(59,28){\includegraphics[width=0.037\textwidth]{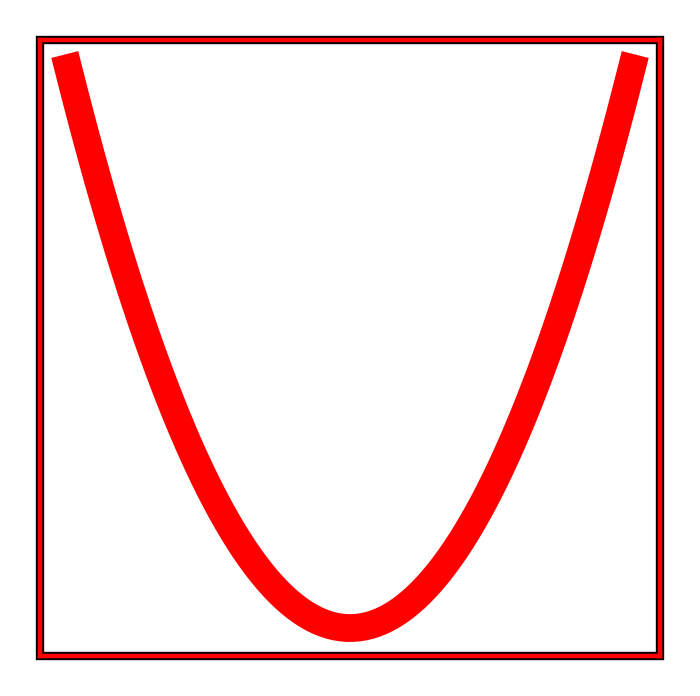}}
        \put(2,-18){$\hat{I}_1^{\bar{\vec{\Sigma}}}$}
        \put(28,-18){$\hat{J}_2^{\bar{\vec{\Sigma}}}$}
        \put(60,-18){$\hat{J}_3^{\bar{\vec{\Sigma}}}$}
        \put(32,107){$\omega^{\text{KAN}}$}
    \end{overpic}
    \vspace{1em}
    }
    \hfill

%% file: table_symb_synthetic.tex
\begin{tabularx}{\textwidth}{p{6.5cm}X}
\hline
Free energy &
$\psi(\bar{\vec{C}}_e) =
a \cdot \hat{I}_{1}^{\bar{\vec{C}}_e}
+ b \cdot \hat{I}_{2}^{\bar{\vec{C}}_e}
+ c \cdot \hat{I}_{3}^{\bar{\vec{C}}_e}$ \\
\hline
\end{tabularx}
\begin{tabularx}{\textwidth}{p{3.3cm}p{1.75cm}p{1.75cm}p{1.75cm}p{1.75cm}p{1.75cm}p{1.75cm}}
\multirow{3}{*}{Material parameters} &
\multicolumn{3}{c}{\textbf{Explicit}} & \multicolumn{3}{c}{\textbf{Implicit}} \\
\cmidrule(r{0.5em}){2-4} \cmidrule(l{.5em}){5-7}
& $a$ & $b$ & $c$ & $a$ & $b$ & $c$ \\
& 0.0 & 0.2194 & 0.5027 & 0.2722 & 0.1093 & 0.5255\\
\bottomrule
\end{tabularx}

\vspace{4mm}

\begin{tabularx}{\textwidth}{p{6.5cm}X}
\hline
Dissipation potential &
$\omega(\bar{\vec{\Sigma}}) =
a \cdot (\hat{I}_1^{\bar{\vec{\Sigma}}})^2
+ b \cdot (\hat{J}_2^{\bar{\vec{\Sigma}}})^2
+ c \cdot (\hat{J}_3^{\bar{\vec{\Sigma}}})^2$ \\
\hline
\end{tabularx}
\begin{tabularx}{\textwidth}{p{3.3cm}p{1.75cm}p{1.75cm}p{1.75cm}p{1.75cm}p{1.75cm}p{1.75cm}}
\multirow{3}{*}{Material parameters} &
\multicolumn{3}{c}{\textbf{Explicit}} & \multicolumn{3}{c}{\textbf{Implicit}} \\
\cmidrule(r{0.5em}){2-4} \cmidrule(l{.5em}){5-7}
& $a$ & $b$ & $c$ & $a$ & $b$ & $c$ \\
& 0.109 & 0.9394 & 0.4286 & 0.1045 & 1.2613 & 0.6397\\
\bottomrule
\end{tabularx}

%% file: standalone_iCKAN_results_VHB4910.tex
\centering
\small
\vspace{2em}
\begin{subfigure}{0.2\textwidth}
\centering
\begin{overpic}[height=\textwidth]{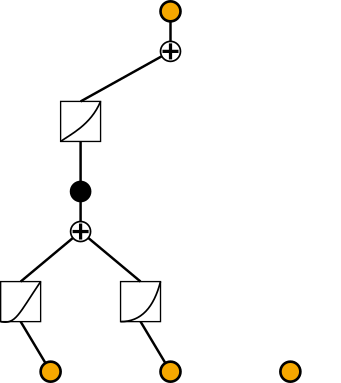}
    \put(-1,15){\includegraphics[height=0.15\textwidth]{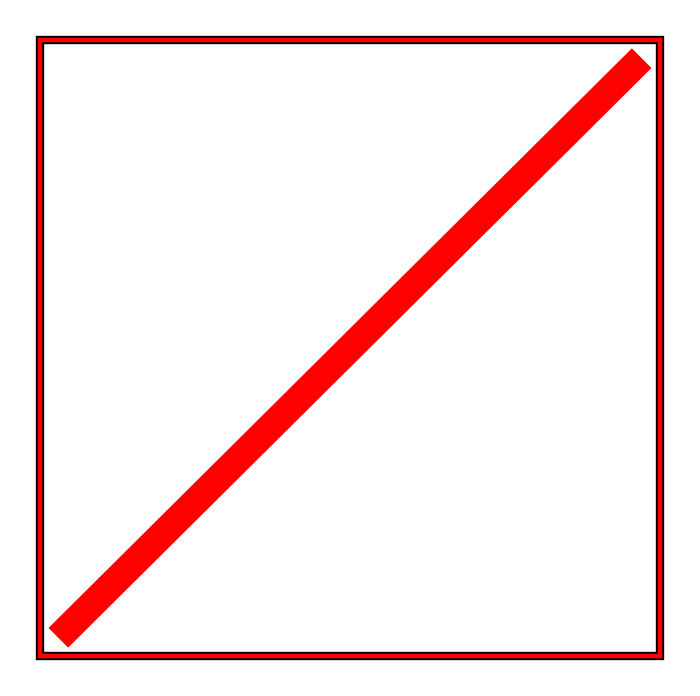}}
    \put(30,15){\includegraphics[height=0.15\textwidth]{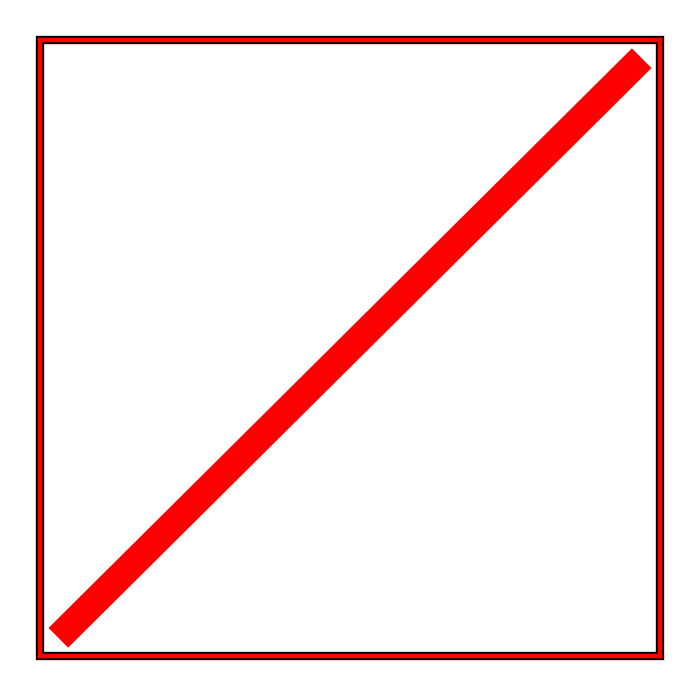}}
    \put(15,62){\includegraphics[height=0.15\textwidth]{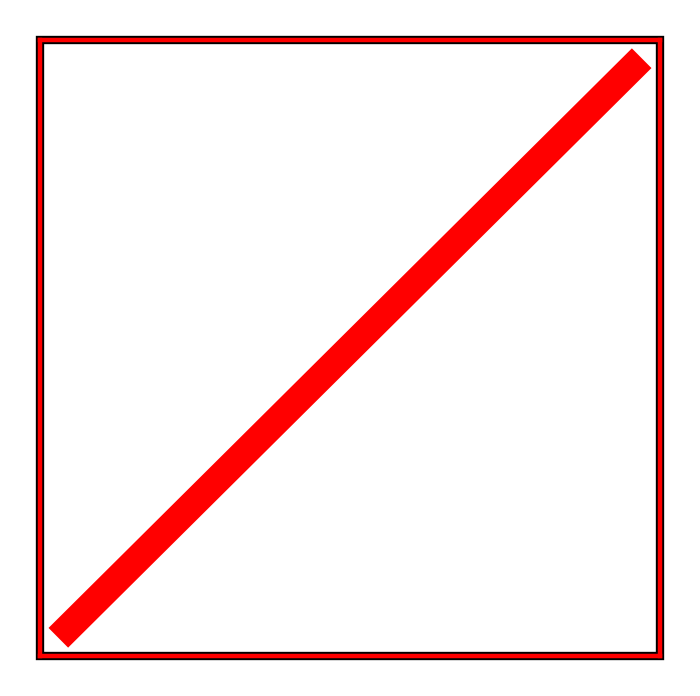}}
    \put(10,-14){$\hat{I}_1^{^1\bar{\vec{C}}_e}$}
    \put(42,-14){$\hat{I}_2^{^1\bar{\vec{C}}_e}$}
    \put(74,-14){$\hat{I}_3^{^1\bar{\vec{C}}_e}$}
    \put(38,107){$^1\psi^\text{KAN}$}
\end{overpic}
\vspace{1.5em}
\caption{$^1\psi(^1\bar{\vec{C}}_e)$}
\end{subfigure}
\hspace{1em}
\begin{subfigure}{0.2\textwidth}
\centering
\begin{overpic}[height=\textwidth]{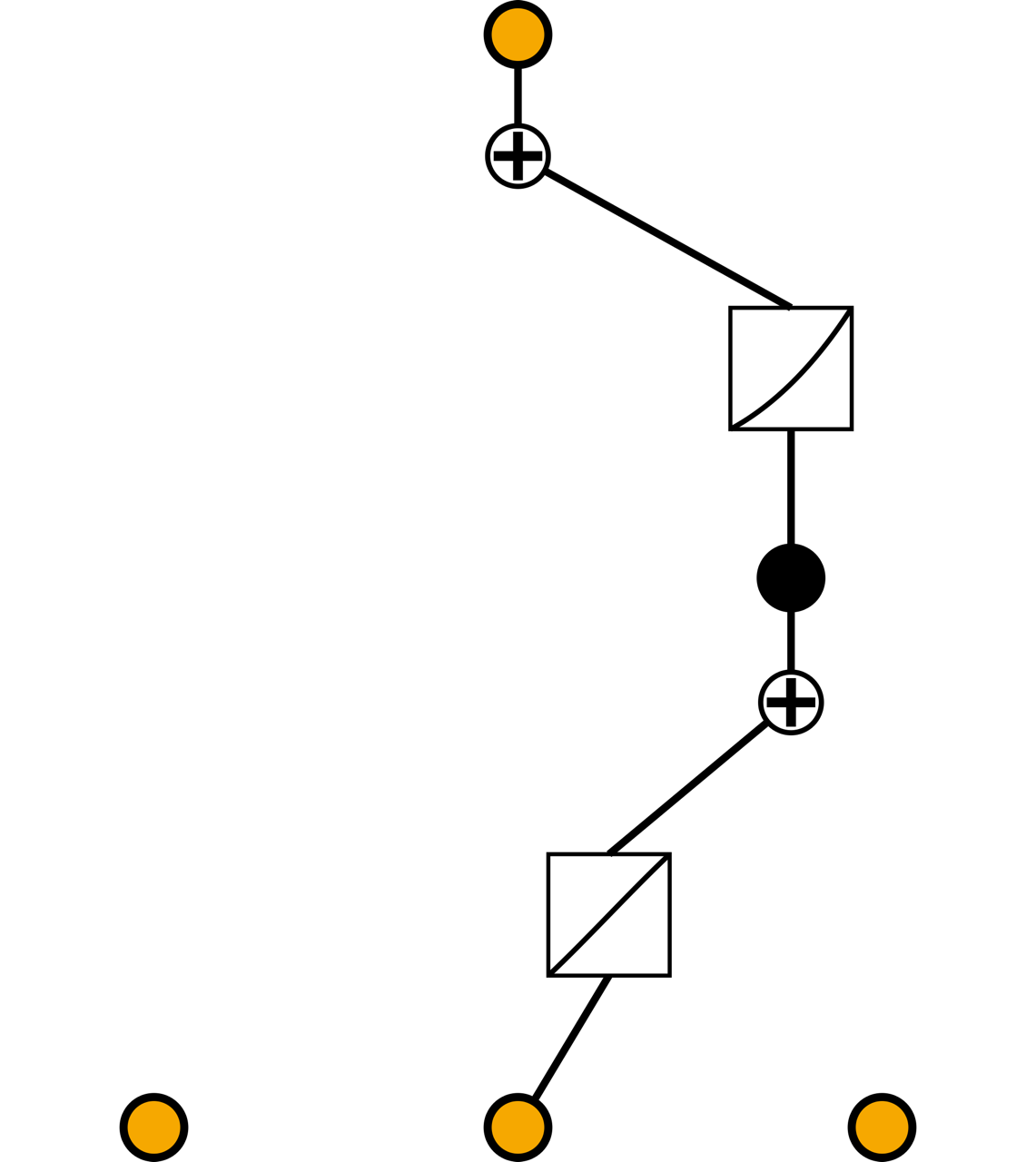}
    \put(45,15){\includegraphics[height=0.15\textwidth]{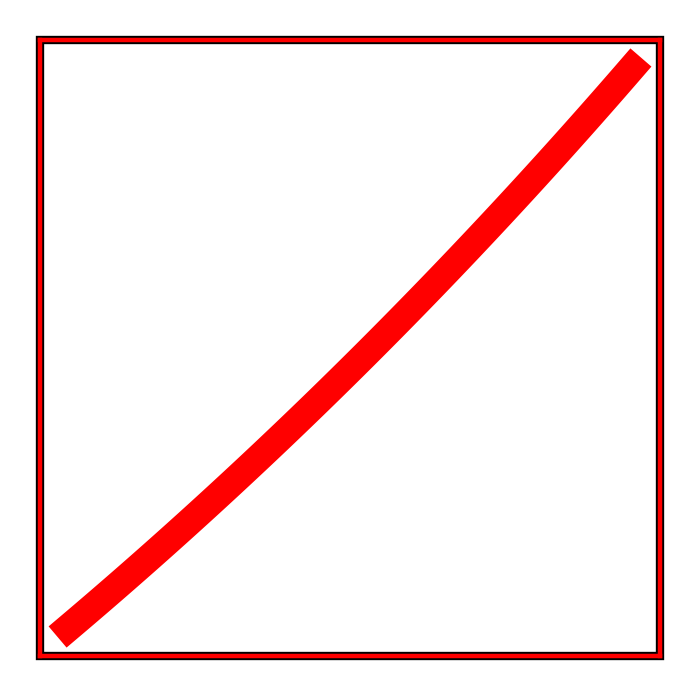}}
    \put(61,62){\includegraphics[height=0.15\textwidth]{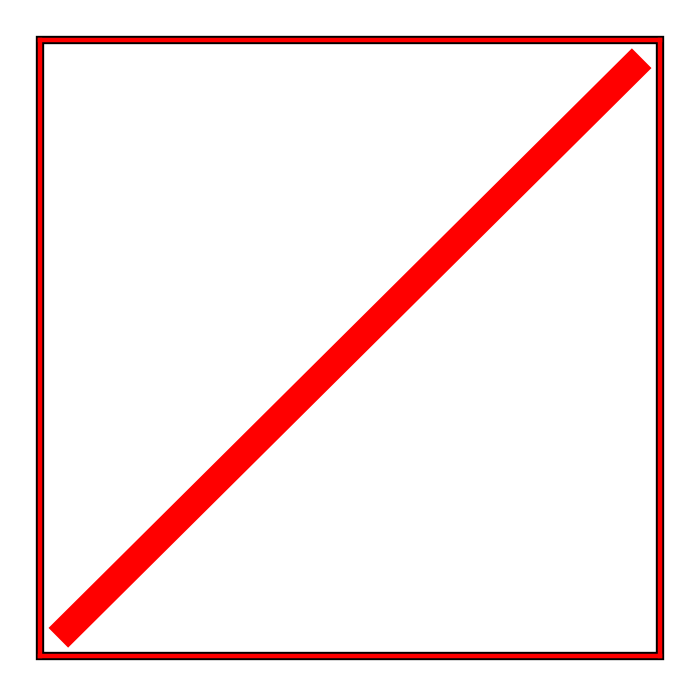}}
    \put(10,-14){$\hat{I}_1^{^1\bar{\vec{\Sigma}}}$}
    \put(42,-14){$\hat{J}_2^{^1\bar{\vec{\Sigma}}}$}
    \put(74,-14){$\hat{J}_3^{^1\bar{\vec{\Sigma}}}$}
    \put(38,107){$^1\omega^\text{KAN}$}
\end{overpic}
\vspace{1.5em}
\caption{$^1\omega(^1\bar{\vec{\Sigma}})$}
\end{subfigure}
\vspace{.5em}
\hspace{1em}
\begin{subfigure}{0.2\textwidth}
\centering
\begin{overpic}[height=\textwidth]{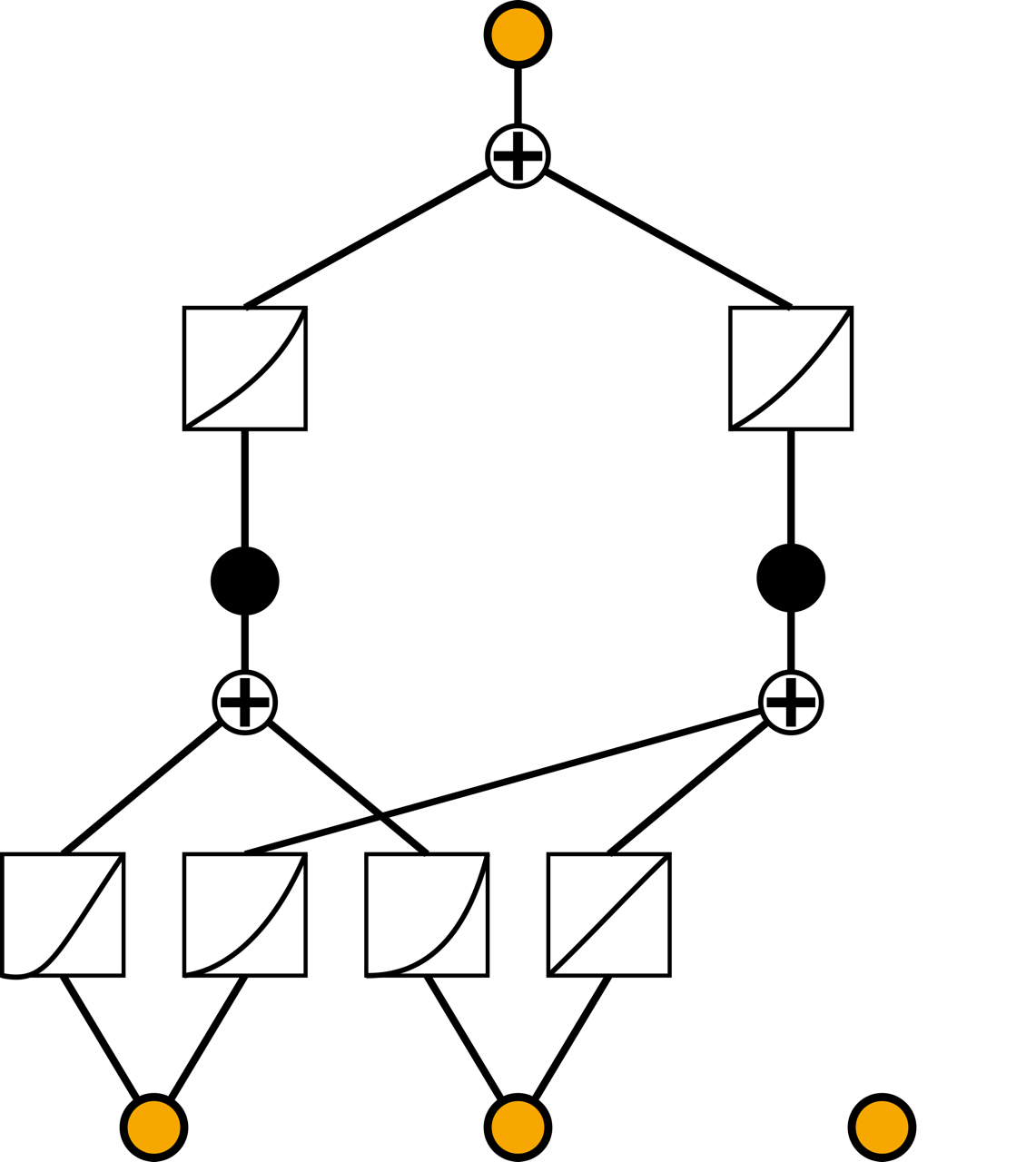}
    \put(-1,15){\includegraphics[height=0.15\textwidth]{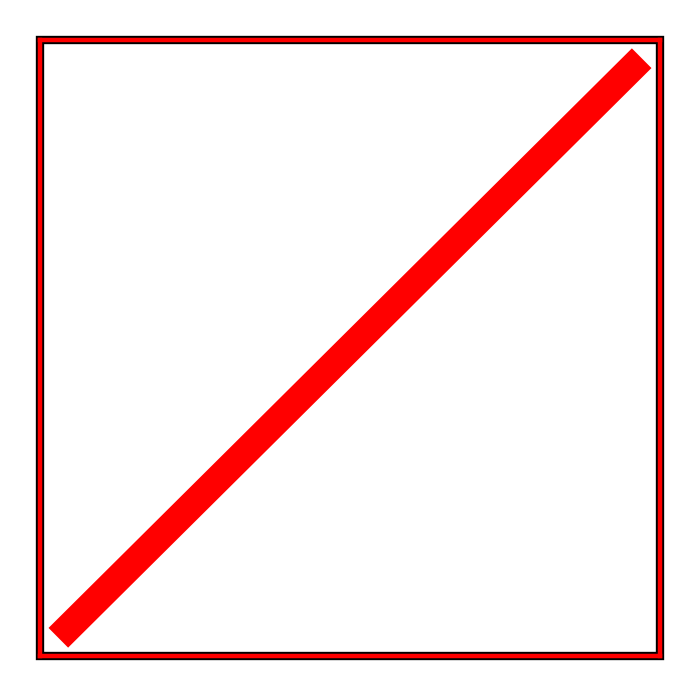}}
    \put(15,15){\includegraphics[height=0.15\textwidth]{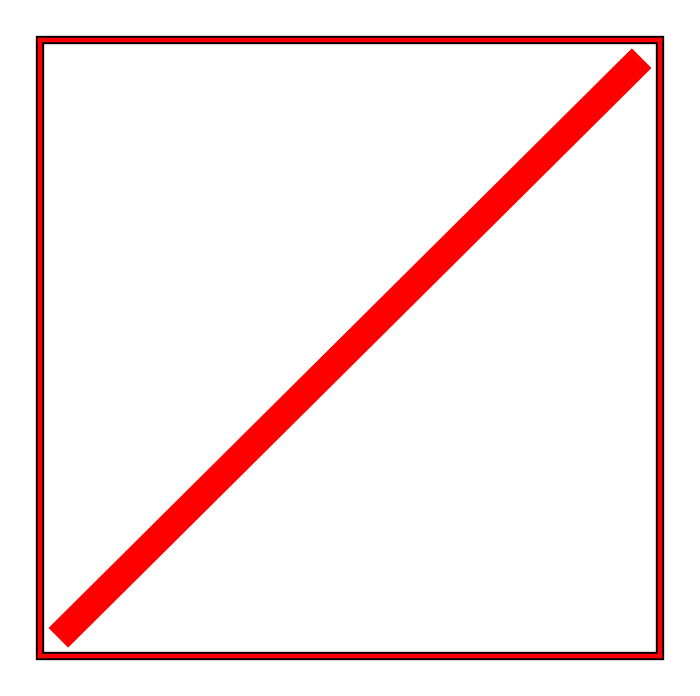}}
    \put(30,15){\includegraphics[height=0.15\textwidth]{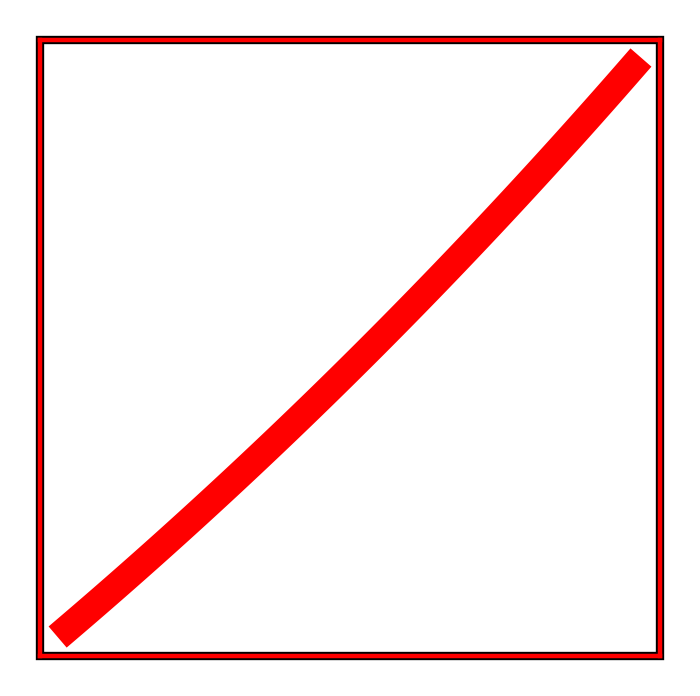}}
    \put(45,15){\includegraphics[height=0.15\textwidth]{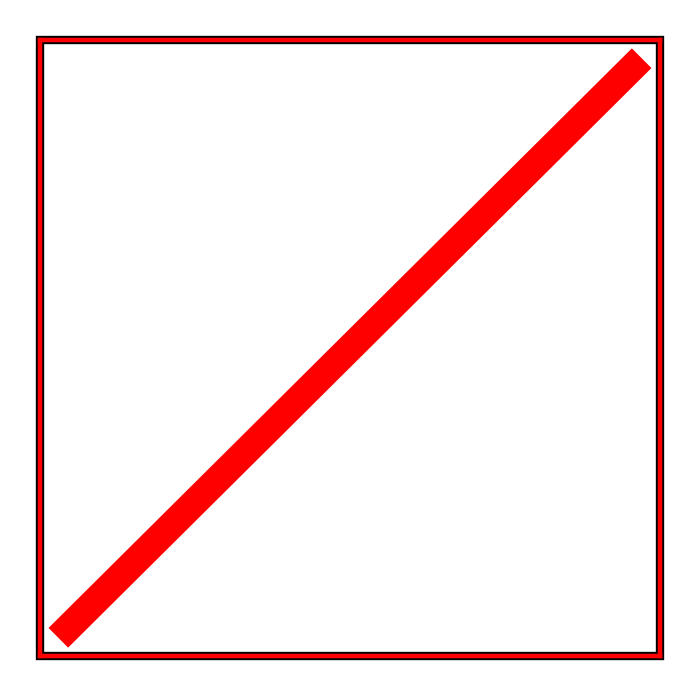}}
    \put(15,62){\includegraphics[height=0.15\textwidth]{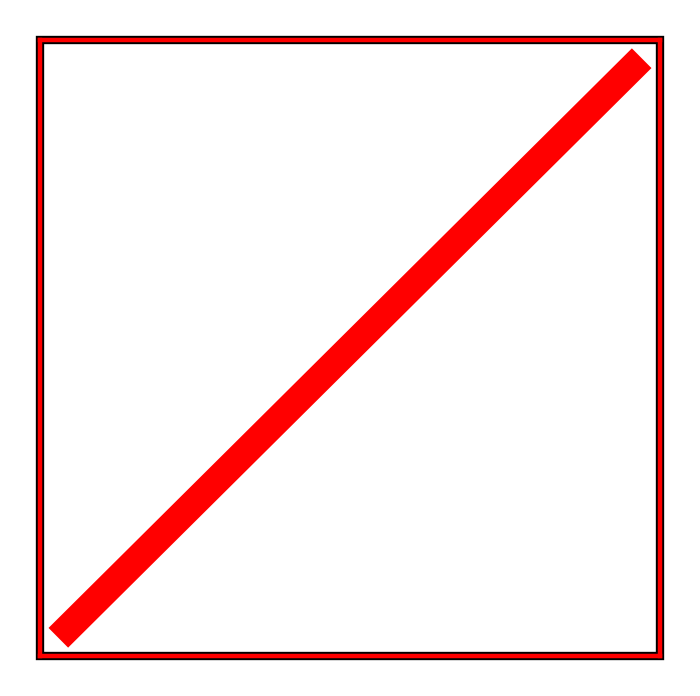}}
    \put(61,62){\includegraphics[height=0.15\textwidth]{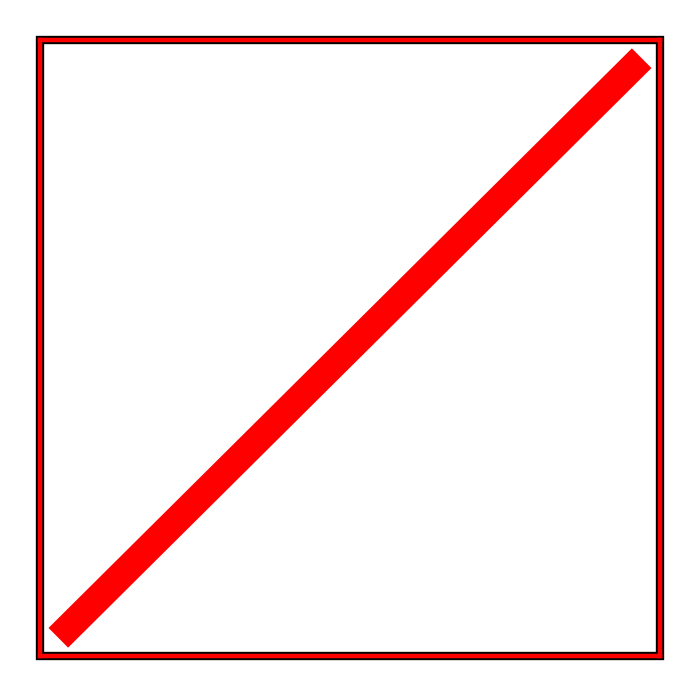}}
    \put(10,-14){$\hat{I}_1^{^2\bar{\vec{C}}_e}$}
    \put(42,-14){$\hat{I}_2^{^2\bar{\vec{C}}_e}$}
    \put(74,-14){$\hat{I}_3^{^2\bar{\vec{C}}_e}$}
    \put(38,107){$^2\psi^\text{KAN}$}
\end{overpic}
\vspace{1.5em}
\caption{$^2\psi(^2\bar{\vec{C}}_e)$}
\end{subfigure}
\hspace{.5em}
\begin{subfigure}{.3\textwidth}
    \centering\small
    \includegraphics[width=.9\textwidth]{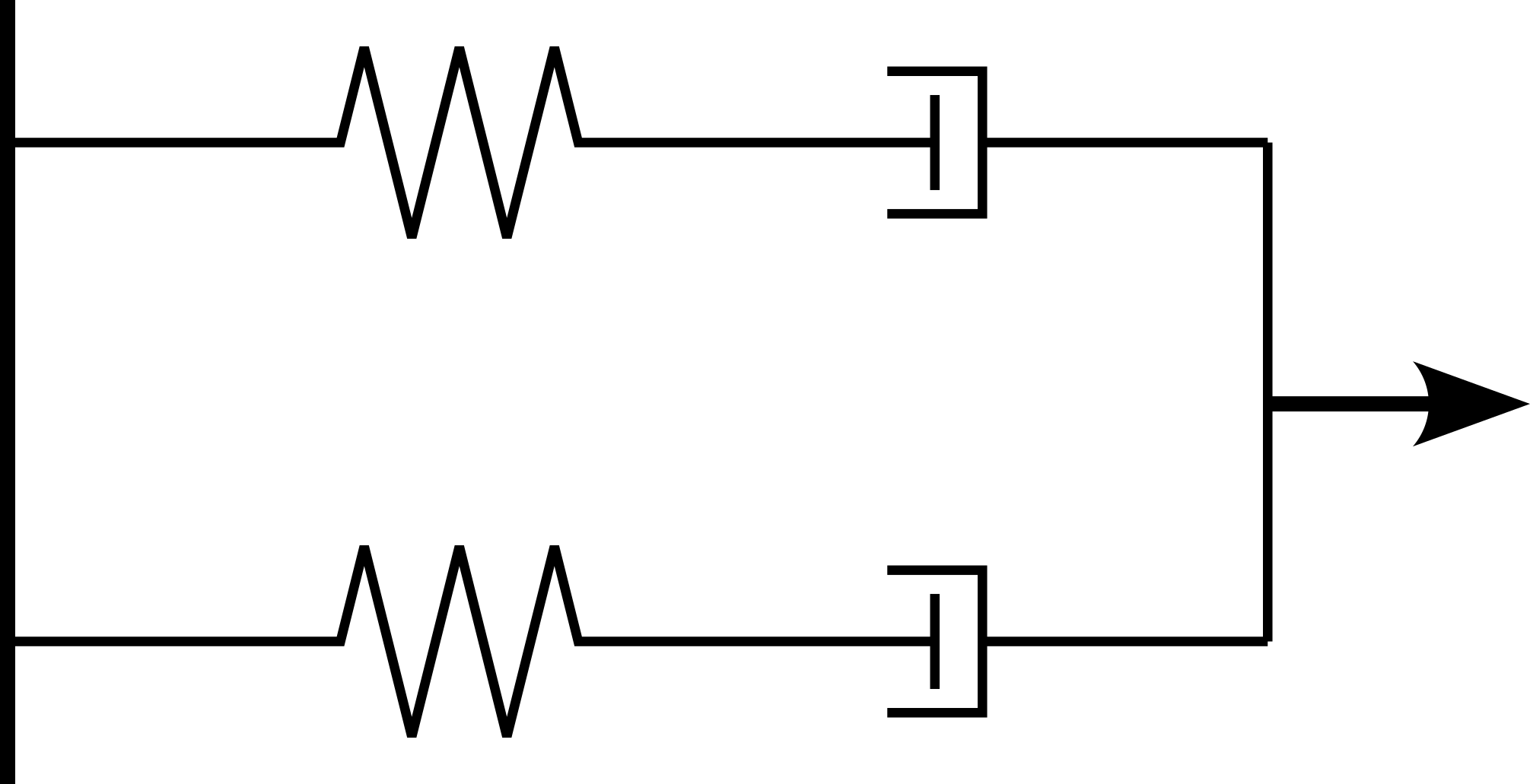}   
    \put(-105,68){$^1\psi(^1\bar{\vec{C}}_e)$}
    \put(-105,25){$^2\psi(^2\bar{\vec{C}}_e)$}
    \put(-60,68){$^1\omega(^1\bar{\vec{\Sigma}})$}
    \put(-60,25){$^2\omega(^2\bar{\vec{\Sigma}})$}
    \put(-4,55){$\vec{P}=$}
    \put(-17,40){${}^1\vec{P} + {}^2\vec{P}$}
    \vspace{4em}

\end{subfigure}

%% file: table_symb_VHB4910.tex
\begin{tabularx}{\textwidth}{p{4cm}X}
\hline
Free energy &
$^1\psi(^1\bar{\vec{C}}_e) = a \cdot \exp(b\cdot \hat{I}_{1}^{^1\bar{\vec{C}}_e})  + c \cdot \hat{I}_{2}^{^1\bar{\vec{C}}_e}  $ \\
\hline
\end{tabularx}
\begin{tabularx}{\textwidth}{p{4cm}p{1cm}p{1cm}p{1cm}}
\multirow{2}{*}{Material parameters} &
$a$ & $b$ &$c$ \\
 & 34.654 & 0.1204 & 3.7902 \\
\hline
\end{tabularx}

\vspace{4mm}

\begin{tabularx}{\textwidth}{p{4cm}X}
\hline
Dissipation potential &
$^1\omega(^1\bar{\vec{\Sigma}}) = \mathcal{H}(a \cdot \exp(b\cdot\exp(c\cdot\hat{J}_2^{^1\bar{\vec{\Sigma}}})))$ \\
\hline
\end{tabularx}
\begin{tabularx}{\textwidth}{p{4cm}p{1cm}p{1cm}p{1cm}}
\multirow{2}{*}{Material parameters} &
$a$ & $b$ & $c$ \\
 & 0.0195 & 0.0002 & 0.036 \\
\hline
\end{tabularx}

\vspace{4mm}

\begin{tabularx}{\textwidth}{p{4cm}X}
\hline
Free energy &
$^2\psi(^2\bar{\vec{C}}_e) = a \cdot \hat{I}_{1}^{^2\bar{\vec{C}}_e} + b\cdot\hat{I}_{2}^{^2\bar{\vec{C}}_e}+ c \cdot \exp(d\cdot\hat{I}_{1}^{^2\bar{\vec{C}}_e})$ \\
\hline
\end{tabularx}
\begin{tabularx}{\textwidth}{p{4cm}p{1cm}p{1cm}p{1cm}p{1cm}}
\multirow{2}{*}{Material parameters} &
$a$ & $b$ & $c$ & $d$\\
 & 2.5911 & 1.8997 & 68.73 & 0.0299 \\
\hline
\end{tabularx}

%% file: standalone_iCKAN_results_VHB4905.tex
\centering

\subfloat[Model of $^1\psi(^1\bar{\vec{C}}_e)$\label{subfig-1:psi1}]{
\vspace{.5em}
\begin{overpic}[height=0.2\textwidth]{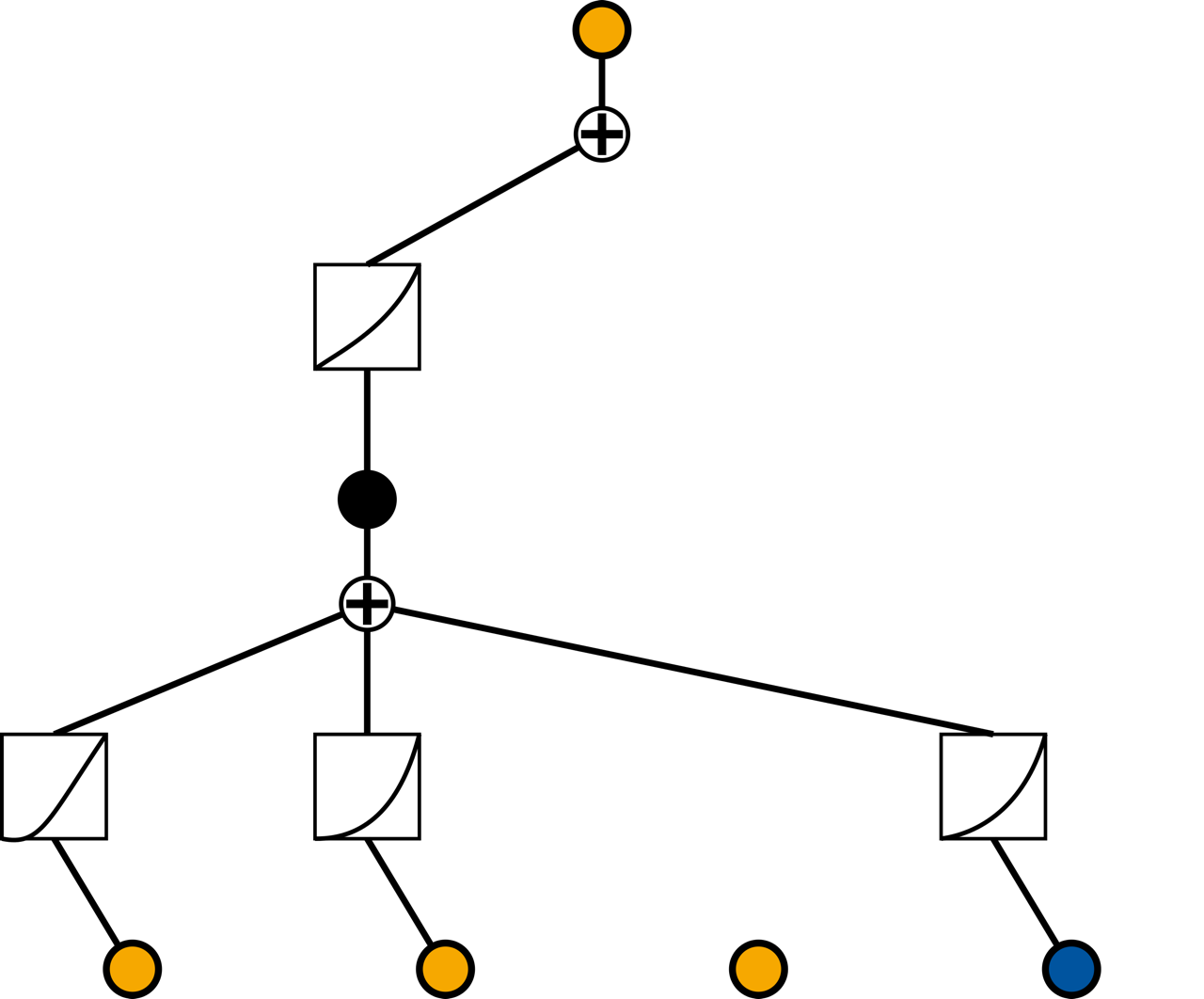}
    \put(-1,12){\includegraphics[height=0.03\textwidth]{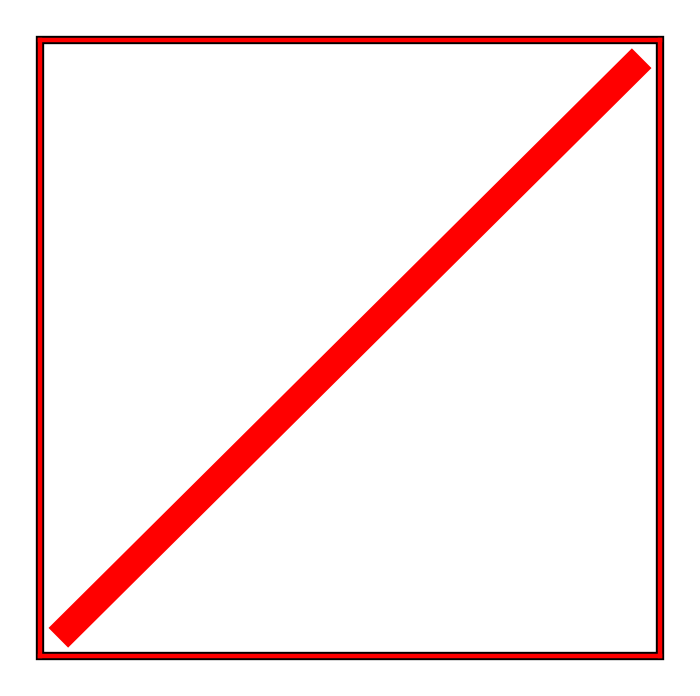}}
    \put(25,12){\includegraphics[height=0.03\textwidth]{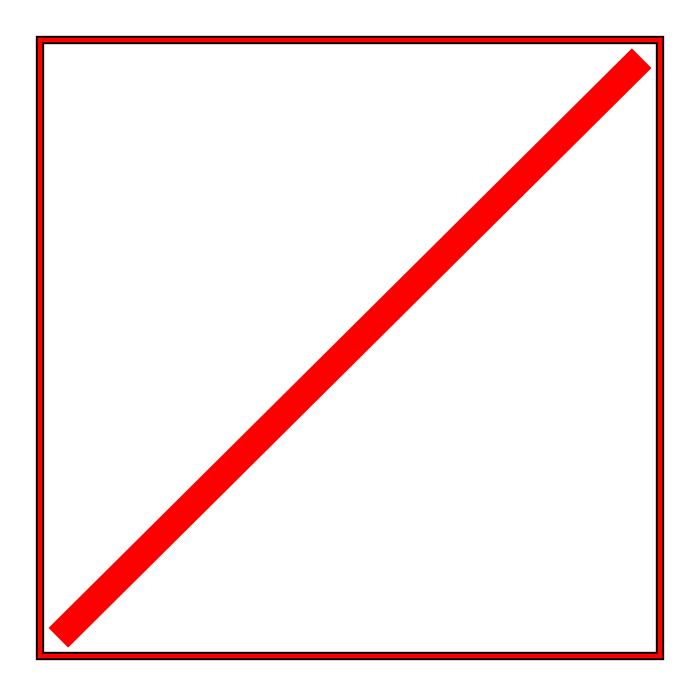}}
    \put(77,12){\includegraphics[height=0.03\textwidth]{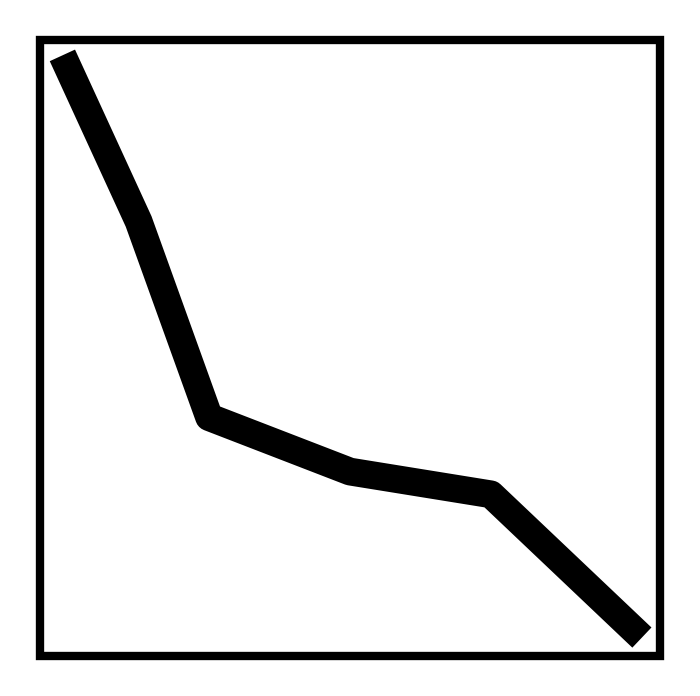}}
    \put(25,52){\includegraphics[height=0.03\textwidth]{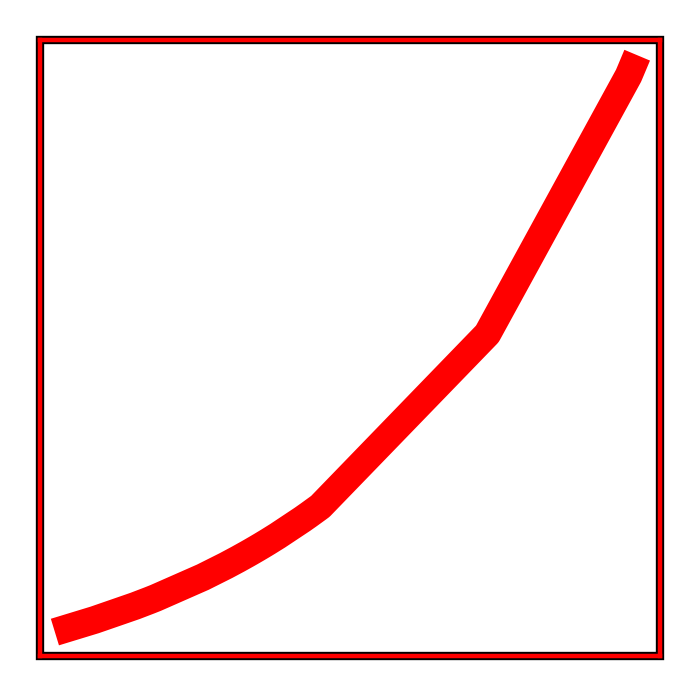}}
    \put(10,-13){$\hat{I}_1^{^1\bar{\vec{C}}_e}$}
    \put(33,-13){$\hat{I}_2^{^1\bar{\vec{C}}_e}$}
    \put(60,-13){$\hat{I}_3^{^1\bar{\vec{C}}_e}$}
    \put(88,-13){$\theta$}
    \put(45,88){$^1\psi^\text{KAN}$}
\end{overpic}
\vspace{1.2em}
}
\hfill
\subfloat[Model of $^1\omega(^1\bar{\vec{\Sigma}})$\label{subfig-2:omega1}]{
\vspace{.5em}
\begin{overpic}[height=0.2\textwidth]{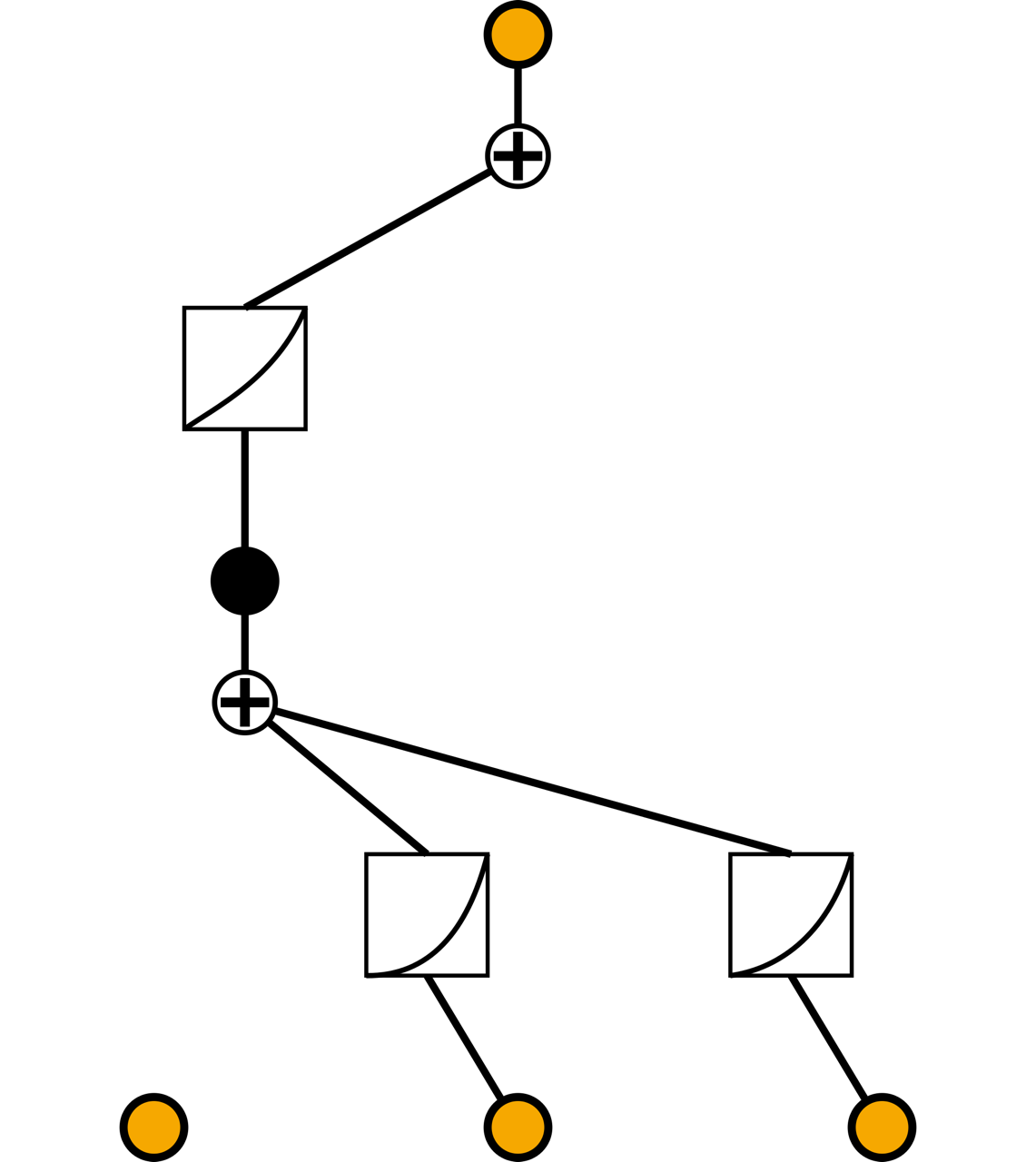}
    \put(30,15){\includegraphics[height=0.03\textwidth]{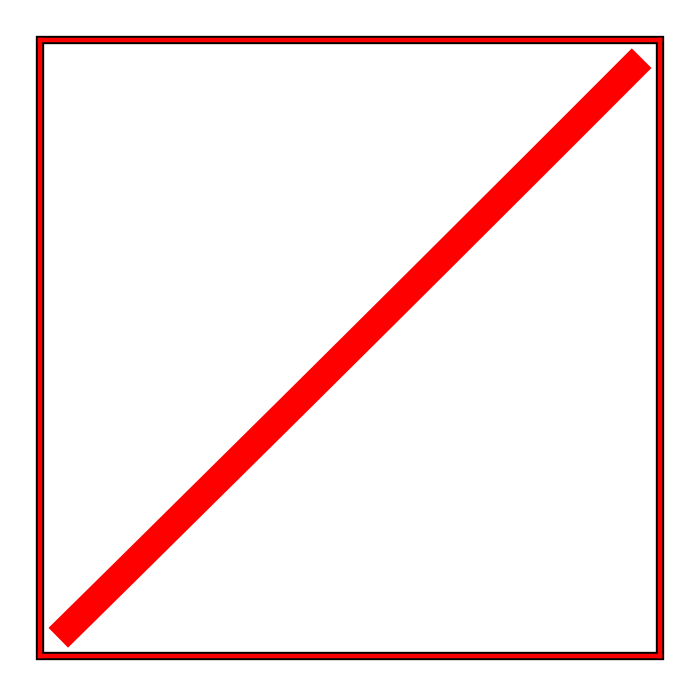}}
    \put(62,15){\includegraphics[height=0.03\textwidth]{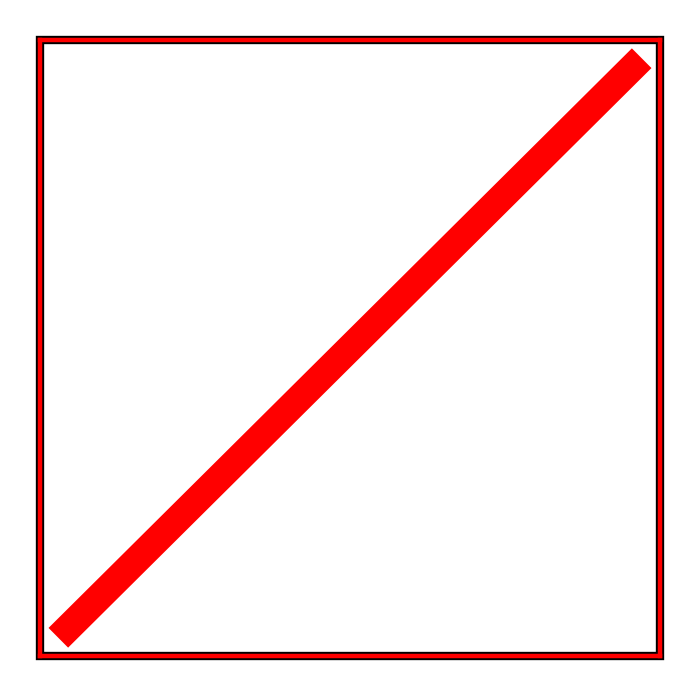}}
    \put(14.5,62){\includegraphics[height=0.03\textwidth]{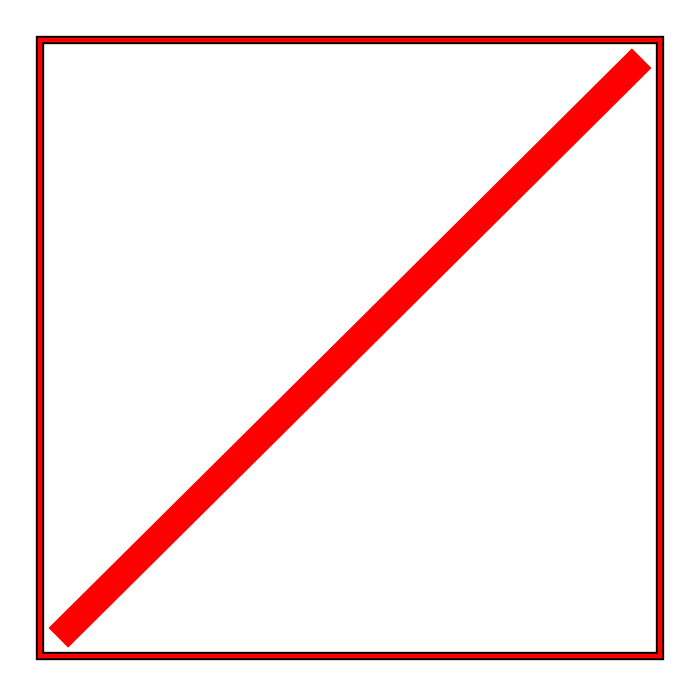}}
    \put(10,-15){$\hat{I}_1^{^1\bar{\vec{\Sigma}}}$}
    \put(42,-15){$\hat{J}_2^{^1\bar{\vec{\Sigma}}}$}
    \put(74,-15){$\hat{J}_3^{^1\bar{\vec{\Sigma}}}$}
    \put(38,105){$^1\omega^\text{KAN}$}
\end{overpic}
\vspace{1.2em}
}
\hfill
\vspace{3em}
\subfloat[Model of $^2\psi(^2\bar{\vec{C}}_e)$\label{subfig-3:psi2}]{
\vspace{.5em}
\begin{overpic}[height=0.2\textwidth]{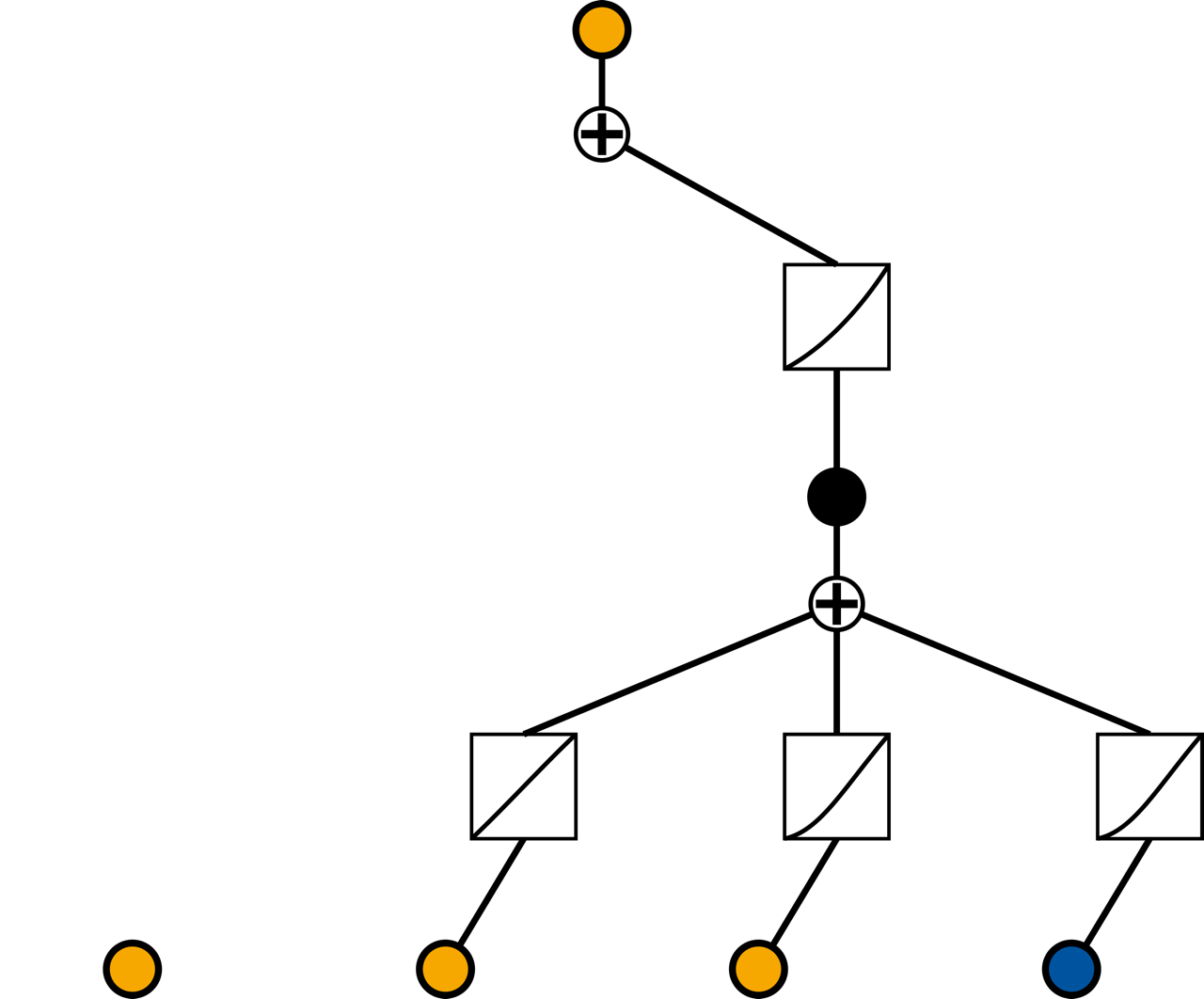}
    \put(38,12){\includegraphics[height=0.03\textwidth]{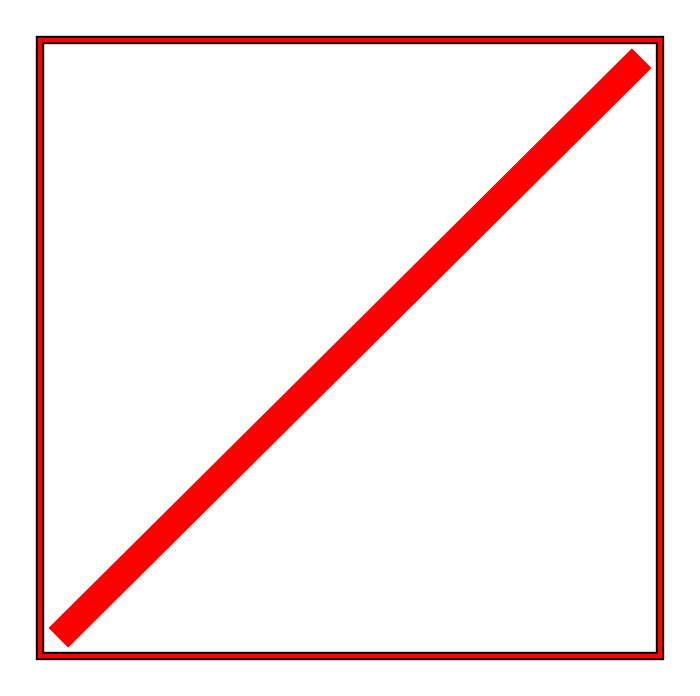}}
    \put(65,12){\includegraphics[height=0.03\textwidth]{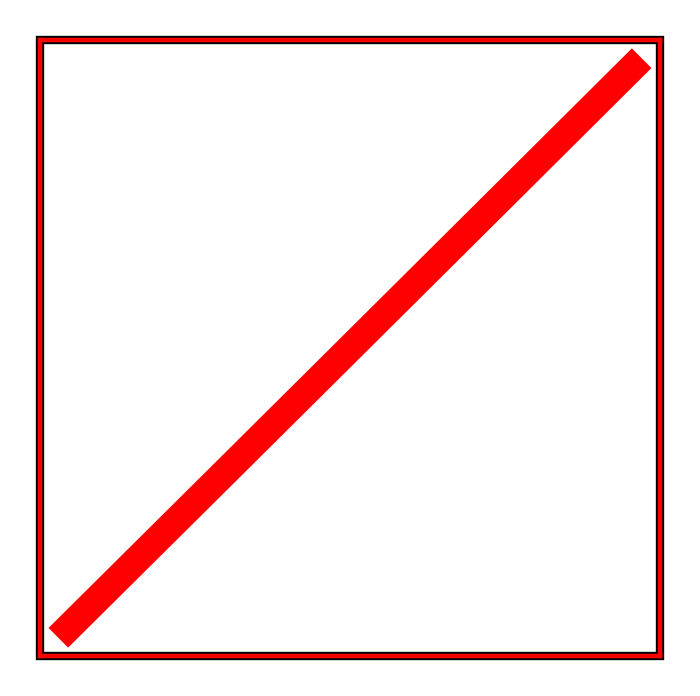}}
    \put(90,12){\includegraphics[height=0.03\textwidth]{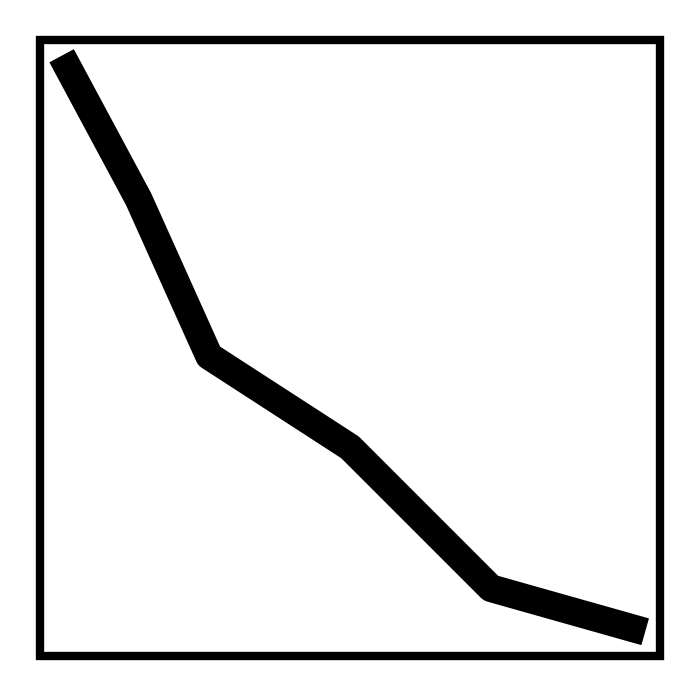}}
    \put(65,52){\includegraphics[height=0.03\textwidth]{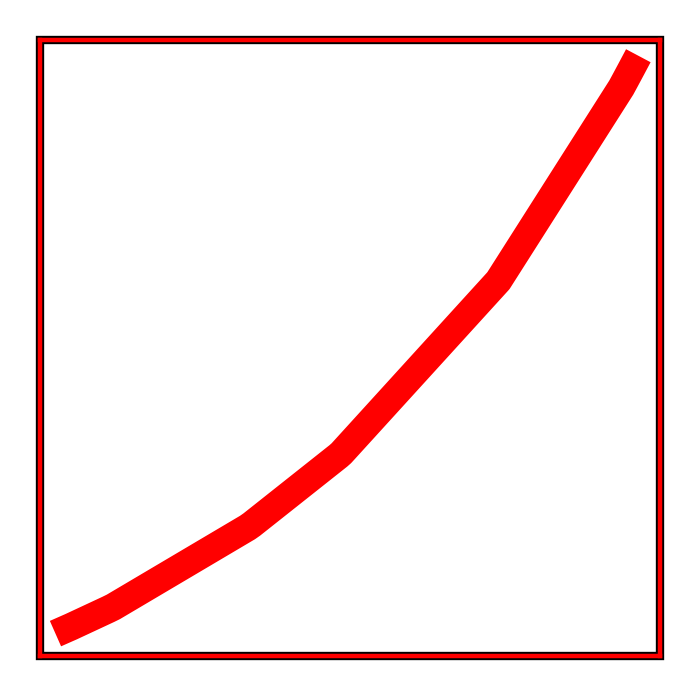}}
    \put(10,-13){$\hat{I}_1^{^2\bar{\vec{C}}_e}$}
    \put(33,-13){$\hat{I}_2^{^2\bar{\vec{C}}_e}$}
    \put(60,-13){$\hat{I}_3^{^2\bar{\vec{C}}_e}$}
    \put(88,-13){$\theta$}
    \put(45,88){$^2\psi^\text{KAN}$}
\end{overpic}
\vspace{1.2em}
}
\hfill
\subfloat[Model of $^2\omega(^2\bar{\vec{\Sigma}})$\label{subfig-2:omega2}]{
\vspace{.5em}
\begin{overpic}[height=0.2\textwidth]{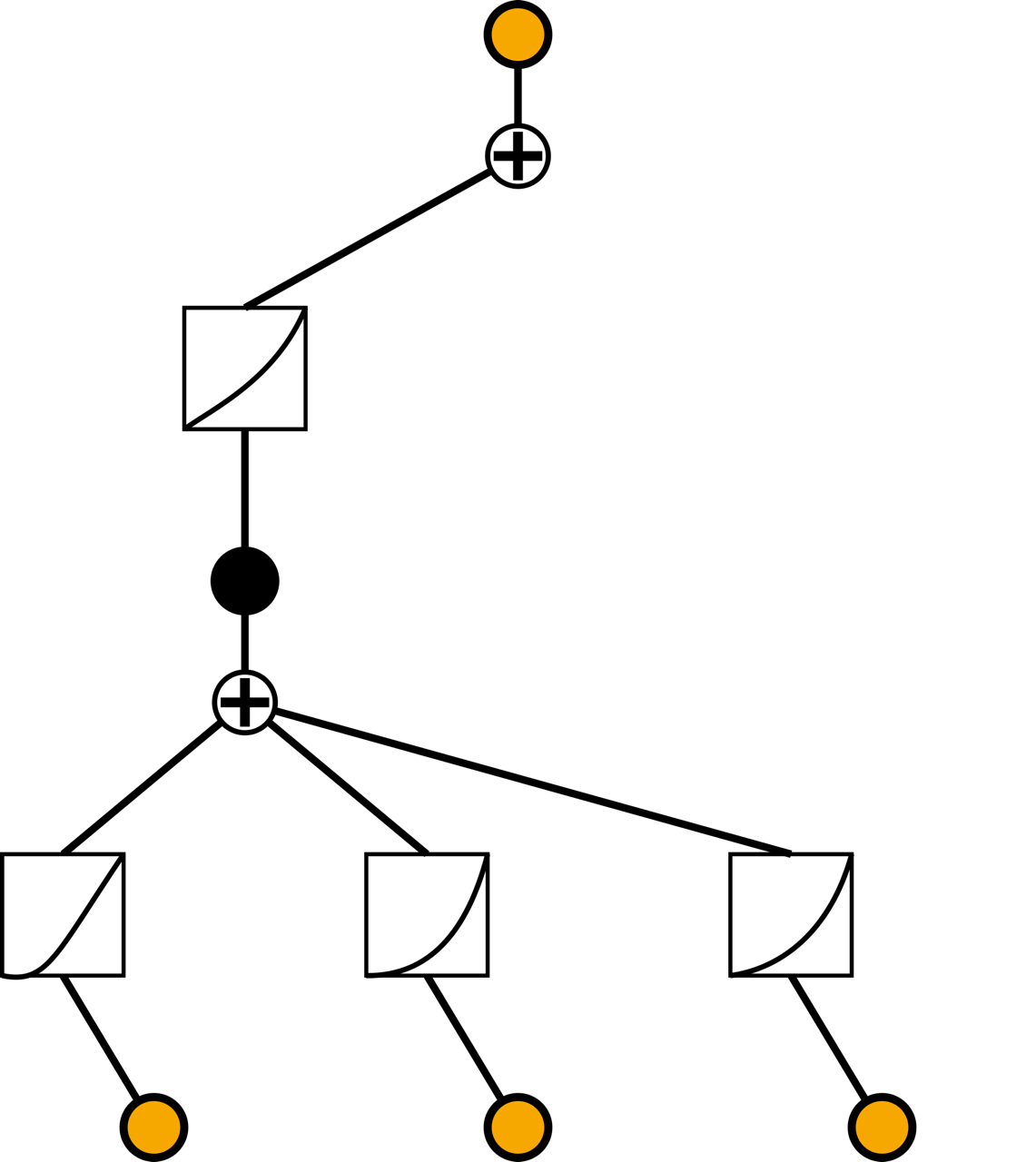}
    \put(-1,15){\includegraphics[height=0.03\textwidth]{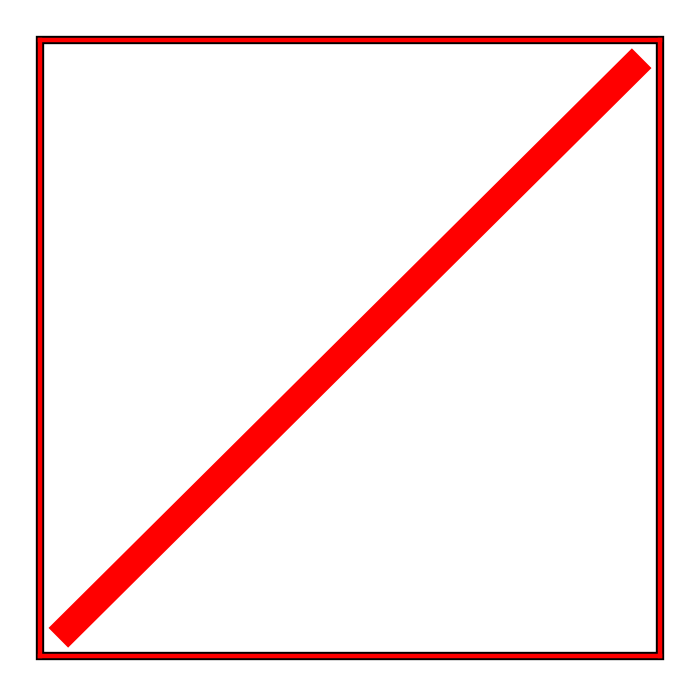}}
    \put(30,15){\includegraphics[height=0.03\textwidth]{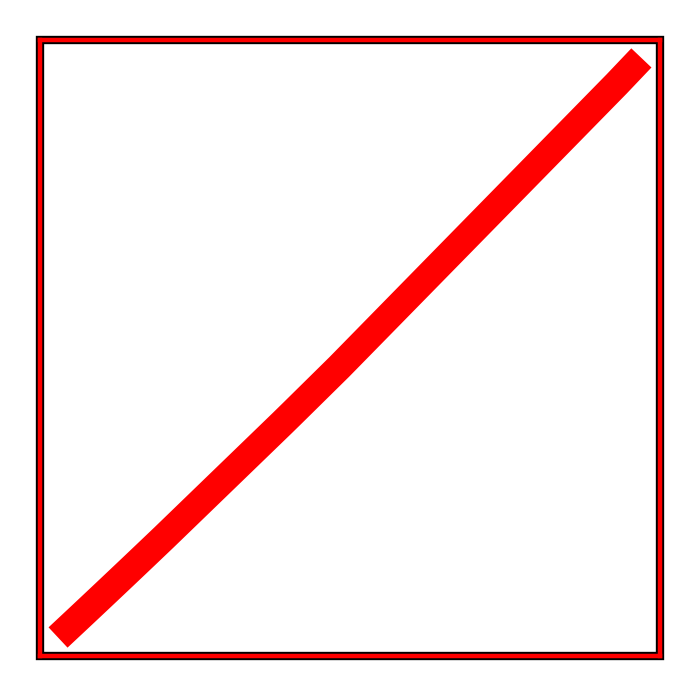}}
    \put(62,15){\includegraphics[height=0.03\textwidth]{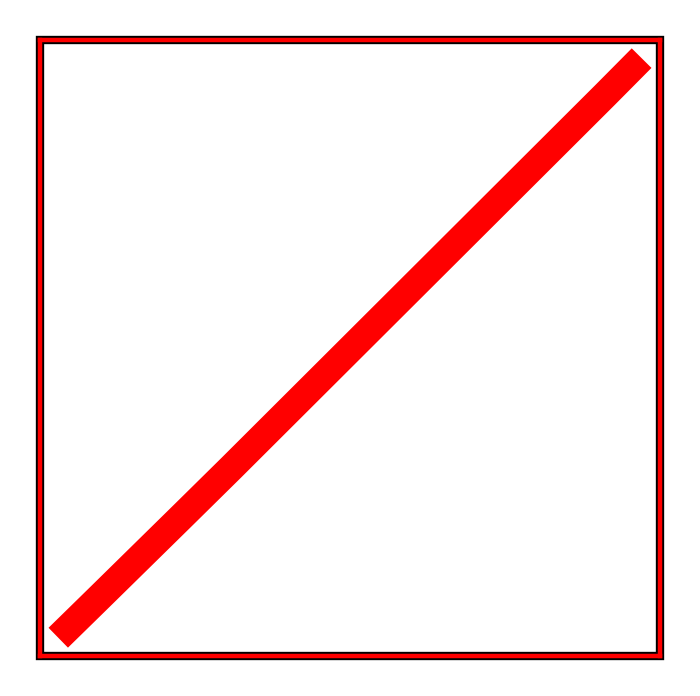}}
    \put(14.5,62){\includegraphics[height=0.03\textwidth]{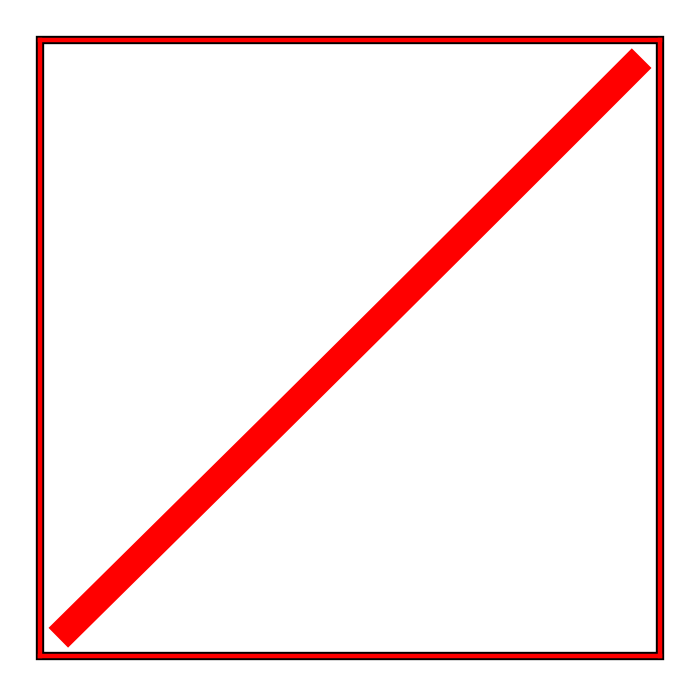}}
    \put(10,-15){$\hat{I}_1^{^2\bar{\vec{\Sigma}}}$}
    \put(42,-15){$\hat{J}_2^{^2\bar{\vec{\Sigma}}}$}
    \put(74,-15){$\hat{J}_3^{^2\bar{\vec{\Sigma}}}$}
    \put(38,105){$^2\omega^\text{KAN}$}
\end{overpic}
\vspace{1.2em}
}
\vspace{-3.5em}

%% file: table_symb_VHB4905.tex
\begin{tabularx}{\textwidth}{p{4cm}X}
\hline
Free energy &
$^1\psi(^1\bar{\vec{C}}_e) = a\cdot \exp(b\cdot \hat{I}_{1}^{^1\bar{\vec{C}}_e} + c\cdot \hat{I}_{2}^{^1\bar{\vec{C}}_e} +  ¹g(\theta) ) $ \\
\hline
\end{tabularx}
\begin{tabularx}{\textwidth}{p{4cm}p{1cm}p{1cm}p{1cm}}
\multirow{2}{*}{Material parameters} &
$a$ & $b$ & $c$\\
 & 0.0349 & 0.0855 & 0.0864\\\hline
Temperature dependency & \multicolumn{3}{l}{$^1g(\theta) = 10.0746\cdot(0.5239 - 0.0111 \cdot \theta)^3$}\\
\hline
\end{tabularx}

\vspace{4mm}

\begin{tabularx}{\textwidth}{p{4cm}X}
\hline
Dissipation potential &
$^1\omega(^1\bar{\vec{\Sigma}}) = \mathcal{H}(a \cdot \exp(b\cdot\exp(c \cdot \hat{J}_2^{^1\bar{\vec{\Sigma}}} )+ d\cdot\exp(e\cdot \hat{J}_3^{^1\bar{\vec{\Sigma}}})))$ \\
\hline
\end{tabularx}
\begin{tabularx}{\textwidth}{p{4cm}p{1cm}p{1cm}p{1cm}p{1cm}p{1cm}}
\multirow{2}{*}{Material parameters} &
$a$ & $b$ & $c$ &$d$&$e$\\
 & 31.918 & 0.1077 & 4.7 & 0.0382 & 10.0  \\
\hline
\end{tabularx}

\vspace{4mm}

\begin{tabularx}{\textwidth}{p{4cm}X}
\hline
Free energy &
$^2\psi(^2\bar{\vec{C}}_e) = a \cdot \exp (b\cdot\hat{I}_{2}^{^2\bar{\vec{C}}_e} + c\cdot\exp(d\cdot\hat{I}_{3}^{^2\bar{\vec{C}}_e}) +^2g(\theta))$ \\
\hline
\end{tabularx}
\begin{tabularx}{\textwidth}{p{4cm}p{1cm}p{1cm}p{1cm}p{1cm}}
\multirow{2}{*}{Material parameters} &
$a$ & $b$ & $c$ & $d$ \\
 & 0.319 & 0.0261 & 0.0921 & 10.0 \\\hline
Temperature dependency & \multicolumn{4}{l}{$^2g(\theta) = 
\begin{cases}
-1.7498 \cdot 10^{-4} \cdot \theta^{2} + 3.256 \cdot 10^{-3} \cdot \theta - 1.8712 \cdot 10^{-1}, & 0 \leq \theta < 40 \\[6pt]
-3.51 \cdot 10^{-5} \cdot \theta^{2} + 2.344 \cdot 10^{-3} \cdot \theta - 7.9019 \cdot 10^{-2}, & 40 \leq \theta \leq 80
\end{cases}$}\\
\hline
\end{tabularx}

\vspace{4mm}

\begin{tabularx}{\textwidth}{p{4cm}X}
\hline
Dissipation potential &
$^2\omega(^2\bar{\vec{\Sigma}}) = \mathcal{H}(a \cdot \exp(b\cdot \hat{I}_{1}^{^2\bar{\vec{\Sigma}}} + c\cdot\exp(d\cdot\hat{J}_2^{^2\bar{\vec{\Sigma}}})))$ \\
\hline
\end{tabularx}
\begin{tabularx}{\textwidth}{p{4cm}p{1cm}p{1cm}p{1cm}p{1cm}}
\multirow{2}{*}{Material parameters} &
$a$ & $b$ & $c$ &$d$\\
 & 0.0589 & 0.2543 & 0.0264 & 2.8013 \\
\hline
\end{tabularx}

%% file: AA_section_Conclusion.tex
\section{Discussion}
\label{sec:conclusion}

Developing accurate and interpretable constitutive models for materials undergoing nonlinear inelastic deformation, particularly at finite strains, remains a central challenge in computational mechanics. Although recent advances in data-driven modeling have accelerated model discovery, a persistent trade-off exists between predictive accuracy and physical interpretability. Thermodynamically consistent approaches often rely on highly flexible neural networks, which can lack transparency, or on symbolic methods with limited expressiveness \citep{holthusen2025generalized, flaschel2023automated2}. This work bridges this divide by introducing inelastic Constitutive Kolmogorov–Arnold Networks (iCKANs), which combine a generalized inelastic constitutive framework with the interpretability of Kolmogorov–Arnold Networks (KANs). This integration enables iCKANs to achieve predictive accuracy, interpretability, and robust extrapolation simultaneously. Their flexible architecture also facilitates symbolic extraction of learned potentials. Building upon previous KAN-based frameworks for hyperelastic materials, iCKANs establish a new paradigm for interpretable, data-driven discovery of inelastic material models \citep{abdolazizi2025constitutive, thakolkaran2025can}. 

\textbf{Interpretability.} The iCKAN framework shares the constitutive structure of inelastic Constitutive Neural Networks (iCANNs) \citep{holthusen2025generalized}, but differs fundamentally in its representation of the elastic and dissipation potentials. Whereas iCANNs employ custom-built convex neural networks, iCKANs use input-convex KANs, enabling a more compact and expressive formulation. The use of trainable B-spline activations enables iCKANs to achieve high expressiveness with substantially reduced network complexity. A major advantage of KANs is their built-in support for activation-level symbolic regression, which facilitates the direct extraction of interpretable, closed-form analytical expressions for both elastic and dissipation potentials. 
\color{author} 
In the present work, interpretability refers to the ability to express the constitutive relations as compact symbolic formulas whose individual terms can be inspected and related to established constitutive modeling concepts. While alternative definitions of interpretability exist, this notion is consistent with the objective of deriving human-readable constitutive laws that resemble classical phenomenological models.
\color{black} 
This capability preserves flexibility in data-driven discovery while yielding transparent, physically meaningful models with strong predictive performance \citep{abdolazizi2025constitutive}. Furthermore, iCKANs naturally accommodate additional non-mechanical features, such as temperature, whose effects can be extracted as explicit mathematical expressions, providing valuable physical insight.


\textbf{Extrapolability.} The combination of a thermodynamically consistent constitutive framework and convex network architecture ensures that the learned elastic and dissipation potentials satisfy essential physical constraints across their entire input domain. Importantly, these constraints are retained in the closed-form symbolic expressions obtained through the KAN symbolification process. This structure enhances robustness and enables physically consistent extrapolation beyond the training regime \citep{abdolazizi2025constitutive}. As demonstrated in synthetic benchmarks (Section~\ref{sec:results_synthetic}), models trained on limited deformation data can still provide reliable predictions for unseen, larger deformations using both the spline-based KAN and its symbolic form.

\textbf{Computational efficiency.} By promoting sparsity in the KAN architecture, the symbolic regression process yields concise, closed-form expressions for both elastic and dissipation potentials. These analytical formulations can be seamlessly integrated into finite element solvers, eliminating the need for network evaluation during simulation. As a result, the iCKAN approach introduces no additional computational overhead at deployment \citep{abdolazizi2025constitutive}.

\textbf{Choice of input representation.} This study employs the principal invariants of the elastic right Cauchy–Green tensor as inputs for the free energy, but alternative objective representations, such as principal stretches, are equally valid in the iCKAN formulation \citep{abdolazizi2025constitutive}. For the dissipation potential, invariant-based stress measures are used, with the option to augment input sets by combining stress and deformation invariants. The selection of input representation influences both predictive accuracy and interpretability, and a systematic comparison across different material classes is a promising direction for future research \citep{holthusen2025generalized}.

\textbf{Predictive performance.} The results demonstrate that iCKANs deliver strong predictive performance on both synthetic and experimental viscoelastic datasets, including cases with feature-dependent material behavior.
\color{author}
However, in certain scenarios, conventional neural-network-based approaches may still attain lower prediction errors \citep{abdolazizi2026thermodynamically,kalina2026physics}. While the universal approximation capability of KANs is not guaranteed under all conditions \citep{samadi2024smooth}, potential limitations of KAN-based potentials may exist. Yet, our investigations indicate that predictive accuracy can often be improved by increasing the number of branches in the iCKAN, choosing a wider network architecture, or relaxing the sparsity constraints imposed on the network. While this increases the expressive capacity of the model, it also leads to more complex symbolic expressions and thus reduces interpretability. In this sense, the iCKAN framework deliberately prioritizes interpretability and physical structure, which may result in modest accuracy trade-offs.

The proposed iCKAN framework intentionally balances predictive accuracy, thermodynamic consistency, and interpretability. Consequently, modest reductions in accuracy may occur when compared to highly expressive black-box neural-network models. 
\color{black}
Closing this gap remains an open challenge, motivating further work on training strategies, architectural optimization, and advanced regularization techniques to enhance predictive performance while retaining interpretability.

\textbf{Limitations and future work.} While this work primarily addresses viscoelastic material behavior, the iCKAN framework is broadly applicable to other classes of inelasticity, including plasticity, damage, growth and remodeling \citep{boes2024accounting,holthusen2025generalized,holthusen2025automated}. Extensions to anisotropic materials and structural-scale implementations are also feasible \citep{holthusen2026complement,kalina2025neural,thakolkaran2025can}. These directions open new possibilities for feature-aware material characterization and data-driven design.

From a numerical standpoint, KANs require the specification of spline grid ranges, which makes initialization more involved than in standard multilayer perceptrons. Although the present approximation strategy proved effective, future research should focus on developing robust adaptive grids and training techniques. Additional opportunities include investigating partially input-convex architectures, advanced pruning and convexification strategies, and automated hyperparameter selection, potentially leveraging large language models or other AI-driven tools \citep{hospedales2021meta,tacke2025constitutive}. Efficient, standardized integration into finite element solvers for both forward and inverse boundary value problems represents another key step for broad adoption.

\textbf{Conclusion.} Reliable and interpretable constitutive models for inelastic materials under finite deformations are essential for fields where complex, feature-dependent responses arise, such as biomedical engineering. The iCKAN framework presented in this work provides a robust, physically consistent, and data-driven solution for automated discovery of inelastic material models. By incorporating input-convex Kolmogorov–Arnold Networks, the approach achieves a unique combination of flexibility, predictive accuracy, and transparent symbolic representations, advancing the state of the art in interpretable inelastic  material modeling.

%% file: Appendix_A.tex
\section{Appendix}
\label{app:A}

\color{author}
\subsection{Thermodynamic consistency}
\label{app:limit_analysis}

In the following, we analyze the reduced dissipation inequality
$\mathcal{D}(\bar{\vec{\Sigma}}) = \bar{\vec{\Sigma}} : \dfrac{\partial \omega}{ \bar{\vec{\Sigma}}}
$,
where the dissipation potential is given by $\omega
=
\omega\!\left(
I_1^{\bar{\vec{\Sigma}}},
\sqrt{J_2^{\bar{\vec{\Sigma}}}},
\sqrt[3]{J_3^{\bar{\vec{\Sigma}}}}
\right)$.
By the chain rule, the dissipation can be written as
\begin{equation}
\mathcal D(\bar{\vec{\Sigma}}) =
\bar{\vec{\Sigma}}:
\Bigg[
\frac{\partial\omega}{\partial I_1}\,\vec I
+
\frac{\partial\omega}{\partial\sqrt{J_2}}
\frac{\partial\sqrt{J_2}}{\partial \bar{\vec{\Sigma}}} 
+
\frac{\partial\omega}{\partial\sqrt[3]{J_3}}
\frac{\partial\sqrt[3]{J_3}}{\partial \bar{\vec{\Sigma}}}
\Bigg].
\end{equation}
As $\bar{\vec{\Sigma}} \in \sym(3)$, this can then be rewritten as
\begin{equation}
\mathcal D(\bar{\vec{\Sigma}}) =
\frac{\partial\omega}{\partial I_1} \bar{\vec{\Sigma}}:\vec I + 
\frac{\partial\omega}{\partial\sqrt{J_2}} \frac{\partial\sqrt{J_2}}{\partial \bar{\vec{\Sigma}}}  : \bar{\vec{\Sigma}}  + 
\frac{\partial\omega}{\partial\sqrt[3]{J_3}} \frac{\partial\sqrt[3]{J_3}}{\partial \bar{\vec{\Sigma}}} : \bar{\vec{\Sigma}} .
\label{eq:dissipation2}
\end{equation}
So simplicity, we denote 
\begin{equation}
    f_1(\bar{\vec{\Sigma}}) = I_1^{\bar{\vec{\Sigma}}} = \tr (\bar{\vec{\Sigma}}), \quad f_2(\bar{\vec{\Sigma}}) = \sqrt{J_2^{\bar{\vec{\Sigma}}}} = \sqrt{\frac{1}{2}\,\mathrm{tr}\!\left(\mathrm{dev}(\bar{\vec{\Sigma}})^2\right)}, \quad f_3(\bar{\vec{\Sigma}}) = \sqrt[3]{J_3^{\bar{\vec{\Sigma}}}} = \sqrt[3]{
\frac{1}{3}\,
\mathrm{tr}\!\left(\mathrm{dev}(\bar{\vec{\Sigma}})^3\right)
}\,.
\end{equation}
Since $f_2$ and $f_3$ are positively homogeneous of degree one, it follows from Euler's theorem for positively homogeneous functions that
\begin{equation}
    \frac{\partial f_2}{\partial \bar{\vec{\Sigma}}} : \bar{\vec{\Sigma}} = f_2(\bar{\vec{\Sigma}}), \quad \frac{\partial f_3}{\partial \bar{\vec{\Sigma}}} : \bar{\vec{\Sigma}} = f_3(\bar{\vec{\Sigma}})\,. 
\end{equation}
Thus, \autoref{eq:dissipation2} can be rewritten as
\begin{equation}
    \mathcal D(\bar{\vec{\Sigma}}) = 
\frac{\partial\omega}{\partial I_1} \tr(\bar{\vec{\Sigma}}) + 
\frac{\partial\omega}{\partial\sqrt{J_2}}\sqrt{J_2}  + 
\frac{\partial\omega}{\partial\sqrt[3]{J_3}} \sqrt[3]{J_3} .
\label{eq:dissipation3}
\end{equation}
Since $\omega$ is convex, nonnegative, and zero-valued at the origin, it attains its minimum at the origin. Therefore,
\begin{equation}
\frac{\partial\omega}{\partial \mathbf{z}} \cdot \mathbf{z} \geq 0, \quad \text{where} \quad \mathbf{z}={(I_1,\sqrt{J_2},\sqrt[3]{J_3})^T}.
\end{equation}
Therefore, the \autoref{eq:dissipation3} is fulfilled.

\subsubsection{Limit analysis} 
However, the above analysis relies on the assumption that the derivatives of $f_2$ and $f_3$ are well-defined for all $\bar{\vec{\Sigma}} \in \sym(3)$. In particular, the derivatives of $f_2$ and $f_3$ are not defined at $\bar{\vec{\Sigma}} = \vec{0}$. To ensure that the dissipation remains non-negative even at $\bar{\vec{\Sigma}} = \vec{0}$, it is necessary to analyze the limit behavior of the terms involving $f_2$ and $f_3$ as $\bar{\vec{\Sigma}}$ approaches zero. Specifically, we need to ensure that
\begin{equation}
    \frac{\partial\sqrt{J_2}}{\partial \bar{\vec{\Sigma}}}  : \bar{\vec{\Sigma}}  \geq 0, \quad \frac{\partial\sqrt[3]{J_3}}{\partial \bar{\vec{\Sigma}}} : \bar{\vec{\Sigma}}  \geq 0\,.
\end{equation}
The following proposition investigates the corresponding limit behavior.
\begin{proposition}
Let
\begin{equation}
f_2(\bar{\vec{\Sigma}})
=
\sqrt{\frac{1}{2}\,\mathrm{tr}\!\left(\mathrm{dev}(\bar{\vec{\Sigma}})^2\right)},
\end{equation}
where
\begin{equation}
\mathrm{dev}(\bar{\vec{\Sigma}})
=
\bar{\vec{\Sigma}}
-\frac{1}{3}\mathrm{tr}(\bar{\vec{\Sigma}})\vec{I}.
\end{equation}
For every direction $\bar{\vec{\Sigma}}\neq\vec{0}$,
\begin{equation}
\frac{\partial f_2}{\partial \bar{\vec{\Sigma}}}
=
\frac{\mathrm{dev}(\bar{\vec{\Sigma}})}
{\sqrt{2\,\mathrm{tr}\!\left(\mathrm{dev}(\bar{\vec{\Sigma}})^2\right)}}.
\end{equation}
Then
\begin{equation}
\lim_{\bar{\vec{\Sigma}}\rightarrow\vec{0}}
\bar{\vec{\Sigma}}:\frac{\partial f_2}{\partial \bar{\vec{\Sigma}}}
=
0.
\end{equation}
\end{proposition}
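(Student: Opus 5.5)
The plan is to compute the double contraction in closed form, observe that it reduces to $f_2$ itself (as predicted by Euler's theorem for degree-one homogeneity), and then bound $f_2$ by the norm of $\bar{\vec{\Sigma}}$.

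\textbf{Step 1: the deviator is orthogonal to the identity.} Since $\vec{I}:\mathrm{dev}(\bar{\vec{\Sigma}})=\mathrm{tr}(\mathrm{dev}(\bar{\vec{\Sigma}}))=0$, we get
\begin{equation*}
\bar{\vec{\Sigma}}:\mathrm{dev}(\bar{\vec{\Sigma}})
=\Bigl(\mathrm{dev}(\bar{\vec{\Sigma}})+\tfrac13\mathrm{tr}(\bar{\vec{\Sigma}})\vec{I}\Bigr):\mathrm{dev}(\bar{\vec{\Sigma}})
=\mathrm{tr}\bigl(\mathrm{dev}(\bar{\vec{\Sigma}})^2\bigr).
\end{equation*}
Here symmetry of $\bar{\vec{\Sigma}}$ is used to identify $\vec{A}:\vec{A}$ with $\mathrm{tr}(\vec{A}^2)$.

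\textbf{Step 2: the contraction equals $f_2$.} Inserting the stated gradient, wherever $\mathrm{dev}(\bar{\vec{\Sigma}})\neq\vec{0}$ we obtain
\begin{equation*}
\bar{\vec{\Sigma}}:\frac{\partial f_2}{\partial \bar{\vec{\Sigma}}}
=\frac{\mathrm{tr}\bigl(\mathrm{dev}(\bar{\vec{\Sigma}})^2\bigr)}{\sqrt{2\,\mathrm{tr}\bigl(\mathrm{dev}(\bar{\vec{\Sigma}})^2\bigr)}}
=\sqrt{\tfrac12\,\mathrm{tr}\bigl(\mathrm{dev}(\bar{\vec{\Sigma}})^2\bigr)}
=f_2(\bar{\vec{\Sigma}}).
\end{equation*}
This is exactly the Euler identity quoted before the proposition, now verified directly. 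In passing, I would check that the quoted gradient is correct: $\partial f_2/\partial\bar{\vec{\Sigma}}=\mathrm{dev}(\bar{\vec{\Sigma}})/(2f_2)$, which coincides with the stated formula.

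\textbf{Step 3: bound $f_2$ and pass to the limit.} Since $\mathrm{dev}$ is the orthogonal projection onto traceless tensors, $\|\mathrm{dev}(\bar{\vec{\Sigma}})\|\le\|\bar{\vec{\Sigma}}\|$ in the Frobenius norm. Hence
\begin{equation*}
0\le f_2(\bar{\vec{\Sigma}})\le \tfrac{1}{\sqrt2}\|\bar{\vec{\Sigma}}\|\to 0
\quad\text{as }\bar{\vec{\Sigma}}\to\vec{0},
\end{equation*}
and the limit is zero. As a by-product, the contraction is nonnegative, which is the sign condition requested just before the proposition.

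\textbf{Main obstacle.} The only genuine subtlety is that the gradient formula is undefined not only at $\bar{\vec{\Sigma}}=\vec{0}$ but on the whole hydrostatic set $\mathrm{dev}(\bar{\vec{\Sigma}})=\vec{0}$, including nonzero $\bar{\vec{\Sigma}}$. I would handle this in one of two ways:
\begin{itemize}
\item take the limit along $\bar{\vec{\Sigma}}$ with nonzero deviator, which is where the formula, and hence the proposition's phrase ``for every direction'', applies; or
\item note that on the hydrostatic set $f_2$ is convex with minimum value $0$, so $\vec{0}$ lies in its subdifferential. Choosing this subgradient makes the contraction $0=f_2(\bar{\vec{\Sigma}})$ there as well.
\end{itemize}
With either choice the identity from Step 2 holds on a punctured neighbourhood of $\vec{0}$, and the bound from Step 3 finishes the argument.
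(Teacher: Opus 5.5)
Your proof is correct, and its core algebra is the paper's: both rest on $\vec{I}:\dev(\bar{\vec{\Sigma}})=0$, which gives $\bar{\vec{\Sigma}}:\dev(\bar{\vec{\Sigma}})=\tr\bigl(\dev(\bar{\vec{\Sigma}})^2\bigr)$. Where you differ is the limit step.

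The paper restricts to rays $\bar{\vec{\Sigma}}=t\vec{H}$ and obtains $|t|\sqrt{\tfrac12\tr(\dev(\vec{H})^2)}$. It then sends $t\to 0$ for each fixed $\vec{H}$ and infers the full limit from the arbitrariness of $\vec{H}$. Read literally, that establishes only radial limits, which in general do not imply a multivariable limit. It becomes complete once one observes that the prefactor is bounded uniformly over unit directions, by $1/\sqrt{2}$ for $\|\vec{H}\|=1$; this is exactly your Step~3.

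Your route is the pointwise identity $\bar{\vec{\Sigma}}:\partial f_2/\partial\bar{\vec{\Sigma}}=f_2(\bar{\vec{\Sigma}})$, followed by $f_2\le\|\bar{\vec{\Sigma}}\|/\sqrt{2}$ from the projection property of $\dev$. This gives the genuine limit with an explicit linear rate. It also verifies directly the Euler identity the paper only quotes, and yields nonnegativity at no extra cost. The paper's ray computation, for its part, makes the degree-one homogeneity visible through the factor $|t|$.

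You also catch a point the paper passes over. For hydrostatic directions $\vec{H}=p\vec{I}$ the paper's denominator $\sqrt{2t^2\tr(\dev(\vec{H})^2)}$ vanishes. So both the stated gradient formula and the ray computation are undefined there, despite the phrase ``for every direction $\bar{\vec{\Sigma}}\neq\vec{0}$''. Either of your remedies works. For the subgradient one you need not single out $\vec{0}$: at a hydrostatic point every subgradient of $f_2=\tfrac{1}{\sqrt{2}}\|\dev(\cdot)\|$ is traceless, so its contraction with $p\vec{I}$ vanishes regardless.
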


\begin{proof}
To investigate the limit at the origin, consider an arbitrary direction
$\vec{H}\in\mathrm{Sym}(3)\setminus\{\vec{0}\}$ 
and define
\begin{equation}
\bar{\vec{\Sigma}}=t\vec{H},
\qquad t\in\mathbb{R}.
\end{equation}
For $t\neq 0$, the linearity of the deviatoric operator yields
\begin{equation}
\mathrm{dev}(t\vec{H})
=
t\,\mathrm{dev}(\vec{H}).
\end{equation}
Hence,
\begin{equation}
\frac{\partial f_2}{\partial \bar{\vec{\Sigma}}}(t\vec{H})
=
\frac{t\,\mathrm{dev}(\vec{H})}
{\sqrt{2\,t^2\,\mathrm{tr}\!\left(\mathrm{dev}(\vec{H})^2\right)}}.
\end{equation}
Consequently,
\begin{equation}
(t\vec{H}):
\frac{\partial f_2}{\partial \bar{\vec{\Sigma}}}(t\vec{H})
=
\frac{t\vec{H}:t\,\mathrm{dev}(\vec{H})}
{\sqrt{2\,t^2\,\mathrm{tr}\!\left(\mathrm{dev}(\vec{H})^2\right)}}.
\end{equation}
Since
\begin{equation}
\vec{H}
=
\mathrm{dev}(\vec{H})
+
\frac{1}{3}\mathrm{tr}(\vec{H})\vec{I},
\end{equation}
and
\begin{equation}
\mathrm{tr}\!\left(\mathrm{dev}(\vec{H})\right)=0,
\end{equation}
it follows that
\begin{equation}
\vec{H}:\mathrm{dev}(\vec{H})
=
\mathrm{dev}(\vec{H}):\mathrm{dev}(\vec{H})
=
\mathrm{tr}\!\left(\mathrm{dev}(\vec{H})^2\right).
\end{equation}
Therefore,
\begin{equation}
(t\vec{H}):
\frac{\partial f_2}{\partial \bar{\vec{\Sigma}}}(t\vec{H})
=
\frac{|t|\,\mathrm{tr}\!\left(\mathrm{dev}(\vec{H})^2\right)}
{\sqrt{2\,\mathrm{tr}\!\left(\mathrm{dev}(\vec{H})^2\right)}}.
\end{equation}
After simplification,
\begin{equation}
(t\vec{H}):
\frac{\partial f_2}{\partial \bar{\vec{\Sigma}}}(t\vec{H})
=
|t|
\sqrt{\frac{1}{2}\,\mathrm{tr}\!\left(\mathrm{dev}(\vec{H})^2\right)}.
\end{equation}
Hence,
\begin{equation}
\lim_{t\to 0}
(t\vec{H}):
\frac{\partial f_2}{\partial \bar{\vec{\Sigma}}}(t\vec{H})
=
\sqrt{\frac{1}{2}\,\mathrm{tr}\!\left(\mathrm{dev}(\vec{H})^2\right)}
\lim_{t\to 0}|t|
=
0.
\end{equation}
Since the result holds for an arbitrary direction $\vec{H}$, we conclude that
\begin{equation}
\lim_{\bar{\vec{\Sigma}}\rightarrow\vec{0}}
\bar{\vec{\Sigma}}:\frac{\partial f_2}{\partial \bar{\vec{\Sigma}}}
=
0.
\end{equation}
\end{proof}

\textbf{Remark.} An analogous conclusion holds for
\begin{equation}
f_3(\bar{\vec{\Sigma}})
=
\left(
\frac{1}{3}\,
\mathrm{tr}\!\left(\mathrm{dev}(\bar{\vec{\Sigma}})^3\right)
\right)^{1/3},
\end{equation}
which is also positively homogeneous of degree one and continuous at
\(\bar{\vec{\Sigma}}=\vec{0}\). Consequently,
\begin{equation}
\lim_{\bar{\vec{\Sigma}}\to\vec{0}}
\bar{\vec{\Sigma}}:
\frac{\partial f_3}{\partial \bar{\vec{\Sigma}}}
=
0.
\end{equation}
The same result can alternatively be obtained from Euler's theorem for positively homogeneous functions.

Consequently, although the individual derivatives are not defined at
\(\bar{\vec{\Sigma}}=\vec{0}\), the corresponding dissipation contributions vanish in the limit.

\subsubsection{Modified square root and cubic root}
\label{app:modified_roots}
To ensure that the derivative of $f_2$ and $f_3$ is well-defined at $\bar{\vec{\Sigma}} = \vec{0}$, we consider modified versions of these functions that are continuously differentiable everywhere, i.e.,
\begin{equation}
    \tilde{f}_2(\bar{\vec{\Sigma}}) = \frac{J_2^{\bar{\vec{\Sigma}}}}{(J_2^{\bar{\vec{\Sigma}}} + \epsilon_2)^{1/2}}, \quad \tilde{f}_3(\bar{\vec{\Sigma}}) = \frac{J_3^{\bar{\vec{\Sigma}}}}{(J_3^{\bar{\vec{\Sigma}}} + \epsilon_3)^{2/3}}\,,
\end{equation}
where $\epsilon_2, \epsilon_3 > 0$ are small regularization parameters. With this modification, the derivatives of $\tilde{f}_2$ and $\tilde{f}_3$ are well-defined for all $\bar{\vec{\Sigma}} \in \sym(3)$, including at $\bar{\vec{\Sigma}} = \vec{0}$. This is shown in the following proposition.
\begin{proposition}
Let
\begin{equation}
\tilde{f}_2(\bar{\vec{\Sigma}})
=
\frac{J_2^{\bar{\vec{\Sigma}}}}
{\bigl(J_2^{\bar{\vec{\Sigma}}}+\epsilon\bigr)^{1/2}},
\qquad
\tilde{f}_3(\bar{\vec{\Sigma}})
=
\frac{J_3^{\bar{\vec{\Sigma}}}}
{\bigl(J_3^{\bar{\vec{\Sigma}}}+\epsilon\bigr)^{2/3}},
\end{equation}
with $\epsilon>0$. Then
\begin{equation}
\lim_{\bar{\vec{\Sigma}}\to\vec{0}}
\frac{\partial \tilde{f}_i}{\partial\bar{\vec{\Sigma}}}
=
\vec{0},
\qquad i=2,3.
\end{equation}
\end{proposition}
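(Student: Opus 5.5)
The plan is a direct chain-rule computation. Each $\tilde{f}_i$ is a scalar function of a single invariant, so its gradient factors into a scalar derivative times the known tensor derivative of that invariant from \autoref{eq:invariants_Sig}. I will show that the scalar factor stays bounded near the origin while the tensor factor tends to zero. Concretely, I introduce the univariate functions
\begin{equation}
g(x)=\frac{x}{(x+\epsilon)^{1/2}},\qquad h(x)=\frac{x}{(x+\epsilon)^{2/3}},
\end{equation}
so that $\tilde{f}_2=g(J_2^{\bar{\vec{\Sigma}}})$ and $\tilde{f}_3=h(J_3^{\bar{\vec{\Sigma}}})$.

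For $i=2$, the chain rule together with $\partial J_2^{\bar{\vec{\Sigma}}}/\partial\bar{\vec{\Sigma}}=\dev(\bar{\vec{\Sigma}})$ gives $\partial\tilde{f}_2/\partial\bar{\vec{\Sigma}}=g'(J_2^{\bar{\vec{\Sigma}}})\,\dev(\bar{\vec{\Sigma}})$. A direct differentiation yields
\begin{equation}
g'(x)=\frac{x/2+\epsilon}{(x+\epsilon)^{3/2}},
\end{equation}
which is continuous on $[0,\infty)$ with $g'(0)=\epsilon^{-1/2}$. Since $J_2^{\bar{\vec{\Sigma}}}\geq 0$ and $J_2^{\bar{\vec{\Sigma}}}\to 0$ as $\bar{\vec{\Sigma}}\to\vec{0}$, the scalar factor tends to $\epsilon^{-1/2}$. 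The tensor factor $\dev(\bar{\vec{\Sigma}})$ is linear and hence tends to $\vec{0}$. Their product therefore tends to $\vec{0}$.

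For $i=3$, I proceed in the same way, using $\partial J_3^{\bar{\vec{\Sigma}}}/\partial\bar{\vec{\Sigma}}=\dev(\dev(\bar{\vec{\Sigma}})^2)$. Differentiating $h$ gives
\begin{equation}
h'(x)=\frac{x/3+\epsilon}{(x+\epsilon)^{5/3}}.
\end{equation}
Unlike $J_2$, the invariant $J_3$ can be negative, so I first check that this expression is well defined. Since $|J_3^{\bar{\vec{\Sigma}}}|\leq C\|\bar{\vec{\Sigma}}\|^3$, the bound $|J_3^{\bar{\vec{\Sigma}}}|<\epsilon/2$ holds on a neighbourhood of the origin. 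On that neighbourhood $x+\epsilon>\epsilon/2>0$, so $h$ is smooth there and $h'(J_3^{\bar{\vec{\Sigma}}})\to h'(0)=\epsilon^{-2/3}$. The tensor factor is quadratic in $\bar{\vec{\Sigma}}$ and satisfies $\|\dev(\dev(\bar{\vec{\Sigma}})^2)\|\leq C\|\bar{\vec{\Sigma}}\|^2\to 0$. The product is again a bounded scalar times a vanishing tensor, so it tends to $\vec{0}$.

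The only delicate point is this well-definedness of $h$ and $h'$ for negative $J_3$, since $\epsilon>0$ alone does not guarantee $J_3+\epsilon>0$ globally. I will handle it by restricting to the neighbourhood above; a global statement is not needed because the claim concerns only the limit at $\vec{0}$. As a by-product, both gradients extend continuously by $\vec{0}$ at the origin. This is the well-definedness used in the numerical implementation, in contrast with the unregularised roots treated in the preceding proposition.
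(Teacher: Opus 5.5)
Your proof is correct and follows essentially the same route as the paper. You write $\tilde f_i$ as a univariate regularized root of $J_i^{\bar{\vec{\Sigma}}}$ and apply the chain rule. You then combine the finite limit of the scalar factor ($\epsilon^{-1/2}$, respectively $\epsilon^{-2/3}$) with the vanishing of $\partial J_i^{\bar{\vec{\Sigma}}}/\partial\bar{\vec{\Sigma}}$ as $\bar{\vec{\Sigma}}\to\vec{0}$. Your explicit formulas for $g'$ and $h'$, and your restriction to a neighbourhood where $|J_3^{\bar{\vec{\Sigma}}}|<\epsilon/2$ (so that $J_3^{\bar{\vec{\Sigma}}}+\epsilon>0$ even though $J_3^{\bar{\vec{\Sigma}}}$ can be negative), make rigorous a point the paper leaves implicit.
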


\begin{proof}
Let
\begin{equation}
\phi_2(x)=\frac{x}{(x+\epsilon)^{1/2}},
\qquad
\phi_3(x)=\frac{x}{(x+\epsilon)^{2/3}}.
\end{equation}
Since $\epsilon>0$, both $\phi_2$ and $\phi_3$ are continuously differentiable at $x=0$, with finite derivatives
\begin{equation}
\phi_2'(0)=\epsilon^{-1/2},
\qquad
\phi_3'(0)=\epsilon^{-2/3}.
\end{equation}
By the chain rule,
\begin{equation}
\frac{\partial \tilde{f}_i}{\partial\bar{\vec{\Sigma}}}
=
\phi_i'(J_i(\bar{\vec{\Sigma}}))
\frac{\partial J_i}{\partial\bar{\vec{\Sigma}}},
\qquad i=2,3.
\end{equation}
Since
\begin{equation}
J_i(\bar{\vec{\Sigma}})\to0
\qquad\text{and}\qquad
\frac{\partial J_i}{\partial\bar{\vec{\Sigma}}}\to\vec0
\quad
(\bar{\vec{\Sigma}}\to\vec0),
\end{equation}
it follows that
\begin{equation}
\lim_{\bar{\vec{\Sigma}}\to\vec0}
\frac{\partial \tilde{f}_i}{\partial\bar{\vec{\Sigma}}}
=
\vec0,
\qquad i=2,3.
\end{equation}
\end{proof}

\textbf{Remark.} This regularization is consistent with the thermodynamic principles, as $\bar{\vec{\Sigma}} = \vec{0}$, there should be no dissipation. In fact, this regularization is physically motivated and is required to ensure the thermodynamic consistency at the origin. Moreover, as $\epsilon_2, \epsilon_3 \to 0$, the modified functions $\tilde{f}_2$ and $\tilde{f}_3$ converge to the original functions $f_2$ and $f_3$, respectively, and the corresponding dissipation contributions converge to the original contributions, ensuring that the thermodynamic consistency is preserved in the limit.
\color{black}

\subsection{Generation of a convex function from a convex, non-decreasing function}
\label{app:convexification}

Let $f(\vec{x})$ be a multivariate function of $\vec{x}$, which is convex and non-decreasing with respect to each component of $\vec{x}$. We prove that
\begin{equation}
    \tilde{f}(\vec{x}) = f(\vec{x}) - f(\vec{\alpha}) - \left.\ \vec{\nabla} f(\vec{x})\right|_{\vec{x} = \vec{\alpha}}  \cdot (\vec{x}-\vec{\alpha})
\end{equation}
is convex with respect to each component of $\vec{x}$, nonnegative and has zero value and zero-valued derivative at $\vec{x} = \vec{\alpha}$.

\textbf{Convexity.} Suppose $f(\vec{x})$ is a twice-diffentiable function, it holds that the second partial derivate with respect to each argument must be nonnegative, i.e., $\partial^2 f / \partial x_i^2 \geq 0$ for all $i$. The first partial derivative of $\tilde{f}(\vec{x})$ with respect to $x_i$ is given as
\begin{equation}
    \frac{\partial \tilde{f}(\vec{x})}{\partial x_i} = \frac{\partial f(\vec{x})}{\partial x_i} - \left.\frac{\partial f(\vec{x})}{\partial x_i}\right|_{\vec{x} = \vec{\alpha}}\,,
\end{equation}
and the second partial derivative is
\begin{equation}
    \frac{\partial^2 \tilde{f}(\vec{x})}{\partial x_i^2} = \frac{\partial^2 f(\vec{x})}{\partial x_i^2} \geq 0 \quad \text{for all } i\,.
\end{equation}
Therefore, $\tilde{f}(\vec{x})$ is convex with respect to each component of $\vec{x}$.

\textbf{Nonnegativity.} Since $f(\vec{x})$ is convex and non-decreasing with respect to each component of $\vec{x}$, it must hold that for any $\vec{x}$ and $\vec{\alpha}$,
\begin{equation}
    f(\vec{x}) \geq f(\vec{\alpha}) + \left.\ \vec{\nabla} f(\vec{x})\right|_{\vec{x} = \vec{\alpha}}  \cdot (\vec{x}-\vec{\alpha})\,.  
\end{equation}
Rearranging this inequality gives
\begin{equation}    
    f(\vec{x}) - f(\vec{\alpha}) - \left.\ \vec{\nabla} f(\vec{x})\right|_{\vec{x} = \vec{\alpha}}  \cdot (\vec{x}-\vec{\alpha}) \geq 0\,,
\end{equation}
which implies that $\tilde{f}(\vec{x}) \geq 0$ for all $\vec{x}$.

\textbf{Zero value and zero derivative at $\vec{x} = \vec{\alpha}$.} Evaluating $\tilde{f}(\vec{x})$ at $\vec{x} = \vec{\alpha}$ gives
\begin{equation}
    \tilde{f}(\vec{\alpha}) = f(\vec{\alpha}) - f(\vec{\alpha}) - \left.\ \vec{\nabla} f(\vec{x})\right|_{\vec{x} = \vec{\alpha}}  \cdot (\vec{\alpha}-\vec{\alpha}) = 0\,.
\end{equation}
The gradient of $\tilde{f}(\vec{x})$ at $\vec{x} = \vec{\alpha}$ is
\begin{equation}
    \left.\ \vec{\nabla} \tilde{f}(\vec{x})\right|_{\vec{x} = \vec{\alpha}} = \left.\ \vec{\nabla} f(\vec{x})\right|_{\vec{x} = \vec{\alpha}} - \left.\ \vec{\nabla} f(\vec{x})\right|_{\vec{x} = \vec{\alpha}} = \vec{0}\,.
\end{equation}
Thus, $\tilde{f}(\vec{x})$ has a zero value and derivative at $\vec{x} = \vec{\alpha}$.

\subsection{Stress-free undeformed state}
\label{app:psi_undeformed}

To ensure that the undeformed state is correctly represented, that is, that there should be no energy or stress at zero elastic deformation \citep{holzapfel2000nonlinear}, the free energy function must satisfy the following conditions
\begin{equation}
    \psi(\bar{\vec{C}}_e=\vec{I}) = 0 \quad \text{and} \quad \left.\dfrac{\partial \psi}{\partial \bar{\vec{C}}_e}\right|_{\bar{\vec{C}}_e=\vec{I}} = \vec{0}\,.
\end{equation}
This can be achieved by adding a correction term for stress and potential, $\psi^\sigma$ and $\psi^\epsilon$, respectively, to the output free energy of the KAN, $\psi^{\text{KAN}}$, i.e., $\psi = \psi^{\text{KAN}} + \psi^\sigma(J) + \psi^\epsilon$. These two terms are calculated dependent on the output free energy at the undeformed state. For more detailed explanation, please refer to Appendix D in \citep{abdolazizi2025constitutive}.
In our case, the derivatives of the modified invariants w.r.t. $\bar{\vec{C}}_e$ are zero at the undeformed state \citep{thakolkaran2025can}, i.e., 
\begin{equation}
    \left.\dfrac{\partial \hat{I}_1^{\bar{\vec{C}}_e}}{\partial \bar{\vec{C}}_e}\right|_{\bar{\vec{C}}_e=\vec{I}} = \left.\dfrac{\partial \hat{I}_2^{\bar{\vec{C}}_e}}{\partial \bar{\vec{C}}_e}\right|_{\bar{\vec{C}}_e=\vec{I}} = \left.\dfrac{\partial \hat{I}_3^{\bar{\vec{C}}_e}}{\partial \bar{\vec{C}}_e}\right|_{\bar{\vec{C}}_e=\vec{I}} = \vec{0}\,.
\end{equation} 
Therefore, the stress-free undeformed state is automatically ensured. The potential-free undeformed state is achieved by subtracting the output free energy evaluated at in the undeformed state,
\begin{equation}
    \psi = \psi^{\text{KAN}} - \left.\psi^{\text{KAN}}\right|_{\bar{\vec{C}}_e=\vec{I}}\,.
\end{equation}

\subsection{Alternative choices of argument for dissipation potential}
\label{app:Potential_arguments}

\textbf{Adding negative stress invariants.} For more network expressibility, the negative invariants can be added additionally to the positive invariants to enhance the input arguments of KAN for the dissipation potential, i.e., 
\begin{equation}
\omega = \omega\left(\hat{I}_1^{\bar{\vec{\Sigma}}}, \hat{J}_2^{\bar{\vec{\Sigma}}}, \hat{J}_3^{\bar{\vec{\Sigma}}}, -\hat{I}_1^{\bar{\vec{\Sigma}}}, -\hat{J}_2^{\bar{\vec{\Sigma}}}, -\hat{J}_3^{\bar{\vec{\Sigma}}}\right)\,.
\label{eq:omega_6_arguments}
\end{equation}
This approach increases the expressivity of the dissipation potential network by additional activations that are monotonic decreasing and convex with respect to the invariants. In this way, the convexity with respect to the invariants is preserved. To ensure that the output dissipation potential is zero-valued at its origin and non-negative, the same transformation step in \autoref{eq:convex_omega} should be carried out.

\textbf{Adding principal invariants.} Additional stress invariants can be added to the input arguments of the dissipation potential network to enhance the expressivity, e.g., the classical invariants $I_2^{\bar{\vec{\Sigma}}}$ and $I_3^{\bar{\vec{\Sigma}}}$. These invariants can be expressed in terms of the stress invariants as 
\begin{equation}
    I_2^{\bar{\vec{\Sigma}}} = \frac{(I_1^{\bar{\vec{\Sigma}}})^2}{6} + J_2^{\bar{\vec{\Sigma}}} \,, \quad I_3^{\bar{\vec{\Sigma}}} = \frac{(I_1^{\bar{\vec{\Sigma}}})^3}{27} + \frac{2}{3} I_1^{\bar{\vec{\Sigma}}} J_2^{\bar{\vec{\Sigma}}} + J_3^{\bar{\vec{\Sigma}}}\,.
\end{equation}
Analogously to the modification of the stress invariants, we take the square root and cubic root of the second and third principal invariants, respectively, and achieve the following modified invariants,
\begin{equation}
    \hat{I}_2^{\bar{\vec{\Sigma}}} = \sqrt{I_2^{\bar{\vec{\Sigma}}}}\,,\quad \hat{I}_3^{\bar{\vec{\Sigma}}} = \sqrt[3]{I_3^{\bar{\vec{\Sigma}}}}\,.
\end{equation}
Thus, the expression of the dissipation potential can be reformulated as
\begin{equation}
\omega = \omega\left(I_1^{\bar{\vec{\Sigma}}},  \hat{J}_2^{\bar{\vec{\Sigma}}}, \hat{J}_3^{\bar{\vec{\Sigma}}}, \hat{I}_2^{\bar{\vec{\Sigma}}} ( I_1^{\bar{\vec{\Sigma}}}, J_2^{\bar{\vec{\Sigma}}} ),\hat{I}_3^{\bar{\vec{\Sigma}}} (I_1^{\bar{\vec{\Sigma}}}, J_2^{\bar{\vec{\Sigma}}}, J_3^{\bar{\vec{\Sigma}}}) \right)\,.
\label{eq:omega_5_arguments}
\end{equation}
The guarantee of thermodynamic consistency is proven in \citet{holthusen2025generalized}. Analogously, the transformation step in \autoref{eq:convex_omega} should follow.

\subsection{Estimation for initial grid range of iCKANs}
\label{app:init_grid_range}

The activation functions in KANs are represented by B-splines defined on fixed grids. In this work, we adopt the spline grid initialization strategy proposed by \citet{thakolkaran2025can}. Accordingly, each input dimension requires a prescribed grid range that specifies the interval over which the activation functions are explicitly represented. Outside this interval, linear extrapolation is applied, which may lead to numerical instabilities if the inputs extend significantly beyond the grid bounds. Consequently, a reliable choice of the grid range for the first KAN layer is essential for stable training and evaluation. Although one could assume a generally sufficiently large range, this would lead to a reduction of training efficiency and resolution.

In contrast to prior work focusing on hyperelasticity \citep{abdolazizi2025constitutive, thakolkaran2025can}, the effective network inputs in the present iCKAN formulation are not directly given by the dataset $(\vec{F},\vec{P})$, but instead arise within a recurrent constitutive update and depend on intermediate model predictions. Their distribution is therefore not known a priori and cannot be determined through a simple preprocessing step. To address this difficulty, we introduce an estimation procedure for suitable grid ranges based on invariant measures computed from the input and output tensors using the training data.

Specifically, the initial grid range for the KAN model representing the Helmholtz free energy $\psi$ is defined based on the invariants of the right Cauchy-Green deformation tensor $\vec{C} = \vec{F}^\mathrm{T}\vec{F}$, while the initial grid range for the KAN model representing the dissipation potential $\omega$ is defined based on the invariants of the product of the right Cauchy-Green deformation tensor $\vec{C}$ and the second Piola-Kirchhoff stress tensor $\vec{S} = \vec{F}^{-1}\vec{P}$. In the following, we provide a justification for the plausibility of this estimation scheme.

\subsubsection{Approximation for initial grid range of KAN for dissipation potential}

The initial grid range of the KAN for dissipation potential should cover the range of the invariants of the symmetric elastic Mandel-like stress tensor $\bar{\vec{\Sigma}}$. The elastic Mandel-like stress $\bar{\vec{\Sigma}}$ and the second Piola-Kirchhoff stress tensor $\vec{S}$ are defined by
\begin{equation}
    \bar{\vec{\Sigma}} = 2 \, \bar{\vec{C}}_e \dfrac{\partial \psi}{\partial \bar{\vec{C}}_e} \quad \text{and} \quad \vec{S} = 2 \, \vec{U}_i^{-1} \dfrac{\partial \psi}{\partial \bar{\vec{C}}_e}  \vec{U}_i^{-1}\,,
\end{equation}
leading to the relation $\bar{\vec{\Sigma}} = \vec{U}_i^{-1} \vec{C} \vec{S} \vec{U}_i$. We thus define $\hat{\vec{\Sigma}} = \vec{C} \vec{S}$ and proove that $\bar{\vec{\Sigma}} = \vec{U}_i^{-1} \hat{\vec{\Sigma}} \vec{U}_i $ and $\hat{\vec{\Sigma}}$ share the same invariants.

\textbf{First stress invariant.} 
\begin{equation}
        I_1^{\bar{\vec{\Sigma}}} = \tr(\bar{\vec{\Sigma}}) = \tr(\vec{U}_i^{-1} \hat{\vec{\Sigma}} \vec{U}_i) = \tr(\hat{\vec{\Sigma}}\vec{U}_i \vec{U}_i^{-1}) = \tr(\hat{\vec{\Sigma}}) = I_1^{\hat{\Sigma}}\,.
\end{equation}
\textbf{Second stress invariant.} 
\begin{equation}
    \begin{split}
        J_2^{\bar{\vec{\Sigma}}} &= \dfrac{1}{2} \tr(\dev(\bar{\vec{\Sigma}})^2) = \dfrac{1}{2} \tr\left(\bar{\vec{\Sigma}}^2 - \dfrac{2}{3} I_1^{\bar{\vec{\Sigma}}} \bar{\vec{\Sigma}} + \dfrac{1}{9} (I_1^{\bar{\vec{\Sigma}}})^2 \vec{I}\right) \\
        &= \dfrac{1}{2} (\tr(\bar{\vec{\Sigma}}^2) - \dfrac{5}{9} (I_1^{\bar{\vec{\Sigma}}})^2) = \dfrac{1}{2} (\tr(\hat{\vec{\Sigma}}^2) - \dfrac{5}{9} (I_1^{\hat{\Sigma}})^2) \\
        &= \dfrac{1}{2} \tr(\dev(\bar{\vec{\Sigma}})^2) = J_2^{\hat{\Sigma}} \\
        &\text{with} \quad \tr(\bar{\vec{\Sigma}}^2) = \tr(\vec{U}_i^{-1} \hat{\vec{\Sigma}} \vec{U}_i\vec{U}_i^{-1} \hat{\vec{\Sigma}} \vec{U}_i) =\tr(\vec{U}_i^{-1} \hat{\vec{\Sigma}} \hat{\vec{\Sigma}} \vec{U}_i)= \tr(\hat{\vec{\Sigma}}^2)\,.
    \end{split}
\end{equation}
\textbf{Third stress invariant.}
\begin{equation}
    \begin{split}
        J_3^{\bar{\vec{\Sigma}}} &= \dfrac{1}{3} \tr(\dev(\bar{\vec{\Sigma}})^3) = \dfrac{1}{3} \tr\left(\bar{\vec{\Sigma}}^3 - I_1^{\bar{\vec{\Sigma}}} \bar{\vec{\Sigma}}^2 + \dfrac{1}{3} (I_1^{\bar{\vec{\Sigma}}})^2 \bar{\vec{\Sigma}} - \dfrac{1}{27} (I_1^{\bar{\vec{\Sigma}}})^3 \vec{I}\right) \\
        &= \dfrac{1}{3} (\tr(\bar{\vec{\Sigma}}^3) - I_1^{\bar{\vec{\Sigma}}} \tr(\bar{\vec{\Sigma}}^2) + \dfrac{8}{27} (I_1^{\bar{\vec{\Sigma}}})^3) = \dfrac{1}{3} (\tr(\hat{\vec{\Sigma}}^3) - I_1^{\hat{\Sigma}} \tr(\hat{\vec{\Sigma}}^2) + \dfrac{8}{27} (I_1^{\hat{\Sigma}})^3) \\
        &= \dfrac{1}{3} \tr(\dev(\hat{\vec{\Sigma}})^3) = J_3^{\hat{\Sigma}} \\
        &\text{with} \quad \tr(\bar{\vec{\Sigma}}^3) = \tr(\vec{U}_i^{-1} \hat{\vec{\Sigma}} \vec{U}_i \vec{U}_i^{-1} \hat{\vec{\Sigma}} \vec{U}_i \vec{U}_i^{-1} \hat{\vec{\Sigma}} \vec{U}_i) =\tr(\vec{U}_i^{-1} \hat{\vec{\Sigma}} \hat{\vec{\Sigma}} \hat{\vec{\Sigma}} \vec{U}_i)= \tr(\hat{\vec{\Sigma}}^3)\,.
    \end{split}
\end{equation}

In conclusion, the approximated grid range using the invariants of $\hat{\vec{\Sigma}} = \vec{C} \vec{S}$ is equivalent to the grid range using the invariants of $\bar{\vec{\Sigma}}$.

\subsubsection{Approximation for initial grid range of KAN for free energy} 

The initial grid range of the KAN for free energy should cover the range of the invariants of the elastic right Cauchy-Green deformation tensor $\bar{\vec{C}}_e$. We approximate the grid range using the invariants of the right Cauchy-Green deformation tensor $\vec{C}$. For one cyclic tension test, the elastic deformation is always smaller than the total deformation, i.e., $\bar{\vec{C}}_e \leq \vec{C}$. Hence, the invariants of $\bar{\vec{C}}_e$ are always smaller than or equal to the invariants of $\vec{C}$, i.e., $I_1^{\bar{\vec{C}}_e} \leq I_1^{\vec{C}}$, $I_2^{\bar{\vec{C}}_e} \leq I_2^{\vec{C}}$ and $I_3^{\bar{\vec{C}}_e} \leq I_3^{\vec{C}}$. Therefore, the grid range using the invariants of $\vec{C}$ always covers the grid range using the invariants of $\bar{\vec{C}}_e$.

It is to note, this only holds for compressible cases. For incompressible materials, we assume this approach is sufficient.

\subsection{Implicit time integration scheme}
\label{app:implicit_iCKAN}

Following the implicit integration scheme, the following nonlinear equation hast to be solved at each timestep $t$ to get the current inelastic stretch tensor $\vec{U}_{i,t}$,
\begin{equation}
    \vec{r} := \vec{C}_{i,t-1} - \vec{U}_{i,t} \exp(- 2 \,\Delta t\, \vec{D}_{i,t}) \vec{U}_{i,t} \overset{!}{=} \vec{0}\,,
    \label{eq:implicit_evo}
\end{equation}
where $\vec{D}_{i,t}$ depends nonlinearly on $\vec{U}_{i,t}$ \citep{holthusen2026complement}. Classically, the initial guess for $\vec{U}_{i,t}^{(0)}$ is chosen as the previous timestep value $\vec{U}_{i,t-1}$ and the equation is solved iteratively using, e.g., the Newton-Raphson method, until the residual $\vec{r}$ is sufficiently small. In comparison to the explicit time integration scheme, this approach is more stable but however time consuming. 

\subsubsection{Helper network for solving implicit evolution equation} 

To overcome the drawbacks of the implicit scheme and accelerate the training, the iterative solver can be replaced by a helper neural network $\mathcal{N}_f$ to directly predict $\vec{U}_{i,t}$ \citep{as2023mechanics, rosenkranz2024viscoelasticty}. This helper network can be represented by a Liquid Time-Constant Network (LTC) \citep{hasani2021liquid} to predict the current inelastic stretch \citep{holthusen2026complement}. For instance, we need to compute a trial value of $\vec{D}_{i,t}$, $\vec{D}_{i,t}^{\text{trial}}$, which is $\vec{D}_i$ evaluated with the current $\vec{C}_t$ but the last $\vec{U}_{i,t-1}$. Note that the outcome of the network generally not statisfy the symmetric positive definite condition of $\vec{U}_{i,t}$. To enforce this condition, it is of advantage to work with the lower triangular matrix $\vec{L}_{i}$ of the Cholesky decomposition of $\vec{U}_{i}$, i.e., $\vec{U}_{i} = \vec{L}_{i} \vec{L}_{i}^\mathrm{T}$ \citep{benoit1924note}. The helper network is then designed to predict $\vec{L}_{i,t}$ as
\begin{equation}
    \vec{L}_{i,t} = \mathcal{N}_f(\vec{L}_{i,t-1}, \vec{C}_t, \bar{\vec{D}}_{i,t}^{\text{trial}})\,,
\end{equation}
following with $\vec{U}_{i,t} = \vec{L}_{i,t} \vec{L}_{i,t}^\mathrm{T}$. 

To evaluate the accuracy of the predicted $\vec{U}_{i,t}$, the loss function is enhanced by a term corresponding to the evolution equation, namely the residual of the implicit scheme in \autoref{eq:implicit_evo},
\begin{equation}
  \mathcal{L}_\text{total} = \mathcal{L}_\text{stress} + \lambda_\text{evo} \cdot \mathcal{L}_\text{evo} + \lambda_\text{L1} \cdot \mathcal{L}_{\text{L1}} \quad\text{with}\quad \mathcal{L}_\text{evo} = \mathrm{MSE}(\vec{r})
    \label{eq:loss_implicit}
\end{equation}
with $\lambda_\text{evo}$ being a hyperparameter to balance the two loss parts to a similar scale. The overall architecture of the implicit iCKAN with a helper LTC is illustrated in \autoref{fig:iCKAN_implicit}. The algorithm for the implicit iCKANs with helper network is summarized in Appendix \ref{app:implicit_iCKAN}.

\begin{figure}[H]
    \centering
    \includestandalone{standalone_iCKAN_implicit}
    \caption{Implicit iCKAN architecture with helper LTC to solve the evolution equation at timestep $t$. The inputs at each step are the current deformation gradient and time increment $(\vec{F}_t,\Delta t)$ together with the state variables $(\vec{C}_{t-1},\vec{U}_{i,t-1})$ from the previous step. The updated state variables is achieved by solving \autoref{eq:implicit_evo} using a helper LTC network. The updated state is then propagated to the next time step. In addition, the output stress $\vec{P}$ is computed by evaluating the free energy function $\psi$ with the updated state variables.}
    \label{fig:iCKAN_implicit}
\end{figure}

\textbf{Liquid time constant networks.} LTCs are a type of recurrent neural networks that is designed to model temporal dependencies in sequential data. A LTC is based on the following ordinary differential equation (ODE)
\begin{equation}
    \dfrac{d\vec{h}}{dt} = f(\vec{x}, \vec{h}) - \alpha (\vec{x}, \vec{h}) \, \vec{h}(t)\,,
\end{equation}
where $\vec{h}$ is the hidden state, $\vec{x}$ is the input, and $\alpha$ is a learnable parameter that controls the decay rate of the hidden state. Consequently, a LTC consists of two networks, the source network $\mathcal{N}_f$ to learn the update function $f$, and a second network $\mathcal{N}_{\alpha}$ to learn the state-dependent time constant $\alpha$, which is positive and scaled by the timestep $\Delta t$. This parameter controls the dynamics of the network, specifically, the larger $\alpha$, the faster the state changes. Thus, the original forward Euler scheme
\begin{equation}
    \vec{h}_{t+1} = \vec{h}_t + \Delta t \cdot f(\vec{h}_t, \vec{x})
\end{equation}
is modified into a semi-explicit scheme, which reads
\begin{equation}
    \vec{h}_{t+1} = \vec{h}_t + \Delta t \cdot \left( f(\vec{h}_t, \vec{x}) - \alpha_t \, \vec{h}_t \right)\,.
\end{equation}
In the present framework, the current hidden state is the previous lower Cholesky decompsition of the inelastic stretch tensor $\vec{L}_{i,t-1}$ and the current state is the current right Cauchy-Green tensor $\vec{C}_t$ and the trial inelastic strain rate $\vec{D}_{i,t}^{\text{trial}}$, i.e.,  
\begin{equation}
    \vec{L}_{i,t} = \vec{L}_{i,t-1} + \Delta t \cdot \left( \mathcal{N}_f(\vec{L}_{i,t-1}, \vec{C}_t, \bar{\vec{D}}_{i,t}^{\text{trial}}) - \mathcal{N}_{\alpha}(\vec{L}_{i,t-1}, \vec{C}_t, \bar{\vec{D}}_{i,t}^{\text{trial}}) \vec{L}_{i,t-1} \right)\,.
\end{equation}

\subsubsection{Pseudo code for iCKAN with implicit time integration and helper network}

\begin{algorithm}[H]
\caption{iCKAN with implicit time integration and helper network}
\label{alg:implicit_iCKAN}
\begin{algorithmic}[1]
\REQUIRE $\Delta t_t$, $\vec{F}_t$, $\vec{f}$
\STATE Retrieve $\vec{C}_{t-1}$, $\vec{U}_{i,t-1}$ from history

\STATE Get trial value of evolution equation $\vec{D}_{i,t}^{\text{trial}}$
\STATE $\vec{C}_{t} \gets \vec{F}_t^T \vec{F}_t$
\STATE $\bar{\vec{C}}_{e,t}^{\text{trial}} \gets \vec{U}_{i,t-1}^{-1} \vec{C}_{t} \vec{U}_{i,t-1}$
\STATE $\psi \gets \mathrm{cKAN}(I_{1}^{\bar{\vec{C}}_{e,t}^{\text{trial}}}, I_{2}^{\bar{\vec{C}}_{e,t}^{\text{trial}}}, I_{3}^{\bar{\vec{C}}_{e,t}^{\text{trial}}})$ \hfill \textit{KAN for free energy}
\STATE $\bar{\vec{\Sigma}}^{\text{trial}} \gets 2 \bar{\vec{C}}_{e,t}^{\text{trial}} \dfrac{\partial \psi}{\partial \bar{\vec{C}}_{e,t}^{\text{trial}}}$
\STATE $\omega \gets \mathrm{cKAN}(I_1^{\bar{\vec{\Sigma}}^{\text{trial}}}, \sqrt{J_2^{\bar{\vec{\Sigma}}^{\text{trial}}}}, \sqrt[3]{J_3^{\bar{\vec{\Sigma}}^{\text{trial}}}})$ \hfill \textit{KAN for dissipation potential}
\STATE $\vec{D}_i^{\text{trial}} \gets \dfrac{\partial \omega}{\partial \bar{\vec{\Sigma}}^{\text{trial}}}$
\STATE Predict $\vec{L}_{i,t} \gets f_{\text{NN}}(\vec{U}_{i,t-1}, \vec{C}_t, \vec{D}_{i,t}^{\text{trial}})$
\STATE $\vec{U}_{i,t} \gets \vec{L}_i \vec{L}_i^\mathrm{T}$
\STATE Get predicted value of evolution equation $\vec{D}_{i,t}$
\STATE $\vec{D}_i \gets \dfrac{\partial \omega}{\partial \bar{\vec{\Sigma}}}$
\STATE Get residual of evolution equation:
    $\vec{r} := \vec{C}_{i,t-1} - \vec{U}_{i,t} \exp(- 2 \,\Delta t\, \vec{D}_{i,t}) \vec{U}_{i,t} \overset{!}{=} \vec{0}$
\STATE $\bar{\vec{C}}_{e,t} \gets \vec{U}_{i,t}^{-1} \vec{C}_t \vec{U}_{i,t}$
\STATE Compute invariants: $I_{1,e}, I_{2,e}, I_{3,e} \gets \mathrm{Invariants}(\bar{\vec{C}}_{e,t})$
\STATE $\psi \gets \mathrm{cKAN}(I_{1,e}, I_{2,e}, I_{3,e})$ \hfill \textit{KAN for free energy}
\STATE $\vec{S}_t \gets 2 \vec{U}_i^{-1} \dfrac{\partial \psi}{\partial \bar{\vec{C}}_{e,t}} \vec{U}_i^{-1}$
\STATE $\vec{P}_t \gets \vec{F}_t \vec{S}_t $
\STATE Update history: store $\vec{C}_t$, $\vec{U}_{i,t}$
\RETURN $\vec{P}_t$
\end{algorithmic}
\end{algorithm}

\subsection{Candidate functions for symbolification}
\label{app:candidate_functions}

Candidate functions that are convex and non-decreasing on the interval $[0, \infty)$ \citep{thakolkaran2025can}:
\begin{center}
\begin{tabularx}{\textwidth}{>{\centering\arraybackslash}X}
\hline
$x$, $x^{1.5}$, $x^2$, $x^{2.5}$, $x^3$, $x^{3.5}$, $x^4$, $x^{4.5}$, $x^5$\\
$e^x$, $\log(e^x)$, $\log(e^x)^2$, $\log(e^x)^3$\\\hline
\end{tabularx}
\end{center}

Candidate functions that are convex, zero-valued and non-negative on the interval $(-\infty, \infty)$:
\begin{center}
\begin{tabularx}{\textwidth}{>{\centering\arraybackslash}X}
\hline
$|x|$, $|x|^{1.5}$, $x^2$, $|x|^{2.5}$, $|x|^3$, $|x|^{3.5}$, $x^4$, $|x|^{4.5}$, $|x|^5$\\
$e^{x^2}-1$, $e^{x^4}-1$\\
$\mathrm{cosh}(x)-1$, $\mathrm{cosh}(x^2)-1$\\
$\log(\mathrm{cosh}(x))$, $\log(\mathrm{cosh}(x^2))$\\\hline
\end{tabularx}
\end{center}

\subsection{Additional results of iCKANs on synthetic data}
\label{app:results_synthetic}

In this section, we show the additional results of the iCKANs trained on synthetic data. Using the explicit time integration, we examine the options where the input arguments of the KAN for dissipation potential is augmented with the negative stress invariants or additional principal invariants. We further examine the performance of iCKANs with implicit time intergration scheme Appendix~\ref{app:implicit_iCKAN}. We provide the details on the performance of each variant and compare the convergence between the explicit and implicit iCKANs.

\textbf{Explicit iCKAN: Full result.} The prediction of the symbolified version of the explicit iCKAN on the entire dataset is shown in \autoref{fig:results_uniaxial_cycle_31_31_L1}

\begin{figure}[H]
    \centering
    \includegraphics{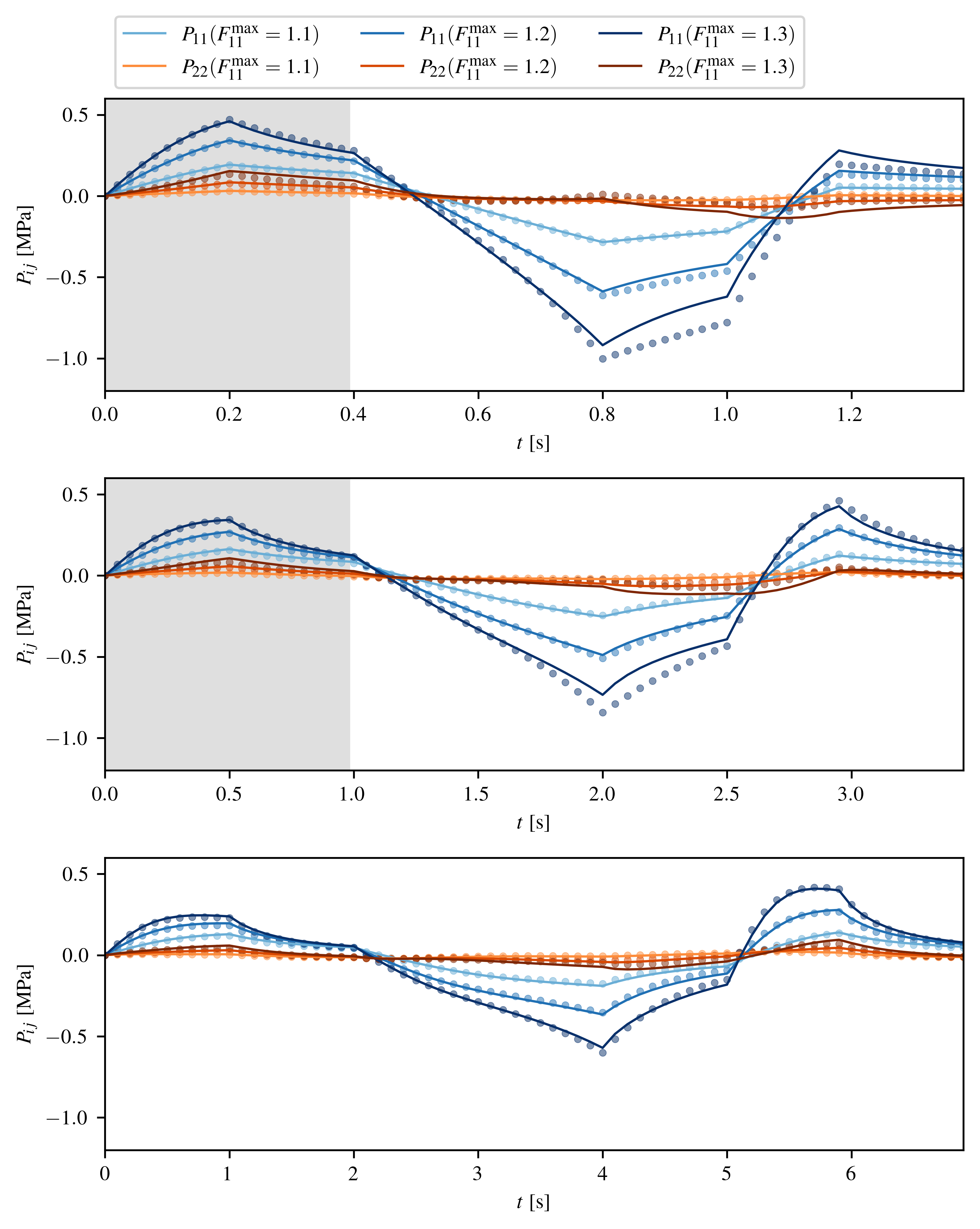}
    \put(-70,378){ \fbox{$t_\text{load} = 0.2$ s}}
    \put(-70,212){ \fbox{$t_\text{load} = 0.5$ s}}
    \put(-67,50){ \fbox{$t_\text{load} = 1$ s}}
    \put(-379,365){\includegraphics{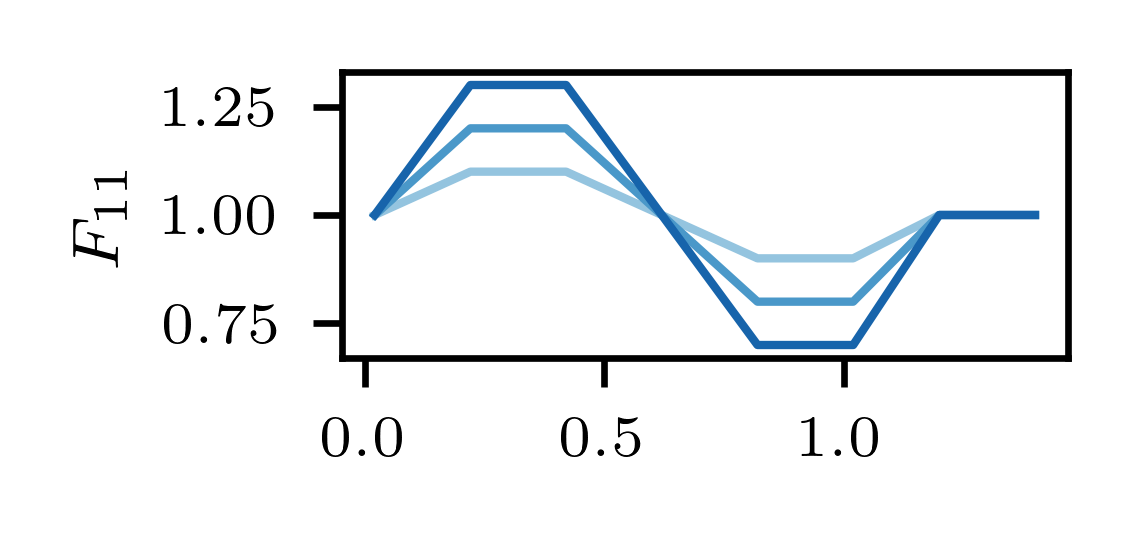}}
    \put(-379,200){\includegraphics{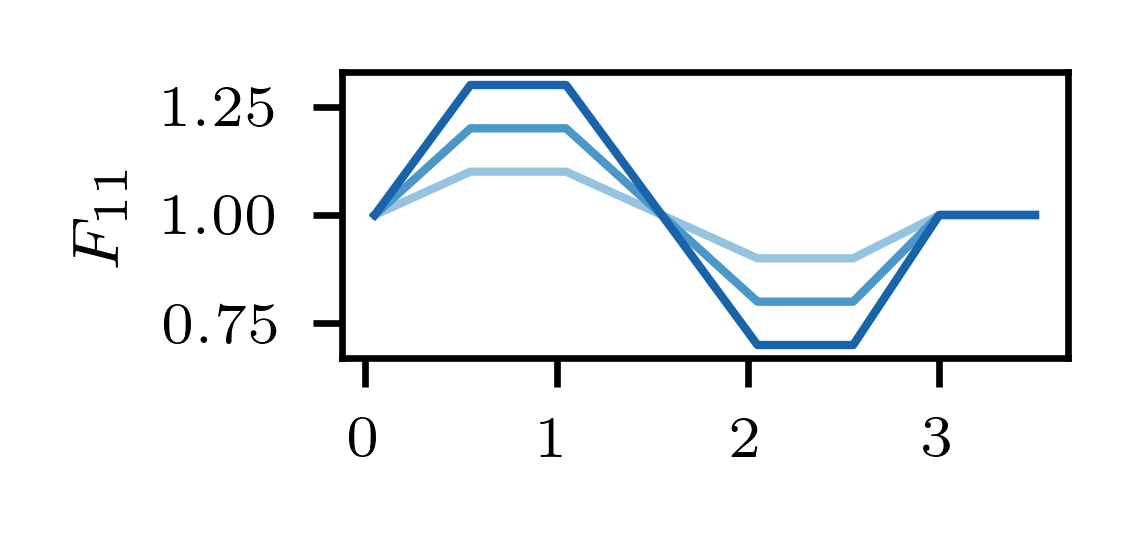}}
    \put(-379,30){\includegraphics{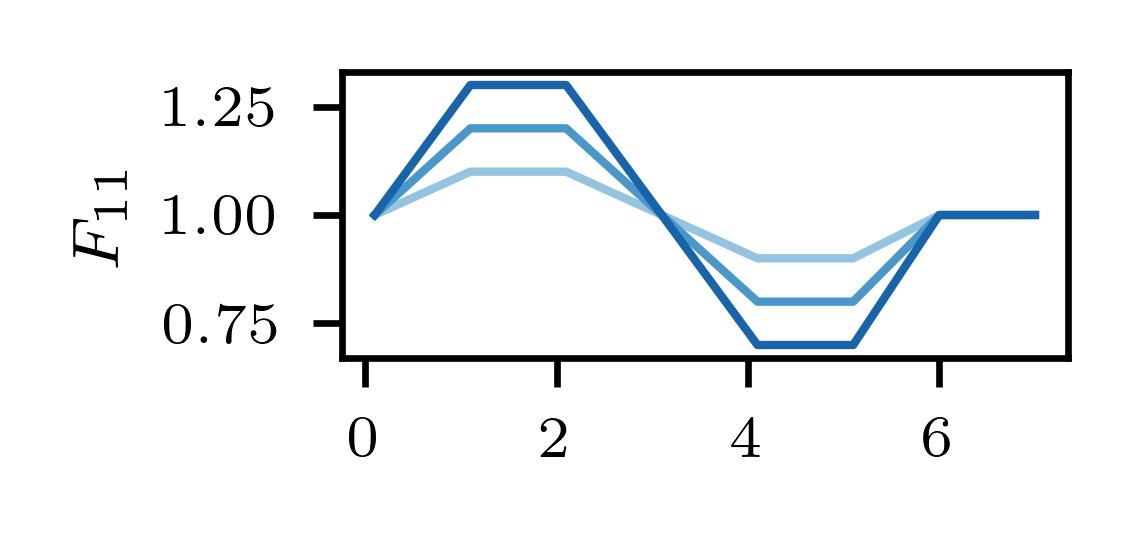}}
\caption{Predicted first Piola-Kirchhoff stress by the symbolified explicit iCKAN. The gray area shows the range of the training data and the dots with corresponding color indicate the reference data.}
\label{fig:results_uniaxial_cycle_31_31_L1}
\end{figure}

\textbf{Explicit iCKAN: Adding negative stress invariants to arguments of inelasitc potential.} We could enhance the arguments for the dissipation potential with the negative stress invariants, as shown in \autoref{eq:omega_6_arguments}. This approach provides higher expressibility of the network, but unfortunately shows lower stability in prediction.

To achieve more interpretable symbolifc expressions of the inelasitc potential we could symbolify the combined activations for the positive and negative parts of each input argument together, i.e. 
\begin{equation}
    \begin{split}
    &y^{\text{symb}}_1 = y_1^+(\hat{I}_1^{\bar{\vec{\Sigma}}})+y_1^-(-\hat{I}_1^{\bar{\vec{\Sigma}}})\,,\\
    &y^{\text{symb}}_2 = y_2^+(\hat{J}_2^{\bar{\vec{\Sigma}}})+y_2^-(-\hat{J}_2^{\bar{\vec{\Sigma}}})\,,\\
    &y^{\text{symb}}_3 = y_3^+(\hat{J}_3^{\bar{\vec{\Sigma}}})+y_3^-(-\hat{J}_3^{\bar{\vec{\Sigma}}}) \,.
    \end{split}
\end{equation} 
The candidate functions are general convex functions that are not necessarily non-decreasing. The architecture of the KAN for dissipation potential before and after the symbolification is shown in \autoref{fig:KAN_61}. The last three activations are pruned to be zero since their contribution to the output is included in the first three activations. It is to note, that the $\mathcal{H}$-transformation has to be applied to the network outcome to achieve convex, non-negative and zero-valued at origin dissipation potential.

\begin{figure}[H]
    \vspace{2em}
    \hspace{4em}
    \small
    \subfloat[B-spline expression of KAN for dissipation potential\label{subfig-1:omega_KAN_61_spline}]{
    \begin{overpic}[width=.3\textwidth]{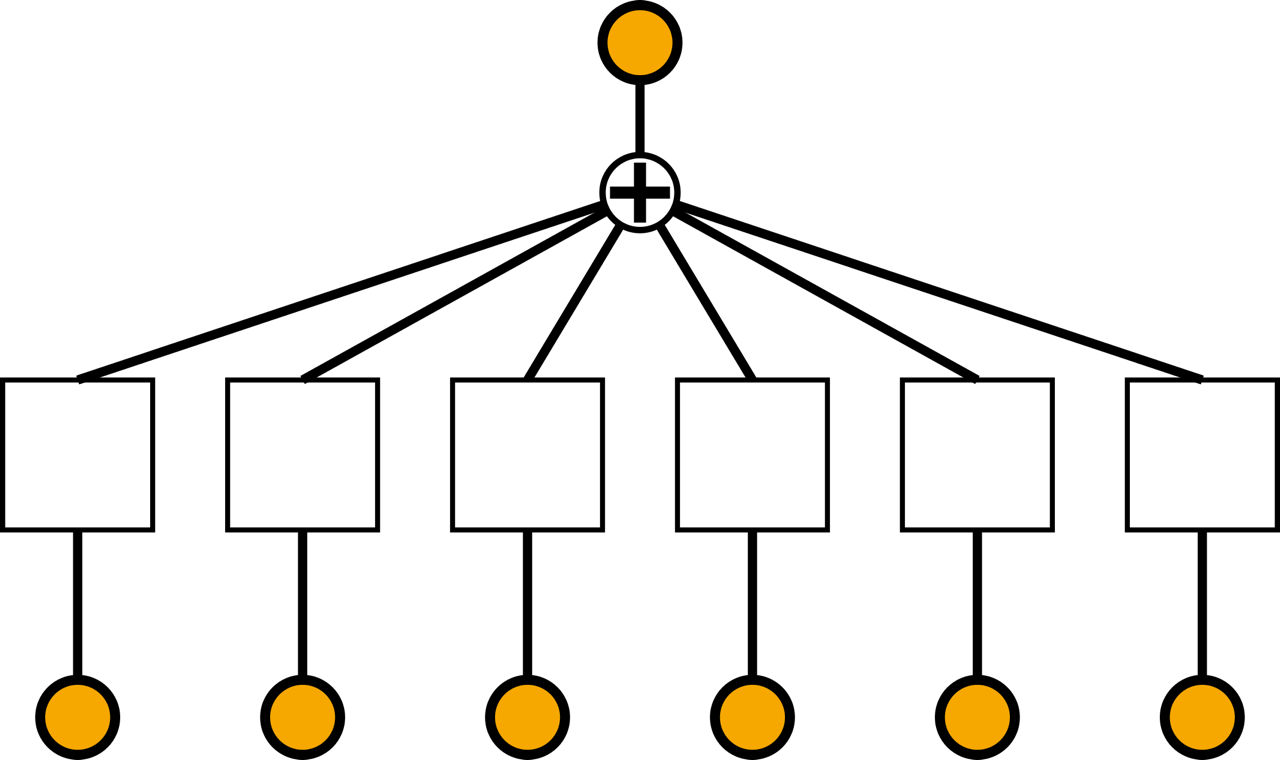}
        \put(-1,17){\includegraphics[width=0.045\textwidth]{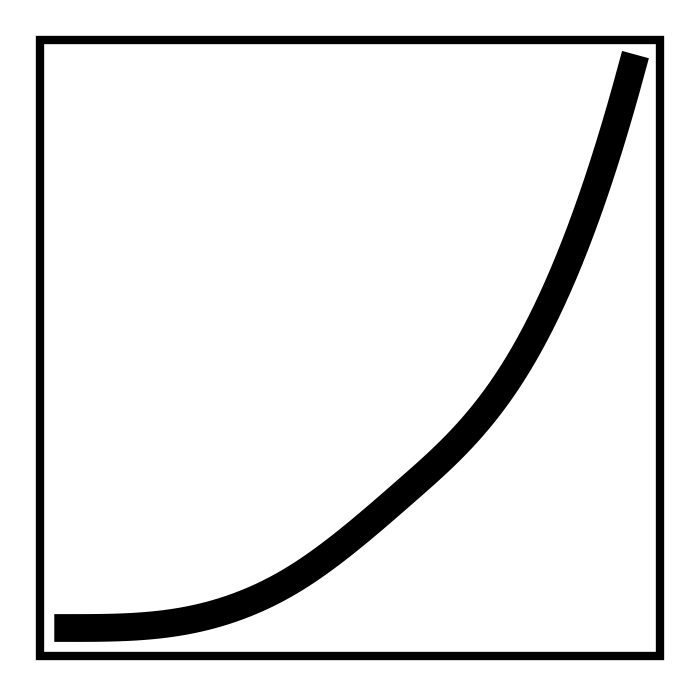}}
        \put(16,17){\includegraphics[width=0.045\textwidth]{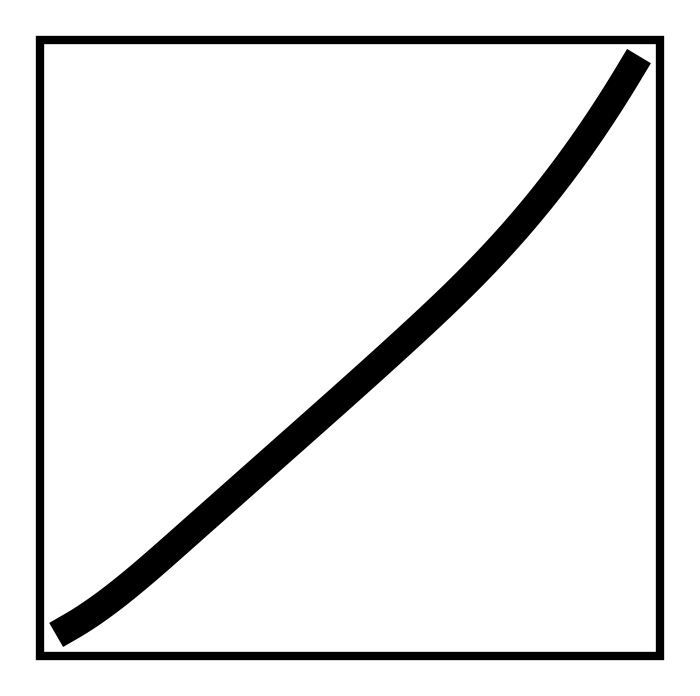}}
        \put(33,17){\includegraphics[width=0.045\textwidth]{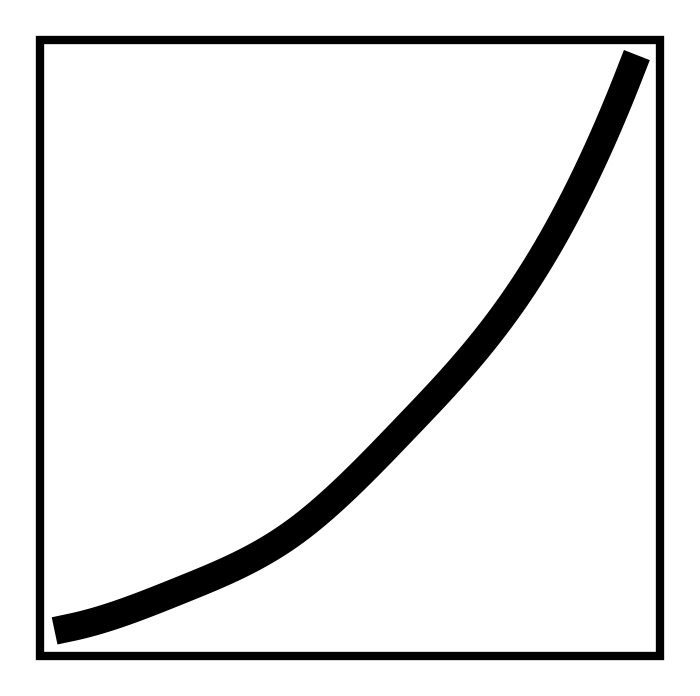}}
        \put(51,17){\includegraphics[width=0.045\textwidth]{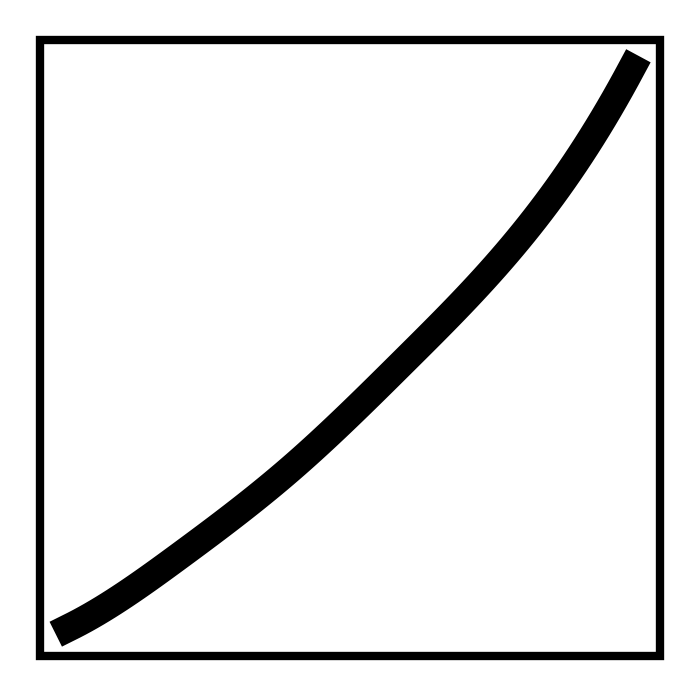}}
        \put(68,17){\includegraphics[width=0.045\textwidth]{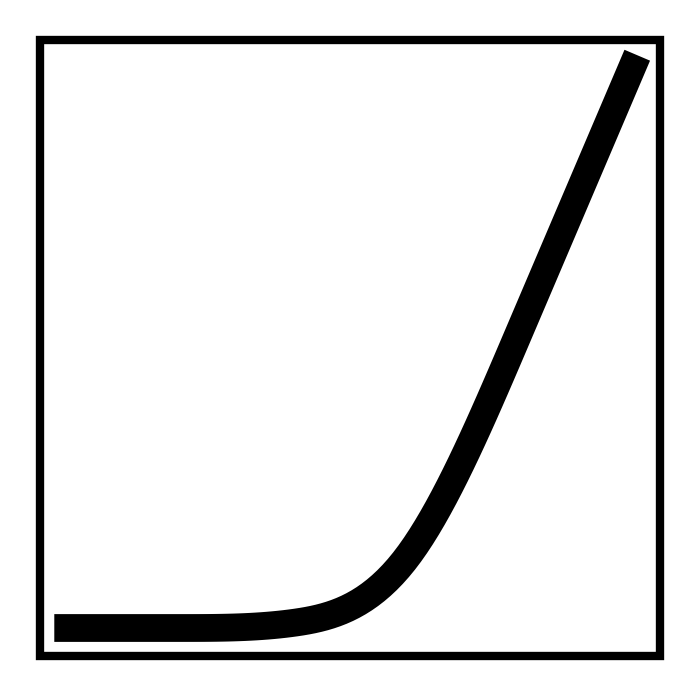}}
        \put(86,17){\includegraphics[width=0.045\textwidth]{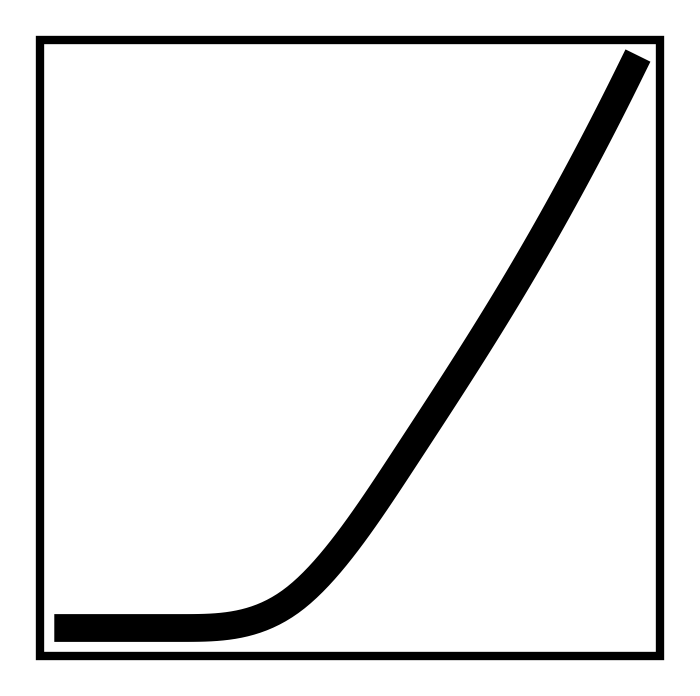}}
        \put(3,-10){$\hat{I}_1^{\bar{\vec{\Sigma}}}$}
        \put(18,-10){$\hat{J}_2^{\bar{\vec{\Sigma}}}$}
        \put(35,-10){$\hat{J}_3^{\bar{\vec{\Sigma}}}$}
        \put(51,-10){$-\hat{I}_1^{\bar{\vec{\Sigma}}}$}
        \put(70,-10){$-\hat{J}_2^{\bar{\vec{\Sigma}}}$}
        \put(89,-10){$-\hat{J}_3^{\bar{\vec{\Sigma}}}$}
        \put(50,64){$\omega^\text{KAN}$}
    \end{overpic}
    \vspace{2em}
    }
    \hfill
    \begin{minipage}[c]{0.1\textwidth}
        \centering
        \vspace{-8em}
        \raisebox{2.5em}{\large\boldmath$\overset{\text{Symb.}}{\longrightarrow}$}
    \end{minipage}
    \hfill
    \subfloat[Symbolic expression of KAN for dissipation potential\label{subfig-1:omega_KAN_61_symb}]{
    \begin{overpic}[width=.3\textwidth]{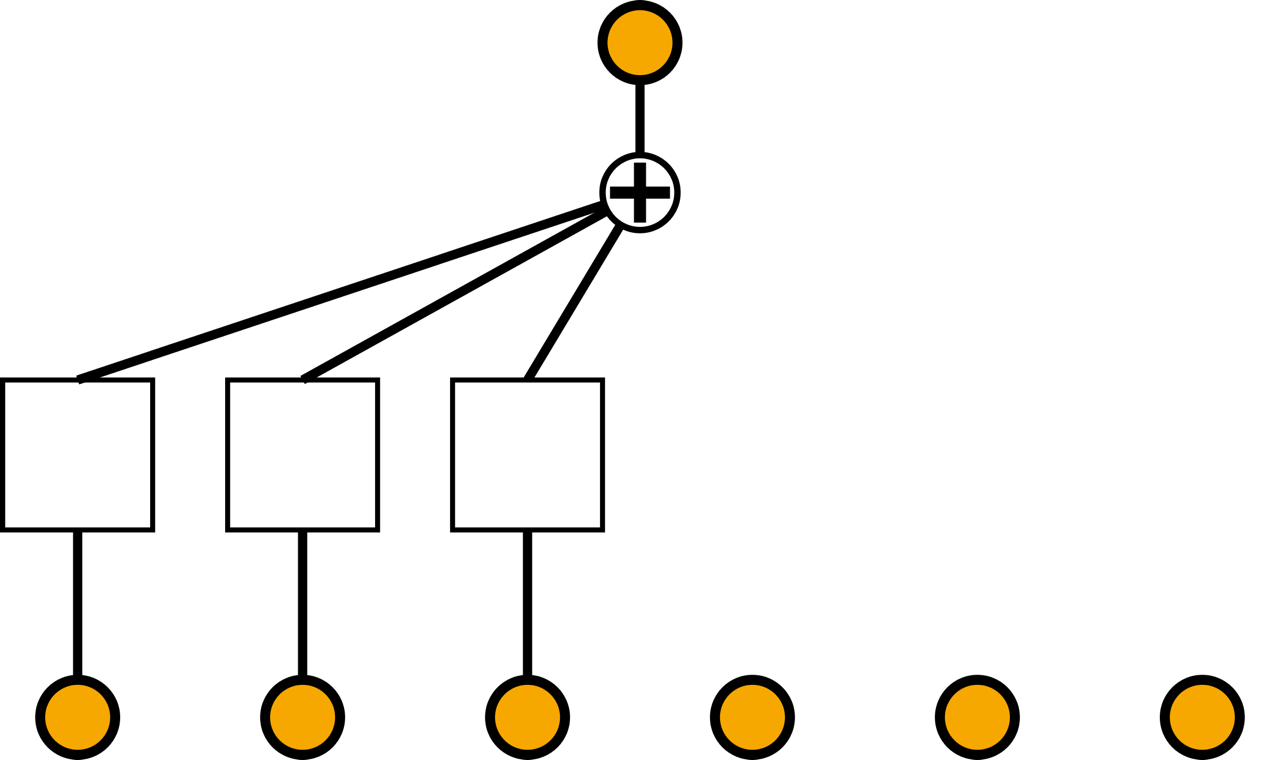}
        \put(-1,17){\includegraphics[width=0.045\textwidth]{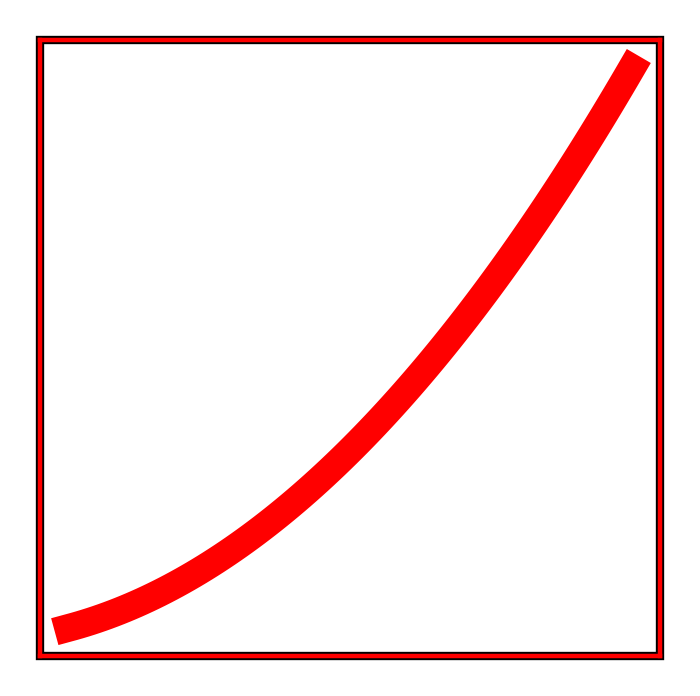}}
        \put(17,17){\includegraphics[width=0.045\textwidth]{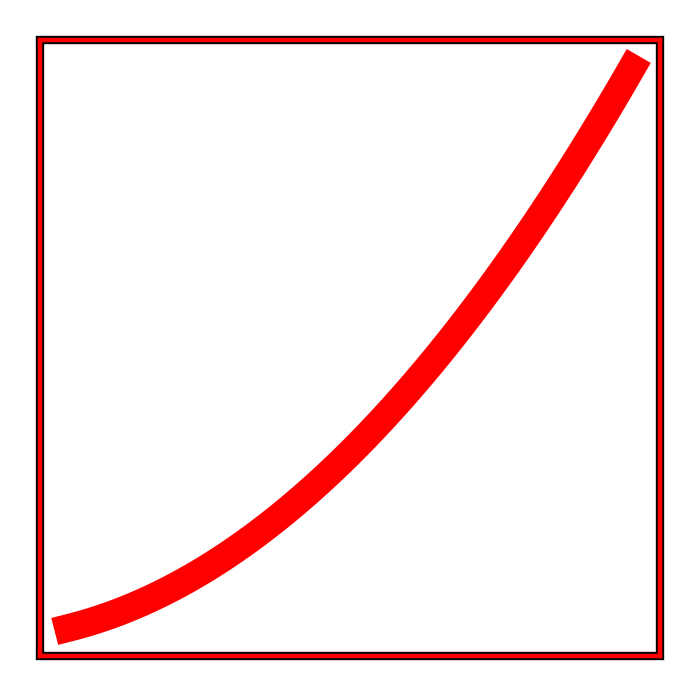}}
        \put(35,17){\includegraphics[width=0.045\textwidth]{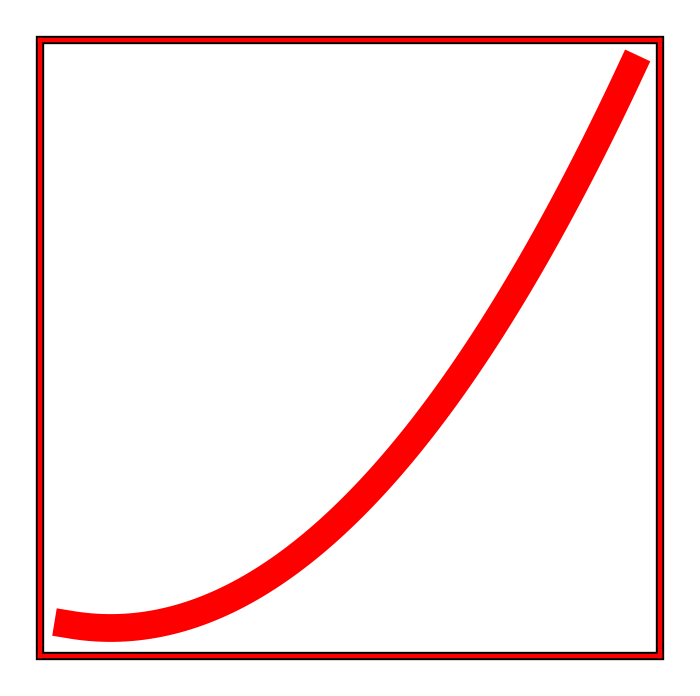}}
        \put(3,-10){$\hat{I}_1^{\bar{\vec{\Sigma}}}$}
        \put(18,-10){$\hat{J}_2^{\bar{\vec{\Sigma}}}$}
        \put(35,-10){$\hat{J}_3^{\bar{\vec{\Sigma}}}$}
        \put(51,-10){$-\hat{I}_1^{\bar{\vec{\Sigma}}}$}
        \put(70,-10){$-\hat{J}_2^{\bar{\vec{\Sigma}}}$}
        \put(89,-10){$-\hat{J}_3^{\bar{\vec{\Sigma}}}$}
        \put(50,64){$\omega^\text{KAN}$}
    \end{overpic}
    \vspace{2em}
    }
    \hspace{4em}

    \caption{KAN for dissipation potential using both positive and negative stress invariants as arguments (a) before and (b) after the symbolification of the activations.}
    \label{fig:KAN_61}
\end{figure}


\textbf{Explicit iCKAN: Adding principal invariants to arguments of dissipation potential.} We could enhance the arguments of the KAN for dissipation potential with the principal invariants, as shown in \autoref{eq:omega_5_arguments}. 


\textbf{Implicit iCKAN.} We consider the implicit iCKAN variant using three modified stress invariants as inputs to the KAN for dissipation potential (\autoref{eq:omega_3_arguments}). The accuracy of the inelastic stretch predicted by the helper LTC network is compared with the Newton–Raphson solution of the implicit evolution equation in \autoref{fig:PK2_S11_maxwell_implicit_helper} for the case of $F_{11}^{\max} = 1.3$ and $\dot{F}_{11} = 0.6$ s${-1}$. Although trained only on the gray-marked range, the LTC shows good generalization to larger deformation levels. 

\begin{figure}[H]
    \centering
    \includegraphics{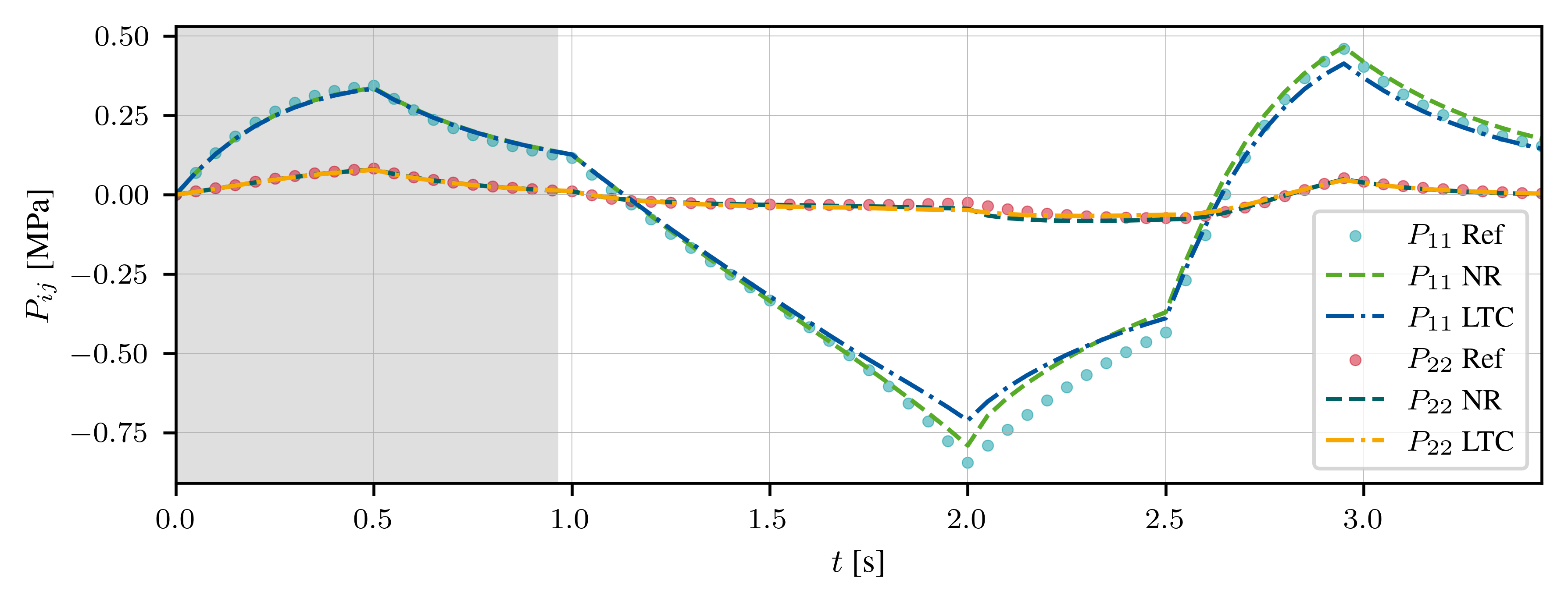}
    \put(-381,35){\includegraphics{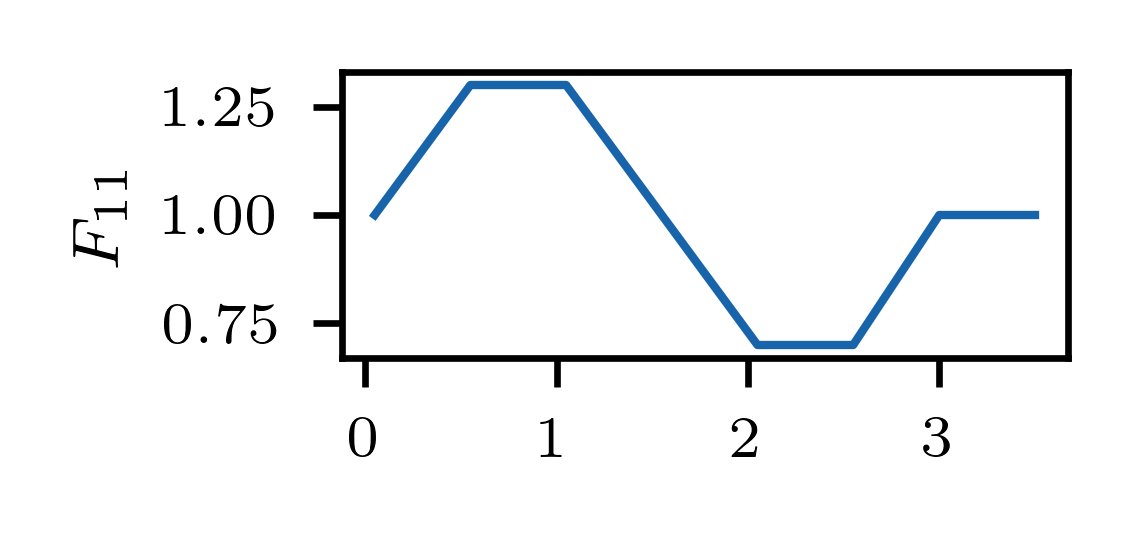}}
    \put(-310,37){\footnotesize $t$ [s]}
    \caption{Stress prediction of the implicit iCKAN. During training, the evolution equation is solved with a trainable LTC helper network. The gray region indicate the range of training data. Predictions using the trained LTC are compared with solutions obtained by a Newton–Raphson solver.}
\label{fig:PK2_S11_maxwell_implicit_helper}
\end{figure}

\textbf{Comparison between the explicit and implicit iCKAN.} For the case where three modified stress invariants are used as input arguments for the KAN for dissipation potential, the convergence behavior of the explicit and implicit time integration schemes is compared in \autoref{fig:loss_maxwell_en31_pot31}. 
\begin{figure}[H]
\centering
   \includegraphics[width=\linewidth]{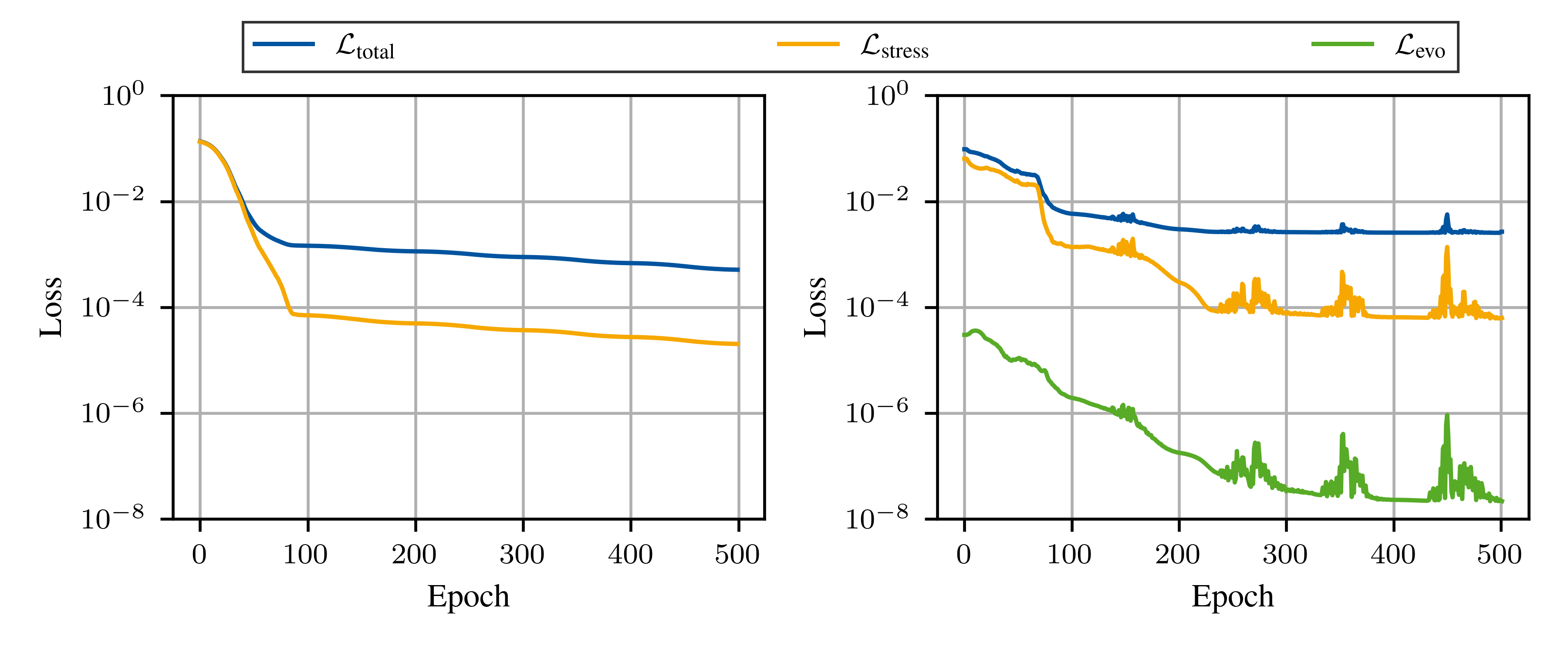}
    \put(-375,-10){(a) Explicit training}
    \put(-190,-10){(b) Implicit training with helper LTC}
    \caption{Training loss of the iCKAN model with L1 regularization using (a) the explicit integration scheme and (b) the implicit integration scheme with helper LTC to solve the evolution equation.}
    \label{fig:loss_maxwell_en31_pot31}
\end{figure}

\textbf{Detailed results.} \autoref{tab:results_synthetic} shows the detailed results for both explicit and implicit iCKAN before and after symbolification. The quantities $\mathcal{L}_{\text{stress}}$ and $\mathcal{L}_{\text{test}}$ denote the NMSE of the predicted stress on the training and full datasets, respectively. $\mathcal{L}_{\text{symb}}$ is the NMSE of the symbolified network on the full dataset, and $\mathcal{L}_{\text{evo}}$ is the loss of the implicit evolution equation when a helper LTC is used during training.

\begin{table}[H]
\caption{Detailed results of training and testing of iCKANs on the synthetic dataset.}
\label{tab:results_synthetic}
\centering\small
\begin{tabular}{lccccc}
    \hline iCKAN & $\omega^{\text{KAN}}(\bullet)$ &
    $\mathcal{L}_{\text{stress}}$
    & $\mathcal{L}_{\text{evo}}$
    & $\mathcal{L}_{\text{test}}$& $\mathcal{L}_{\text{symb}}$  \\
    \hline
    Explicit & $(\hat{I}_1^{\bar{\vec{\Sigma}}},\hat{J}_2^{\bar{\vec{\Sigma}}},\hat{J}_3^{\bar{\vec{\Sigma}}})$
    & $2.0 \cdot 10^{-5}$ & - & $6.3 \cdot 10^{-4}$ & $3.8 \cdot 10^{-4}$ \\
    Explicit & $(\hat{I}_1^{\bar{\vec{\Sigma}}},\hat{J}_2^{\bar{\vec{\Sigma}}},\hat{J}_3^{\bar{\vec{\Sigma}}}, \hat{I}_2^{\bar{\vec{\Sigma}}},\hat{I}_3^{\bar{\vec{\Sigma}}})$ 
    & $4.2 \cdot 10^{-4}$   & - & $9.7 \cdot 10^{-4}$  &-\\
    Implicit (LTC) & $(\hat{I}_1^{\bar{\vec{\Sigma}}},\hat{J}_2^{\bar{\vec{\Sigma}}},\hat{J}_3^{\bar{\vec{\Sigma}}})$ 
    & $6.9 \cdot 10^{-5}$ & $2.9 \cdot 10^{-8}$ & $7.3 \cdot 10^{-4}$  &-\\
    Implicit (NR) & $(\hat{I}_1^{\bar{\vec{\Sigma}}},\hat{J}_2^{\bar{\vec{\Sigma}}},\hat{J}_3^{\bar{\vec{\Sigma}}})$ 
    & -      & -     & $3.4 \cdot 10^{-4}$ & $3.9 \cdot 10^{-4}$ \\\hline
\end{tabular}
\end{table}

\color{author}
\subsection{Time increment sensitivity}
\label{app:sensitivity_analysis_time_increment}

Furthermore, a sensitivity analysis with respect to the time increment is performed on the synthetic dataset. An explicit iCKAN model is trained using the three modified stress invariants as inputs to the dissipation potential (\autoref{eq:omega_3_arguments}). The analysis considers the loading case $F_{11}^{\max}=1.3$ and $\dot{F}_{11}=0.6~\mathrm{s}^{-1}$. During training, a time increment of $\Delta t = 0.5~\mathrm{s}$ is employed.

To investigate the influence of the temporal discretization, the trained model is evaluated using time increments of \mbox{$\Delta t = \{0.01, 0.05, 0.1\}~\mathrm{s}$}. For each time increment, the evolution equation is solved using both explicit and implicit time-integration schemes. The resulting stress predictions are shown in \autoref{fig:PK2_S11_maxwell_time_increment}.

As expected, the explicit integration scheme exhibits an increased sensitivity to larger time increments, whereas the implicit scheme remains comparatively robust. The observed error variations are therefore primarily attributed to the numerical properties of the time integration schemes rather than limitations of the iCKAN architecture itself. Since the time increment $\Delta t$ is provided as an input to the model, the trained iCKAN can consistently account for different temporal resolutions, provided that the integration scheme remains numerically stable.

\begin{figure}[H]
    \centering
    \includegraphics{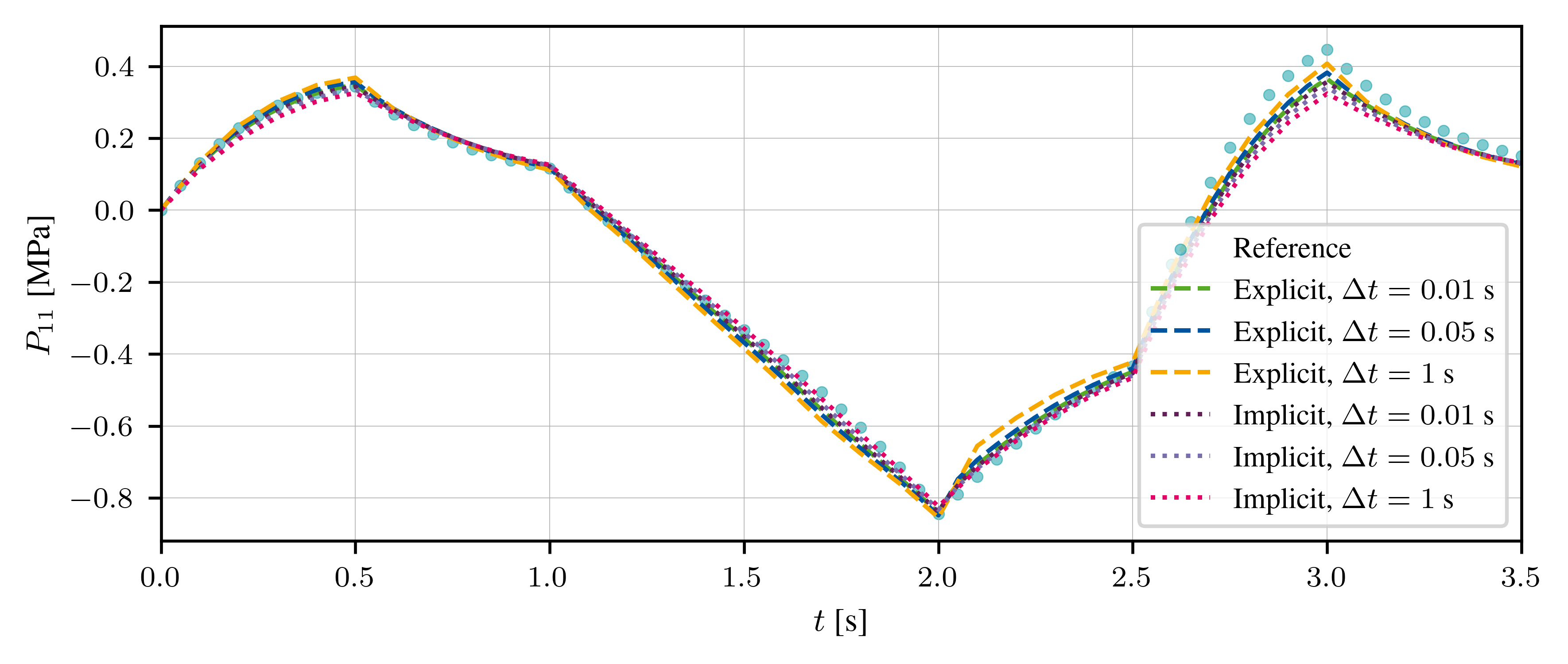}
    \put(-381,35){\includegraphics{results_synthetic_Uniaxial_Cycle_F11_vs_time_t05_f03.png}}
    \put(-310,37){\footnotesize $t$ [s]}
    \caption{Time-increment sensitivity analysis of the iCKAN. Stress predictions obtained with explicit and implicit time-integration schemes for different time increments $\Delta t$, demonstrating the influence of temporal resolution on prediction accuracy and numerical stability.}
\label{fig:PK2_S11_maxwell_time_increment}
\end{figure}
\color{black}

\subsection{Additional data for training iCKANs on experimental data}
\label{app:info_experiment}

\textbf{Hyperparameters.} \autoref{tab:hyperparameters} lists the hyperparameters of the iCKANs from Section \ref{sec:results} for the material model discovery of VHB 4910 and VHB 4905.
\begin{table}[H]
    \centering
    \small
    \begin{tabularx}{\textwidth}{lXXXX}
        \hline
        \multirow{3}{*}{Hyperparameter} & \multicolumn{4}{c}{Value} \\
        \cline{2-5}
        & \multicolumn{2}{c}{VHB 4910} & \multicolumn{2}{c}{VHB 4905} \\
        \cline{2-5}
        & free energy & dissipation potential & free energy & dissipation potential\\
        \hline
        Topology  &[3,2,1] & [3,2,1]& [4,2,1] & [3,2,1]\\
        Order of splines & 3& 3& 3 &3\\
        Grid intervals & 1&1 &1 &1\\
        \hline
        Optimizer & \multicolumn{2}{c}{AMSGrad} &\multicolumn{2}{c}{AMSGrad} \\
        Scheduler & \multicolumn{2}{c}{Cyclic LR} &\multicolumn{2}{c}{Cyclic LR} \\ 
        Base Learning rate & \multicolumn{2}{c}{$5 \cdot 10^{-3}$} & \multicolumn{2}{c}{$1 \cdot 10^{-4}$}\\
        Max Learning rate & \multicolumn{2}{c}{$5 \cdot 10^{-2}$} & \multicolumn{2}{c}{$1 \cdot 10^{-3}$}\\
        Clip gradient norm & \multicolumn{2}{c}{$0.1$} & \multicolumn{2}{c}{$0.1$}\\
        L1 regularization magnitude & \multicolumn{2}{c}{$1 \cdot 10^{-5}$}&\multicolumn{2}{c}{$1 \cdot 10^{-4}$} \\
        \hline
    \end{tabularx}
    \caption{Hyperparameters for the iCKAN models used in the numerical examples in Section \ref{sec:results} for VHB 4910 and VHB 4905.}
    \label{tab:hyperparameters}
\end{table}

\textbf{Detailed results.} \autoref{tab:results_VHB} shows the detailed results of iCKANs performance on VHB 4910 and VHB 4905 polymer. The stress loss before symbolification is denoted as $\mathcal{L}$ and the stress loss after symbolification is denoted as $\hat{\mathcal{L}}$. \autoref{fig:duo_iCKAN_VHB4910_4905_explicit_loss} shows the loss convergence of training iCKANs from Section \ref{sec:results} for the material model discovery of VHB 4910 and VHB 4905.

\begin{table}[H]
\caption{Detailed results of training and testing of iCKANs on VHB 4910 and VHB 4905 dataset.}
\label{tab:results_VHB}
\centering\small
\begin{tabular}{lccccc}
    \hline  & 
    $\mathcal{L}_{\text{train}}$
    & $\mathcal{L}_{\text{test}}$& $\hat{\mathcal{L}}_\text{train}$ & $\hat{\mathcal{L}}_{\text{test}}$  \\
    \hline
    VHB 4910 &  $7.6 \cdot 10^{-5}$   & $1.3 \cdot 10^{-3}$ & $1.9 \cdot 10^{-4}$ &$1.9 \cdot 10^{-3}$ \\
    VHB 4905 for $\theta \in [0,80]\, ^\circ \mathrm{C}$ &  $2.2 \cdot 10^{-4}$ & $6.5 \cdot 10^{-4}$ & $4.1 \cdot 10^{-4}$  & $6.9 \cdot 10^{-4}$\\
    VHB 4905 at $\theta = 20\, ^\circ \mathrm{C}$ &  $9.98 \cdot 10^{-5}$ & $1.5 \cdot 10^{-4}$ & -- & -- \\\hline
    
\end{tabular}
\end{table}

\begin{figure}[H]
    \centering
    \small
    \begin{subfigure}[t]{0.49\textwidth}
        \centering
        \includegraphics[width=\textwidth]{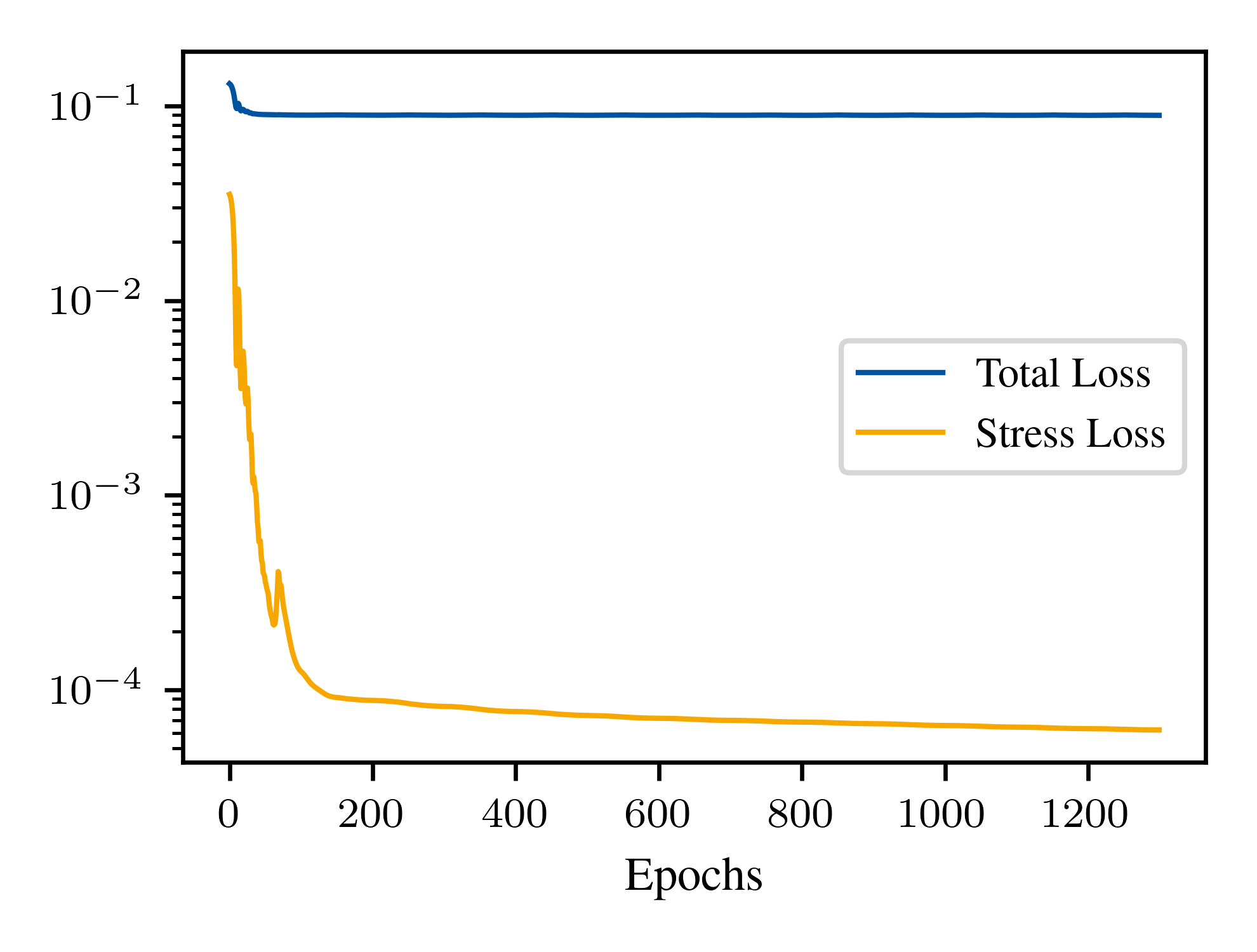}
        \put(-225,80){\rotatebox{90}{Loss}}
        \caption{VHB 4910}
    \end{subfigure}
    \hfill
    \begin{subfigure}[t]{0.49\textwidth}
        \centering
        \includegraphics[width=\textwidth]{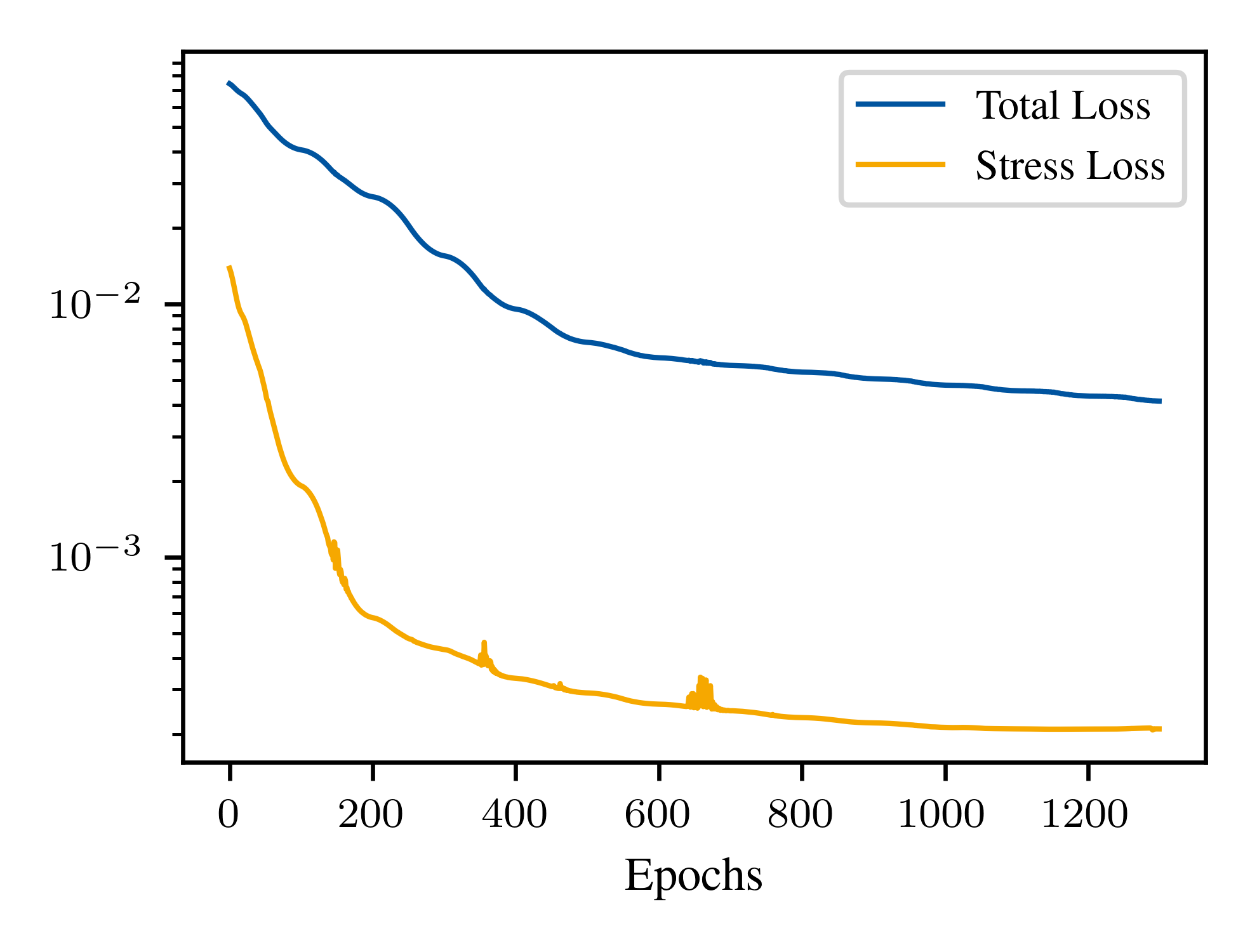}
        \put(-225,80){\rotatebox{90}{Loss}}
        \caption{VHB 4905}
    \end{subfigure}
    \caption{Losses during training of the iCKAN for the experimental data of (a) VHB 4910 and (b) VHB4905 polymer. The losses are plotted on a logarithmic scale. For training, 1300 epochs were used.}
    \label{fig:duo_iCKAN_VHB4910_4905_explicit_loss}
\end{figure}

\color{black}

%% file: standalone_iCKAN_implicit.tex
\begin{tikzpicture}[
    font=\rmfamily\small,
    >=LaTeX,
    cell/.style={
        rectangle, 
        rounded corners=5mm, 
        draw,
        very thick,
        },
    operator/.style={
        circle,
        draw,
        inner sep=-0.5pt,
        minimum height =.2cm,
        },
    function/.style={
        ellipse,
        draw,
        inner sep=1pt
        },
    ct/.style={
        circle,
        draw,
        line width = .75pt,
        minimum width=1cm,
        inner sep=1pt,
        },
    ct2/.style={
        circle,
        draw,
        line width = 1.5pt,
        minimum width=1.2cm,
        inner sep=1pt,
        },
    gt/.style={
        rectangle,
        draw,
        minimum width=4mm,
        minimum height=3mm,
        inner sep=1pt
        },
    mylabel/.style={
        font=\small\rmfamily
        },
    ArrowC1/.style={
        rounded corners=.2cm,
        thick,
        },
    ArrowC2/.style={
        rounded corners=.5cm,
        thick,
        },
    ArrowC3/.style={
        rounded corners=.3cm,
        thick,
        },
    ]

    \node [cell, minimum height = 4cm, minimum width=9cm,fill=RWTHBlau!5] at (0,0){} ;

    \node [ct, fill=RWTHBlau!25] (ibox1) at (-3,-1.20) {$\bar{\vec{C}}_e^{\text{trial}}$};
    \node [inner sep=0pt, outer sep=0pt, draw=none] (ibox2) at (-1.5,-1.20) {\includegraphics[width=1.2cm]{figures_KAN.png}};
    \node [ct, fill=RWTHBlau!25] (ibox3) at (0.25,-1.20) {$\bar{\vec{\Sigma}}^{\text{trial}}$};
    \node [inner sep=0pt, outer sep=0pt, draw=none] (ibox4) at (1.75,-1.20) {\includegraphics[width=1.2cm]{figures_KAN2.png}};
    \node [ct, fill=RWTHBlau!25] (ibox5) at (3.5,-1.20) {$\bar{\vec{D}}_i^{\text{trial}}$};

    \node [ct2, fill=RWTHGruen!25] (LTC) at (-0.5,0.75) {LTC};
    \node [ct, fill=RWTHBlau!25] (add1) at (1.25,0.75) {$\bar{\vec{C}}_e$};
    \node [inner sep=0pt, outer sep=0pt, draw=none] (psi2) at (2.75,0.75) {\includegraphics[width=1.2cm]{figures_KAN.png}};

    \node[ct, label={[mylabel]State}, fill=RWTHOrange!25, align=center] (h) at (-5.5,-1.20) {$\vec{C}_{t-1}$,\\$\vec{U}_{i,t-1}$};
    \node[ct, label={[mylabel]left:Input}, fill=RWTHOrange!25] (x) at (-4.25,-3) {$\vec{F}_{t},\Delta t$};

    \node[ct, label={[mylabel]State}, fill=RWTHBordeaux!25,align=center] (h1) at (5.5,1.75) {$\vec{C}_{t}$,\\$\vec{U}_{i,t}$};
    \node[ct, label={[mylabel]left:Output}, fill=RWTHBordeaux!25] (x2) at (2.75,3) {$\vec{P}_{t}$};

    \node at (-0.8,-0.53) {\normalsize $\psi^{\text{trial}}$};
    \node at (2.5,-0.53) {\normalsize $\omega^{\text{trial}}$};
    \node at (3.25,1.4) {\normalsize $\psi_t$};

    \draw [->, ArrowC1] (h) -- (ibox1);
    \draw [-, ArrowC1] (ibox1) -- (ibox2);
    \draw [->, ArrowC1] (ibox2) -- (ibox3);
    \draw [-, ArrowC1] (ibox3) -- (ibox4);
    \draw [->, ArrowC2] (x) |- (ibox1);
    \draw [->, ArrowC2] (x) |- (LTC);
    \draw [->, ArrowC1] (LTC) -- (add1);
    \draw [-, ArrowC1] (add1) -- (psi2);
    \draw [->, ArrowC1] (ibox4) -- (ibox5);
    \draw [->, ArrowC1] (LTC) |- (h1);
    \draw [->, ArrowC1] (psi2) -- (x2);
    \draw [-, ArrowC1] (ibox5) |- (1,-0.125);
    \draw [->, ArrowC1] (1,-0.125) -| (LTC);

\end{tikzpicture}

%% file: Appendix_B.tex
\section{Declarations}
\label{app:B}

\subsection{Acknowledgements}

This work was supported by the Emmy Noether Grant 533187597 by the Deutsche Forschungsgemeinschaft to Kevin Linka. Christian Cyron gratefully acknowledges support of the European Research Council (ERC) under the European Union’s Horizon Europe research and innovation programme (grant agreement No 101167207 / MechVivo).

\subsection{Conflict of interest}

The authors of this work certify that they have no affiliations with or involvement in any organization or entity with any financial interest (such as honoraria; participation in speakers' bureaus; membership, employment, consultancies, stock ownership, or other equity interest; and expert testimony or patent-licensing arrangements), or non-financial interest (such as personal or professional relationships, affiliations, knowledge or beliefs) in the subject matter or materials discussed in this manuscript.

\subsection{Data availability}

The developed source code of the iCKAN architecture and accompanying example data sets are available at \\\mbox{https://github.com/AME-COMED/iCKAN}.

\subsection{Contributions by the authors}
\textbf{Chenyi Ji:} Writing – original draft, Writing – review \& editing, Visualization, Validation, Software, Methodology, Investigation, Formal analysis, Data curation, Conceptualization.
\textbf{Kian P. Abdolazizi:} Writing – review \& editing, Supervision, Software, Methodology, Data curation, Conceptualization.
\textbf{Hagen Holthusen:} Writing – review \& editing, Supervision, Methodology, Conceptualization.
\textbf{Chrstian J. Cyron:} Writing – review \& editing, Methodology, Conceptualization.
\textbf{Kevin Linka:} Writing – review \& editing, Supervision, Methodology, Funding acquisition, Conceptualization.

\subsection{Declaration of generative AI and AI-assisted technologies in the manuscript preparation process}
During the preparation of this work the authors used OpenAI's ChatGPT in order to refine the language. After using this tool/service, the authors reviewed and edited the content as needed and take full responsibility for the content of the published article.

%% file: bibliography.bib
@article{abdolazizi2024viscoelastic,
  title={Viscoelastic constitutive artificial neural networks (vCANNs)--A framework for data-driven anisotropic nonlinear finite viscoelasticity},
  author={Abdolazizi, Kian P and Linka, Kevin and Cyron, Christian J},
  journal={Journal of computational physics},
  volume={499},
  pages={112704},
  year={2024},
  publisher={Elsevier}
}

@article{abdolazizi2025constitutive,
archivePrefix = {arXiv},
arxivId = {2502.05682},
author = {Abdolazizi, Kian P. and Aydin, Roland C. and Cyron, Christian J. and Linka, Kevin},
doi = {10.1016/j.jmps.2025.106212},
eprint = {2502.05682},
issn = {00225096},
journal = {Journal of the Mechanics and Physics of Solids},
month = {oct},
pages = {106212},
title = {{Constitutive Kolmogorov–Arnold Networks (CKANs): Combining accuracy and interpretability in data-driven material modeling}},
url = {http://arxiv.org/abs/2502.05682 https://linkinghub.elsevier.com/retrieve/pii/S0022509625001887},
volume = {203},
year = {2025}
}

@article{abdolazizi2026thermodynamically,
  title={Thermodynamically consistent viscoelastic constitutive artificial neural networks: Automating the pipeline from experimental data to finite element simulations},
  author={Abdolazizi, Kian P and Aydin, Roland C and Cyron, Christian J and Linka, Kevin},
  journal={Computer Methods in Applied Mechanics and Engineering},
  volume={460},
  pages={119080},
  year={2026},
  publisher={Elsevier}
}

@article{abdusalamov2023automatic,
  title={Automatic generation of interpretable hyperelastic material models by symbolic regression},
  author={Abdusalamov, Rasul and Hillg{\"a}rtner, Markus and Itskov, Mikhail},
  journal={International Journal for Numerical Methods in Engineering},
  volume={124},
  number={9},
  pages={2093--2104},
  year={2023},
  publisher={Wiley Online Library}
}

@article{abueidda2025deepokan,
  title={Deepokan: Deep operator network based on kolmogorov arnold networks for mechanics problems},
  author={Abueidda, Diab W and Pantidis, Panos and Mobasher, Mostafa E},
  journal={Computer Methods in Applied Mechanics and Engineering},
  volume={436},
  pages={117699},
  year={2025},
  publisher={Elsevier}
}

@inproceedings{amos2017input,
  title={Input convex neural networks},
  author={Amos, Brandon and Xu, Lei and Kolter, J Zico},
  booktitle={International conference on machine learning},
  pages={146--155},
  year={2017},
  organization={PMLR}
}

@inproceedings{as2023mechanics,
  title={A mechanics-informed neural network framework for data-driven nonlinear viscoelasticity},
  author={As' ad, Faisal and Farhat, Charbel},
  booktitle={AIAA SCITECH 2023 Forum},
  pages={0949},
  year={2023}
}

@article{ball1976convexity,
  title={Convexity conditions and existence theorems in nonlinear elasticity},
  author={Ball, John M},
  journal={Archive for rational mechanics and Analysis},
  volume={63},
  number={4},
  pages={337--403},
  year={1976},
  publisher={Springer}
}

@article{benoit1924note,
  title={Note sur une m{\'e}thode de r{\'e}solution des {\'e}quations normales provenant de l’application de la m{\'e}thode des moindres carr{\'e}s {\`a} un syst{\`e}me d’{\'e}quations lin{\'e}aires en nombre inf{\'e}rieur {\`a} celui des inconnues (Proc{\'e}d{\'e} du Commandant Cholesky)},
  author={Benoit, Commandant},
  journal={Bulletin g{\'e}od{\'e}sique},
  volume={2},
  number={1},
  pages={67--77},
  year={1924}
}

@article{boes2024accounting,
title = {Accounting for plasticity: An extension of inelastic constitutive artificial neural networks},
journal = {European Journal of Mechanics - A/Solids},
volume = {117},
pages = {105998},
year = {2026},
issn = {0997-7538},
author = {Birte Boes and Jaan-Willem Simon and Hagen Holthusen},
}

@article{bomarito2021development,
  title={Development of interpretable, data-driven plasticity models with symbolic regression},
  author={Bomarito, GF and Townsend, TS and Stewart, KM and Esham, KV and Emery, JM and Hochhalter, JD},
  journal={Computers \& Structures},
  volume={252},
  pages={106557},
  year={2021},
  publisher={Elsevier}
}

@article{dammass2025neural,
  title={Neural networks meet phase-field: A hybrid fracture model},
  author={Damma{\ss}, Franz and Kalina, Karl A and K{\"a}stner, Markus},
  journal={Computer Methods in Applied Mechanics and Engineering},
  volume={440},
  pages={117937},
  year={2025},
  publisher={Elsevier}
}

@article{deschatre2025input,
  title={Input Convex Kolmogorov Arnold Networks},
  author={Deschatre, Thomas and Warin, Xavier},
  journal={arXiv preprint arXiv:2505.21208},
  year={2025}
}

@article{fernandez2021anisotropic,
  title={Anisotropic hyperelastic constitutive models for finite deformations combining material theory and data-driven approaches with application to cubic lattice metamaterials},
  author={Fern{\'a}ndez, Mauricio and Jamshidian, Mostafa and B{\"o}hlke, Thomas and Kersting, Kristian and Weeger, Oliver},
  journal={Computational Mechanics},
  volume={67},
  number={2},
  pages={653--677},
  year={2021},
  publisher={Springer}
}

@article{fernandez2022material,
  title={Material modeling for parametric, anisotropic finite strain hyperelasticity based on machine learning with application in optimization of metamaterials},
  author={Fern{\'a}ndez, Mauricio and Fritzen, Felix and Weeger, Oliver},
  journal={International Journal for Numerical Methods in Engineering},
  volume={123},
  number={2},
  pages={577--609},
  year={2022},
  publisher={Wiley Online Library}
}

@phdthesis{flaschel2023automated,
  title={Automated discovery of material models in continuum solid mechanics},
  author={Flaschel, Moritz},
  year={2023},
  school={ETH Zurich}
}

@article{flaschel2021unsupervised,
  title={Unsupervised discovery of interpretable hyperelastic constitutive laws},
  author={Flaschel, Moritz and Kumar, Siddhant and De Lorenzis, Laura},
  journal={Computer Methods in Applied Mechanics and Engineering},
  volume={381},
  pages={113852},
  year={2021},
  publisher={Elsevier}
}

@article{flaschel2023automated2,
  title={Automated discovery of generalized standard material models with EUCLID},
  author={Flaschel, Moritz and Kumar, Siddhant and De Lorenzis, Laura},
  journal={Computer Methods in Applied Mechanics and Engineering},
  volume={405},
  pages={115867},
  year={2023},
  publisher={Elsevier}
}

@article{flaschel2025convex,
  title={Convex neural networks learn generalized standard material models},
  author={Flaschel, Moritz and Steinmann, Paul and De Lorenzis, Laura and Kuhl, Ellen},
  journal={Journal of the Mechanics and Physics of Solids},
  volume={200},
  pages={106103},
  year={2025},
  publisher={Elsevier}
}

@article{flory1961thermodynamic,
  title={Thermodynamic relations for high elastic materials},
  author={Flory, PJ128117},
  journal={Transactions of the Faraday Society},
  volume={57},
  pages={829--838},
  year={1961},
  publisher={Royal Society of Chemistry}
}

@article{fuhg2024review,
  title={A review on data-driven constitutive laws for solids},
  author={Fuhg, Jan Niklas and Padmanabha, Govinda Anantha and Bouklas, Nikolaos and Bahmani, Bahador and Sun, WaiChing and Vlassis, Nikolaos N and Flaschel, Moritz and Carrara, Pietro and De Lorenzis, Laura},
  journal={arXiv preprint arXiv:2405.03658},
  year={2024}
}

@article{germain1983continuum,
  title={Continuum thermodynamics},
  author={Germain, Paul and Nguyen, Quoc Son and Suquet, Pierre},
  journal={Journal of applied mechanics},
  volume={50},
  pages={1010--1020},
  year={1983}
}

@article{guo2025history,
  title={History-Aware Neural Operator: Robust Data-Driven Constitutive Modeling of Path-Dependent Materials},
  author={Guo, Binyao and Lin, Zihan and He, QiZhi},
  journal={arXiv preprint arXiv:2506.10352},
  year={2025}
}

@article{halphen1975materiaux,
  title={Sur les mat{\'e}riaux standard g{\'e}n{\'e}ralis{\'e}s},
  author={Halphen, Bernard and Nguyen, Quoc Son},
  journal={Journal de m{\'e}canique},
  volume={14},
  number={1},
  pages={39--63},
  year={1975}
}

@article{hartmann2003polyconvexity,
  title={Polyconvexity of generalized polynomial-type hyperelastic strain energy functions for near-incompressibility},
  author={Hartmann, Stefan and Neff, Patrizio},
  journal={International journal of solids and structures},
  volume={40},
  number={11},
  pages={2767--2791},
  year={2003},
  publisher={Elsevier}
}

@inproceedings{hasani2021liquid,
  title={Liquid time-constant networks},
  author={Hasani, Ramin and Lechner, Mathias and Amini, Alexander and Rus, Daniela and Grosu, Radu},
  booktitle={Proceedings of the AAAI Conference on Artificial Intelligence},
  volume={35},
  pages={7657--7666},
  year={2021}
}

@article{hospedales2021meta,
  title={Meta-learning in neural networks: A survey},
  author={Hospedales, Timothy and Antoniou, Antreas and Micaelli, Paul and Storkey, Amos},
  journal={IEEE transactions on pattern analysis and machine intelligence},
  volume={44},
  number={9},
  pages={5149--5169},
  year={2021},
  publisher={IEEE}
}

@article{hossain2012experimental,
  title={Experimental study and numerical modelling of VHB 4910 polymer},
  author={Hossain, Mokarram and Vu, Duc Khoi and Steinmann, Paul},
  journal={Computational Materials Science},
  volume={59},
  pages={65--74},
  year={2012},
  publisher={Elsevier}
}

@article{holthusen2023inelastic,
title = {Inelastic material formulations based on a co-rotated intermediate configuration—Application to bioengineered tissues},
journal = {Journal of the Mechanics and Physics of Solids},
volume = {172},
pages = {105174},
year = {2023},
issn = {0022-5096},
author = {Hagen Holthusen and Christiane Rothkranz and Lukas Lamm and Tim Brepols and Stefanie Reese},
}

@article{holthusen2024theory,
  title={Theory and implementation of inelastic constitutive artificial neural networks},
  author={Holthusen, Hagen and Lamm, Lukas and Brepols, Tim and Reese, Stefanie and Kuhl, Ellen},
  journal={Computer Methods in Applied Mechanics and Engineering},
  volume={428},
  pages={117063},
  year={2024},
  publisher={Elsevier}
}

@article{holthusen2024PAMM,
author = {Holthusen, Hagen and Lamm, Lukas and Brepols, Tim and Reese, Stefanie and Kuhl, Ellen},
title = {Polyconvex inelastic constitutive artificial neural networks},
journal = {PAMM},
volume = {24},
number = {3},
pages = {e202400032},
year = {2024}
}

@article{holthusen2025automated,
  title={Automated model discovery for tensional homeostasis: Constitutive machine learning in growth and remodeling},
  author={Holthusen, Hagen and Brepols, Tim and Linka, Kevin and Kuhl, Ellen},
  journal={Computers in biology and medicine},
  volume={186},
  pages={109691},
  year={2025},
  publisher={Elsevier}
}

@article{holthusen2025generalized,
title = {A generalized dual potential for inelastic Constitutive Artificial Neural Networks: A JAX implementation at finite strains},
journal = {Journal of the Mechanics and Physics of Solids},
volume = {206},
pages = {106337},
year = {2026},
issn = {0022-5096},
author = {Hagen Holthusen and Kevin Linka and Ellen Kuhl and Tim Brepols},
}

@article{holthusen2026complement,
  title={A complement to neural networks for anisotropic inelasticity at finite strains},
  author={Holthusen, Hagen and Kuhl, Ellen},
  journal={Computer Methods in Applied Mechanics and Engineering},
  volume={450},
  pages={118612},
  year={2026},
  publisher={Elsevier}
}

@book{holzapfel2000nonlinear,
  title     = {Nonlinear Solid Mechanics: A Continuum Approach for Engineering Science},
  author    = {Holzapfel, Gerhard A.},
  year      = {2000},
  publisher = {John Wiley \& Sons},
  address   = {Chichester},
  isbn      = {0-471-82319-8}
}

@article{hornik1989multilayer,
  title={Multilayer feedforward networks are universal approximators},
  author={Hornik, Kurt and Stinchcombe, Maxwell and White, Halbert},
  journal={Neural networks},
  volume={2},
  number={5},
  pages={359--366},
  year={1989},
  publisher={Elsevier}
}

@article{huang2022variational,
  title={Variational Onsager Neural Networks (VONNs): A thermodynamics-based variational learning strategy for non-equilibrium PDEs},
  author={Huang, Shenglin and He, Zequn and Reina, Celia},
  journal={Journal of the Mechanics and Physics of Solids},
  volume={163},
  pages={104856},
  year={2022},
  publisher={Elsevier}
}

@article{hudobivnik2016closed,
  title={Closed-form representation of matrix functions in the formulation of nonlinear material models},
  author={Hudobivnik, Bla{\v{z}} and Korelc, Jo{\v{z}}e},
  journal={Finite Elements in Analysis and Design},
  volume={111},
  pages={19--32},
  year={2016},
  publisher={Elsevier}
}

@article{jadoon2025automated,
  title={Automated model discovery of finite strain elastoplasticity from uniaxial experiments},
  author={Jadoon, Asghar Arshad and Meyer, Knut Andreas and Fuhg, Jan Niklas},
  journal={Computer Methods in Applied Mechanics and Engineering},
  volume={435},
  pages={117653},
  year={2025},
  publisher={Elsevier}
}

@article{ji2024comprehensive,
  title={A comprehensive survey on kolmogorov arnold networks (kan)},
  author={Ji, Tianrui and Hou, Yuntian and Zhang, Di},
  journal={arXiv preprint arXiv:2407.11075},
  year={2024}
}

@article{jones2022neural,
  title={A neural ordinary differential equation framework for modeling inelastic stress response via internal state variables},
  author={Jones, Reese E and Frankel, Ari L and Johnson, KL},
  journal={Journal of Machine Learning for Modeling and Computing},
  volume={3},
  number={3},
  year={2022},
  publisher={Begel House Inc.}
}

@article{kabliman2021application,
  title={Application of symbolic regression for constitutive modeling of plastic deformation},
  author={Kabliman, Evgeniya and Kolody, Ana Helena and Kronsteiner, Johannes and Kommenda, Michael and Kronberger, Gabriel},
  journal={Applications in Engineering Science},
  volume={6},
  pages={100052},
  year={2021},
  publisher={Elsevier}
}

@article{kalina2025neural,
  title={Neural networks meet anisotropic hyperelasticity: A framework based on generalized structure tensors and isotropic tensor functions},
  author={Kalina, Karl A and Brummund, J{\"o}rg and Sun, WaiChing and K{\"a}stner, Markus},
  journal={Computer Methods in Applied Mechanics and Engineering},
  volume={437},
  pages={117725},
  year={2025},
  publisher={Elsevier}
}

@article{kalina2026physics,
  title={A physics-augmented neural network framework for finite strain incompressible viscoelasticity},
  author={Kalina, Karl A and Brummund, J{\"o}rg and K{\"a}stner, Markus},
  journal={Computer Methods in Applied Mechanics and Engineering},
  volume={455},
  pages={118892},
  year={2026},
  publisher={Elsevier}
}

@article{kirchdoerfer2016data,
  title={Data-driven computational mechanics},
  author={Kirchdoerfer, Trenton and Ortiz, Michael},
  journal={Computer Methods in Applied Mechanics and Engineering},
  volume={304},
  pages={81--101},
  year={2016},
  publisher={Elsevier}
}

@article{kiyani2025optimizing,
  title={Optimizing the optimizer for physics-informed neural networks and Kolmogorov-Arnold networks},
  author={Kiyani, Elham and Shukla, Khemraj and Urb{\'a}n, Jorge F and Darbon, J{\'e}r{\^o}me and Karniadakis, George Em},
  journal={Computer Methods in Applied Mechanics and Engineering},
  volume={446},
  pages={118308},
  year={2025},
  publisher={Elsevier}
}

@book{kolmogorov1961representation,
  title={On the representation of continuous functions of several variables by superpositions of continuous functions of a smaller number of variables},
  author={Kolmogorov, Andre{\u\i} Nikolaevich},
  year={1961},
  publisher={American Mathematical Society}
}

@article{li2025long,
  title={A long short-term memory-based constitutive modeling framework for capturing strain path dependence in plastic deformation},
  author={Li, Jin-Zhao and Guan, Zhi-Ping and Chen, Jiong-Rui and Jin, Hui-Chao},
  journal={Mechanics of Materials},
  volume={205},
  pages={105325},
  year={2025},
  publisher={Elsevier}
}

@article{liao2020thermo,
  title={On thermo-viscoelastic experimental characterization and numerical modelling of VHB polymer},
  author={Liao, Zisheng and Hossain, Mokarram and Yao, Xiaohu and Mehnert, Markus and Steinmann, Paul},
  journal={International Journal of Non-Linear Mechanics},
  volume={118},
  pages={103263},
  year={2020},
  publisher={Elsevier}
}

@article{linden2023neural,
  title={Neural networks meet hyperelasticity: A guide to enforcing physics},
  author={Linden, Lennart and Klein, Dominik K and Kalina, Karl A and Brummund, J{\"o}rg and Weeger, Oliver and K{\"a}stner, Markus},
  journal={Journal of the Mechanics and Physics of Solids},
  volume={179},
  pages={105363},
  year={2023},
  publisher={Elsevier}
}

@article{linka2021constitutive,
  title={Constitutive artificial neural networks: A fast and general approach to predictive data-driven constitutive modeling by deep learning},
  author={Linka, Kevin and Hillg{\"a}rtner, Markus and Abdolazizi, Kian P and Aydin, Roland C and Itskov, Mikhail and Cyron, Christian J},
  journal={Journal of Computational Physics},
  volume={429},
  pages={110010},
  year={2021},
  publisher={Elsevier}
}

@article{linka2023new,
  title={A new family of constitutive artificial neural networks towards automated model discovery},
  author={Linka, Kevin and Kuhl, Ellen},
  journal={Computer Methods in Applied Mechanics and Engineering},
  volume={403},
  pages={115731},
  year={2023},
  publisher={Elsevier}
}

@article{liu2024kan,
  title={Kan: Kolmogorov-arnold networks},
  author={Liu, Ziming and Wang, Yixuan and Vaidya, Sachin and Ruehle, Fabian and Halverson, James and Solja{\v{c}}i{\'c}, Marin and Hou, Thomas Y and Tegmark, Max},
  journal={arXiv preprint arXiv:2404.19756},
  year={2024}
}

@article{masi2021thermodynamics,
  title={Thermodynamics-based artificial neural networks for constitutive modeling},
  author={Masi, Filippo and Stefanou, Ioannis and Vannucci, Paolo and Maffi-Berthier, Victor},
  journal={Journal of the Mechanics and Physics of Solids},
  volume={147},
  pages={104277},
  year={2021},
  publisher={Elsevier}
}

@article{maurizi2022predicting,
  title={Predicting stress, strain and deformation fields in materials and structures with graph neural networks},
  author={Maurizi, Marco and Gao, Chao and Berto, Filippo},
  journal={Scientific reports},
  volume={12},
  number={1},
  pages={21834},
  year={2022},
  publisher={Nature Publishing Group UK London}
}

@article{narouie2026unsupervised,
  title={Unsupervised Constitutive Model Discovery from Sparse and Noisy Data},
  author={Narouie, Vahab Knauf and Urrea-Quintero, Jorge-Humberto and Cirak, Fehmi and Wessels, Henning},
  journal={Computer Methods in Applied Mechanics and Engineering},
  volume={452},
  pages={118722},
  year={2026},
  publisher={Elsevier}
}

@article{polo2024monokan,
  title={MonoKAN: Certified Monotonic Kolmogorov-Arnold Network},
  author={Polo-Molina, Alejandro and Alfaya, David and Portela, Jose},
  journal={arXiv preprint arXiv:2409.11078},
  year={2024}
}

@article{raissi2019physics,
  title={Physics-informed neural networks: A deep learning framework for solving forward and inverse problems involving nonlinear partial differential equations},
  author={Raissi, Maziar and Perdikaris, Paris and Karniadakis, George E},
  journal={Journal of Computational physics},
  volume={378},
  pages={686--707},
  year={2019},
  publisher={Elsevier}
}

@article{reddi2019convergence,
  title={On the convergence of adam and beyond},
  author={Reddi, Sashank J and Kale, Satyen and Kumar, Sanjiv},
  journal={arXiv preprint arXiv:1904.09237},
  year={2019}
}

@article{rosenkranz2024viscoelasticty,
  title={Viscoelasticty with physics-augmented neural networks: model formulation and training methods without prescribed internal variables},
  author={Rosenkranz, Max and Kalina, Karl A and Brummund, J{\"o}rg and Sun, WaiChing and K{\"a}stner, Markus},
  journal={Computational Mechanics},
  volume={74},
  number={6},
  pages={1279--1301},
  year={2024},
  publisher={Springer}
}

@article{samadi2024smooth,
  title={Smooth Kolmogorov Arnold networks enabling structural knowledge representation},
  author={Samadi, Moein E and M{\"u}ller, Younes and Schuppert, Andreas},
  journal={arXiv preprint arXiv:2405.11318},
  year={2024}
}

@inproceedings{smith2017cyclical,
  title={Cyclical learning rates for training neural networks},
  author={Smith, Leslie N},
  booktitle={2017 IEEE winter conference on applications of computer vision (WACV)},
  pages={464--472},
  year={2017},
  organization={IEEE}
}

@article{tac2022data,
  title={Data-driven tissue mechanics with polyconvex neural ordinary differential equations},
  author={Tac, Vahidullah and Costabal, Francisco Sahli and Tepole, Adrian B},
  journal={Computer Methods in Applied Mechanics and Engineering},
  volume={398},
  pages={115248},
  year={2022},
  publisher={Elsevier}
}

@article{tacc2023data,
  title={Data-driven anisotropic finite viscoelasticity using neural ordinary differential equations},
  author={Ta{\c{c}}, Vahidullah and Rausch, Manuel K and Costabal, Francisco Sahli and Tepole, Adrian Buganza},
  journal={Computer methods in applied mechanics and engineering},
  volume={411},
  pages={116046},
  year={2023},
  publisher={Elsevier}
}

@article{tacke2025constitutive,
  title={Constitutive scientific generative agent (CSGA): Leveraging large language models for automated constitutive model discovery},
  author={Tacke, Marius and Busch, Matthias and Bali, Kartik and Abdolazizi, Kian and Linka, Kevin and Cyron, Christian and Aydin, Roland},
  journal={Machine Learning for Computational Science and Engineering},
  volume={1},
  number={1},
  pages={23},
  year={2025},
  publisher={Springer}
}

@article{thakolkaran2025can,
  title={Can kan cans? input-convex kolmogorov-arnold networks (kans) as hyperelastic constitutive artificial neural networks (cans)},
  author={Thakolkaran, Prakash and Guo, Yaqi and Saini, Shivam and Peirlinck, Mathias and Alheit, Benjamin and Kumar, Siddhant},
  journal={Computer Methods in Applied Mechanics and Engineering},
  volume={443},
  pages={118089},
  year={2025},
  publisher={Elsevier}
}

@article{versino2017data,
  title={Data driven modeling of plastic deformation},
  author={Versino, Daniele and Tonda, Alberto and Bronkhorst, Curt A},
  journal={Computer Methods in Applied Mechanics and Engineering},
  volume={318},
  pages={981--1004},
  year={2017},
  publisher={Elsevier}
}

@article{wang2025kolmogorov,
  title={Kolmogorov--Arnold-Informed neural network: A physics-informed deep learning framework for solving forward and inverse problems based on Kolmogorov--Arnold Networks},
  author={Wang, Yizheng and Sun, Jia and Bai, Jinshuai and Anitescu, Cosmin and Eshaghi, Mohammad Sadegh and Zhuang, Xiaoying and Rabczuk, Timon and Liu, Yinghua},
  journal={Computer Methods in Applied Mechanics and Engineering},
  volume={433},
  pages={117518},
  year={2025},
  publisher={Elsevier}
}

@article{wiesheier2026data,
  title={Data-adaptive spline-based viscoelasticity for soft solids},
  author={Wiesheier, Simon and Moreno-Mateos, Miguel Angel and Steinmann, Paul},
  journal={Computer Methods in Applied Mechanics and Engineering},
  volume={451},
  pages={118705},
  year={2026},
  publisher={Elsevier}
}

@article{you2022physics,
  title={A physics-guided neural operator learning approach to model biological tissues from digital image correlation measurements},
  author={You, Huaiqian and Zhang, Quinn and Ross, Colton J and Lee, Chung-Hao and Hsu, Ming-Chen and Yu, Yue},
  journal={Journal of Biomechanical Engineering},
  volume={144},
  number={12},
  pages={121012},
  year={2022},
  publisher={American Society of Mechanical Engineers}
}
